\declaretheorem[name=Theorem]{theorem}
\declaretheorem[name=Lemma, sibling=theorem]{lemma}
\declaretheorem[name=Definition, sibling=theorem]{definition}
\declaretheorem[name=Condition, sibling=theorem]{condition}
\newcommand{\calx}{\mathcal{X}}
\newcommand{\caln}{\mathcal{N}}
\newcommand{\calm}{\mathcal{M}}
\newcommand{\calq}{\mathcal{Q}}
\newcommand{\R}{\mathbb{R}}
\newcommand{\E}{\mathbb{E}}
\newcommand{\I}{\mathbb{I}}
\newcommand{\Var}{\mathrm{Var}}
\newcommand{\dx}{\,\mathrm{d}}
\newcommand{\sdp}{\tilde{s}}
\newcommand{\xsyn}{X^{Syn}}
\newcommand{\xreal}{X}
\newcommand{\xpred}{X^*}
\newcommand{\xpredbar}{\bar{X}^*}
\newcommand{\obs}{Z}
\newcommand{\sigmadp}{\sigma_{DP}}
\newcommand{\med}{\mathrm{MED}}
\newcommand{\medt}{\med_\theta}
\newcommand{\KL}[2]{\mathrm{KL}(#1 \, || \, #2)}
\newcommand{\convprob}{\xrightarrow{P}}
\DeclareMathOperator{\TV}{TV}
\newcommand{\invchisq}{\text{Inv-}\chi^2}
\newcommand{\ups}[1]{\bar{#1}}
\newcommand{\ds}[1]{\hat{#1}}
\newcommand{\toydatadelta}{= 2.5\cdot 10^{-7}}
\newcommand{\adultdelta}{\approx 4.7\cdot 10^{-10}}
\newcommand{\qex}{{\bar{Q}^+}}
\newcommand{\qap}{{\bar{Q}}}
\newcommand{\pap}{{\bar{p}}}
\begin{document}

\title{On Consistent Bayesian Inference from Synthetic Data}

\author{\name Ossi Räisä \email ossi.raisa@helsinki.fi\\
\name Joonas Jälkö \email joonas.jalko@helsinki.fi\\
\name Antti Honkela \email antti.honkela@helsinki.fi\\
\addr Department of Computer Science\\
University of Helsinki\\
P.O. Box 68 (Pietari Kalmin katu 5) 
\\ 00014 University of Helsinki, Finland
}

\editor{TBD}

\maketitle

\begin{abstract}
    Generating synthetic data, with or without differential privacy, has 
    attracted significant attention as a potential solution to 
    the dilemma between making data easily available, and the privacy of data 
    subjects. Several works have shown that consistency of 
    downstream analyses from
    synthetic data, including accurate uncertainty estimation, requires accounting 
    for the synthetic data generation.
    There are very few methods of doing so, most of them for frequentist
    analysis. In this paper, we study how to perform consistent Bayesian
    inference from synthetic data.  We prove that mixing 
    posterior samples obtained separately from multiple large synthetic 
    data sets converges to the posterior of the downstream analysis
    under standard regularity 
    conditions when the analyst's model is compatible with the data provider's 
    model. We also present several examples showing how the theory works in practice,
    and showing how Bayesian inference can fail when the compatibility assumption is 
    not met, or the synthetic data set is not significantly larger than the 
    original.
\end{abstract}

\begin{keywords}
  synthetic data, Bayesian inference, Bernstein-von Mises theorem, differential privacy
\end{keywords}

\section{Introduction}
Synthetic data has the potential of opening privacy-sensitive data sets for 
widespread analysis.
The idea is to train a generative model with real data, and release 
synthetic data that has been generated from the model. The synthetic data 
does not contain records from real people, and ideally it
preserves the population-level properties of the real data, 
making it useful for analysis. Privacy preservation can be guaranteed with 
\emph{differential privacy} (DP) \citep{dworkCalibratingNoiseSensitivity2006},
which offers provable protection of privacy.

The most convenient and straightforward
way for downstream analysts to analyse synthetic data is using the same method 
that would be used with real data. However, ignoring the additional 
stochasticity arising 
from the synthetic data generation will 
yield biased results and overconfident uncertainty 
estimates~\citep{raghunathanMultipleImputationStatistical2003,
wildeFoundationsBayesianLearning2021,
raisaNoiseAwareStatisticalInference2023}. 
This is especially problematic under DP, which requires adding extra noise,
which will be ignored if the synthetic data is treated like real data.
This problem creates the need for \emph{noise-aware} analyses that account 
for the synthetic data generation.

For frequentist downstream analyses, it is possible to account for the 
synthetic data generation by generating and analysing multiple synthetic
data sets~\citep{raghunathanMultipleImputationStatistical2003}. Recent work 
has extended this to DP synthetic 
data~\citep{raisaNoiseAwareStatisticalInference2023},
which allows generating multiple synthetic data sets without compromising on privacy. 
These methods reuse the analysis method for the real data, and only require 
using simple combining rules to combine the results from the analyses on each 
synthetic data set, making them simple to apply.

For Bayesian 
downstream analyses, \citet{wildeFoundationsBayesianLearning2021} have shown 
that the analyst can use additional samples of public real data to correct their 
analysis. However, their method requires targeting a generalised notion of 
the posterior~\citep{bissiriGeneralFrameworkUpdating2016}
and needs the additional public data for calibration.
\citet{ghalebikesabiMitigatingStatisticalBias2022} propose a correction using 
importance sampling to avoid the need of public data, but only prove convergence 
to a generalised posterior and do not clearly address the noise-awareness of the 
method.

The simple frequentist methods using multiple synthetic data sets were derived from 
methods in missing data imputation~\citep{rubinMultipleImputationNonresponse1987}, so
our starting point is the method of \citet{gelmanBayesianDataAnalysis2004,gelmanBayesianDataAnalysis2014}
for Bayesian inference with missing data. They
proposed inferring the downstream posterior by imputing multiple completed 
data sets, inferring the analysis posterior for each completed data set separately,
and mixing the posteriors together. We investigate whether this method is also 
applicable to synthetic data, generated with or without DP.

\subsection{Contributions}
\begin{enumerate}
    \item We study inferring the downstream analysis posterior by generating
    multiple synthetic data sets, inferring the analysis posterior for 
    each synthetic data set as if it were the real data set, and mixing the 
    posteriors together. We find two important conditions for consistent 
    Bayesian inference with this method: synthetic data sets that are larger 
    than the original one, and a notion of compatibility between the 
    data provider's and analyst's models called 
    \emph{congeniality}~\citep{mengMultipleImputationInferencesUncongenial1994}.
    \item We prove that when congeniality is met and the Bernstein--von Mises
    theorem applies, this method converges to the
    true posterior as the number of synthetic data sets and the
    size of the synthetic data sets grow. Under stronger assumptions, we prove 
    a convergence rate for this method in the synthetic 
    data set size, which we expect to match the rate that 
    usually applies in the Bernstein--von Mises 
    theorem~\citep{hippBernsteinvMisesApproximation1976}. 
    These are presented in 
    Section~\ref{sec:syn-data-bayes-inference}.
    \item We evaluate this method with two examples in 
    Sections~\ref{sec:gauss-example} and \ref{sec:logistic-regression-example}: 
    non-private univariate Gaussian mean or variance estimation, and DP
    Bayesian logistic regression. In the first example, we use the tractability 
    of the model to derive further theoretical properties of the method,
    and in both examples, we verify that the method works in practice 
    when the assumptions are met, and examine what can happen when
    they are not met. Our code is available under an open-source 
    license\footnote{\url{https://github.com/DPBayes/NAPSU-MQ-bayesian-downstream-experiments}}.
\end{enumerate}

\subsection{Related Work}

Generating synthetic data to preserve privacy was, as far as we know, 
originally proposed by \citet{liewDataDistortionProbability1985}.
\citet{rubin1993statistical} proposed accounting for the 
synthetic data generation in frequentist downstream analyses by adapting 
\emph{multiple imputation}~\citep{rubinMultipleImputationNonresponse1987},
which involves generating multiple synthetic data sets, analysing each of them,
and combining the results with so called Rubin's 
rules~\citep{raghunathanMultipleImputationStatistical2003,reiter2002satisfying}.
Recently, \citet{raisaNoiseAwareStatisticalInference2023} have shown that 
multiple imputation also works for synthetic data generated under DP 
when the data generation algorithm is \emph{noise-aware} in a certain sense.

Recently, \citet{vanbreugelSyntheticDataReal2023} have studied uncertainty 
quantification for prediction tasks when training on synthetic data. They 
propose generating multiple synthetic data sets, like the multiple imputation 
line of work, and experimentally show that aggregating predictions from the 
multiple synthetic data sets improves generalisation performance and uncertainty 
quantification. They use a similar Bayesian framework as we do to justify 
using multiple synthetic data sets, but their theoretical study of the framework 
is very light. In particular, they do not consider the effect of the synthetic 
data set size theoretically.

\citet{wildeFoundationsBayesianLearning2021} study downstream Bayesian 
inference from DP synthetic data by considering the analyst's model to be 
misspecified, and targeting a generalised notion of the 
posterior~\citep{bissiriGeneralFrameworkUpdating2016} to deal with the 
misspecification, which makes their method more difficult to apply than standard 
Bayesian inference. They also assume that the analyst has additional public data 
available to calibrate their method. 

\citet{ghalebikesabiMitigatingStatisticalBias2022} use importance sampling 
to correct for bias with DP synthetic data, and have Bayesian inference as 
an example application. 
However, they also target a generalised 
variant~\citep{bissiriGeneralFrameworkUpdating2016} of the posterior instead 
of the noise-aware posterior we target, and they do not evaluate uncertainty 
estimation, so the noise-awareness of their method is not clear.

We are not aware of any existing work adapting multiple imputation 
for Bayesian downstream analysis in the synthetic data setting. In the missing 
data setting without DP, where multiple imputation was originally 
developed~\citep{rubinMultipleImputationNonresponse1987}, 
\citet{gelmanBayesianDataAnalysis2004,gelmanBayesianDataAnalysis2014} have proposed sampling the downstream 
posterior by mixing samples of the downstream posteriors from each of the 
multiple synthetic data sets. We find that this is not sufficient in the 
synthetic data setting, and add one extra component: 
our synthetic data sets are larger than the original data set.
We compare the two cases in 
more detail in Section~\ref{sec:relation-with-imputation},
and in particular explain why large synthetic data sets are not needed 
in the missing data setting.

Noise-aware DP Bayesian inference is critical for taking into account the DP noise in
synthetic data, but only a few works address this even without synthetic data.
\citet{bernsteinDifferentiallyPrivateBayesian2018} present an inference method for simple 
exponential family models. Their approach was extended to linear 
models~\citep{bernsteinDifferentiallyPrivateBayesian2019} and 
generalised linear models~\citep{kulkarniDifferentiallyPrivateBayesian2021}.
Recently, \citet{juDataAugmentationMCMC2022} developed an MCMC sampler that 
can sample the noise-aware posterior using a noisy summary statistic.

\section{Background}\label{sec:background}
In this section, we introduce some background needed for the rest of our work.
We start by introducing Bayesian inference and the Bernstein--von Mises theorem 
in Section~\ref{sec:background-bayes-inference}, and then introduce differential 
privacy in Section~\ref{sec:differential-privacy} and noise-aware synthetic data generation in 
Section~\ref{sec:background-noise-aware-syn-data}.

\subsection{Bayesian Inference}\label{sec:background-bayes-inference}
Bayesian inference is a paradigm of statistical inference where the data 
analyst's uncertainty in a quantity $Q$ after observing data $X$ 
is represented using the posterior distribution 
$p(Q | X)$~\citep{gelmanBayesianDataAnalysis2014}. The posterior 
is given by Bayes' rule:
\begin{equation*}
    p(Q | X) = \frac{p(X | Q)p(Q)}{\int p(X | Q')p(Q')\dx Q'},
\end{equation*}
where $p(X | Q)$ is the likelihood of observing the data $X$ for a given value of 
$Q$, and $p(Q)$ is the analyst's prior of $Q$. Computing the denominator is 
typically intractable, so analysts often use numerical methods to sample 
$p(Q | X)$~\citep{gelmanBayesianDataAnalysis2014}.

It turns out that in many typical 
settings, the prior's influence on the posterior vanishes when the data set 
$X$ is large. A basic example of this is the Bernstein--von Mises 
theorem~\citep{vandervaartAsymptoticStatistics1998},
which informally states that under some regularity conditions, the posterior 
approaches a Gaussian that does not depend on the prior as the size of the 
data set increases.

A crucial component of the theorem, and also our theory, is the notion of
\emph{total variation distance} which is used to measure the difference between 
two random variables or probability distributions.
\begin{restatable}{definition}{definitiontotalvariationdistance}\label{def:total-variation-distance}
    The total variation distance between random variables (or distributions) 
    $P_1$ and $P_2$ is
    \begin{equation*}
        \TV(P_1, P_2) = \sup_A |\Pr(P_1 \in A) - \Pr(P_2 \in A)|,
    \end{equation*}
    where $A$ is any measurable set.
\end{restatable}
As a slight abuse of notation, we allow the arguments of $\TV(\cdot, \cdot)$ 
to be random variables, probability distributions, or probability 
density functions interchangeably.
We list some properties of total variation distance that we use in 
Lemma~\ref{thm:total-variation-distance-properties} in 
Appendix~\ref{sec:total-variation-distance-properties}.

Now we can state the theorem. 
\begin{restatable}[Bernstein--von Mises, \citealp{vandervaartAsymptoticStatistics1998}]{theorem}{theorembvm}\label{thm:bvm}
    Let $n$ denote the size of the data set $X_n$. 
    Under regularity conditions stated in Condition~\ref{cond:bvm} in 
    Appendix ~\ref{sec:bvm-regularity-conditions}, for true parameter value 
    $Q_0$, the posterior $\bar{Q}(X_n) \sim p(Q | X_n)$ satisfies 
    \begin{equation*}
        \TV\left(\sqrt{n}(\bar{Q}(X_n) - Q_0), \caln(\mu(X_n), \Sigma)\right)
        \convprob 0
    \end{equation*}
    as $n\to \infty$ for some $\mu(X_n)$ and $\Sigma$, that do not depend on the 
    prior, where the convergence in probability is over sampling 
    $X_n \sim p(X_n | Q_0)$.
\end{restatable}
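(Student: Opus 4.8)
The plan is to exploit the \emph{local asymptotic normality} (LAN) of the model that the regularity conditions of Condition~\ref{cond:bvm} guarantee, and to study the posterior in the rescaled local parameter $h = \sqrt{n}(Q - Q_0)$. First I would write the log-likelihood ratio between $Q_0 + h/\sqrt{n}$ and $Q_0$ and use differentiability in quadratic mean to obtain the quadratic expansion
\begin{equation*}
    \log \frac{p(X_n \mid Q_0 + h/\sqrt n)}{p(X_n \mid Q_0)} = h^\top \Delta_n - \tfrac12 h^\top I\, h + r_n(h),
\end{equation*}
where $I$ is the Fisher information at $Q_0$, the score term satisfies $\Delta_n \convdist \caln(0, I)$, and the remainder $r_n(h) \to 0$ in probability for each fixed $h$.

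Next I would rewrite the posterior of $h$ via Bayes' rule, so that its density is proportional to the prior $p(Q_0 + h/\sqrt n)$ times the above likelihood ratio. Since the prior is continuous and positive at $Q_0$, the prior factor converges to $p(Q_0)$ and cancels after normalisation; the likelihood ratio converges to the Gaussian kernel $\exp(h^\top \Delta_n - \tfrac12 h^\top I h)$, which is proportional to the density of $\caln(I^{-1}\Delta_n, I^{-1})$. This identifies the limit with $\Sigma = I^{-1}$ and $\mu(X_n) = I^{-1}\Delta_n$, neither of which depends on the prior. Thus, pointwise in $h$, the rescaled posterior density converges to a fixed Gaussian density.

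To upgrade this pointwise convergence of densities into the total variation statement, I would integrate. The key auxiliary fact is posterior \emph{consistency}: using the existence of uniformly consistent tests separating $Q_0$ from the complements of its neighbourhoods (part of Condition~\ref{cond:bvm}), one shows the posterior assigns asymptotically negligible mass outside any fixed neighbourhood of $Q_0$, and then outside a slowly growing ball $\|h\| \le M_n$. On this growing region one controls the LAN remainder uniformly and applies Scheff\'e's lemma together with dominated convergence to obtain $L^1$ convergence of the local posterior density to the Gaussian density; by Definition~\ref{def:total-variation-distance}, $L^1$ convergence of densities is exactly convergence in $\TV$, which gives the claim.

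The hard part is this last step: turning the clean pointwise (in $h$) behaviour into a global statement in $\TV$. This demands controlling the posterior tails, which is where the testing and consistency conditions do the real work, together with a uniform rather than merely pointwise handling of the LAN remainder over the growing ball $\|h\| \le M_n$. Managing the small-$h$ Gaussian approximation and the large-$h$ negligibility simultaneously, and verifying that they patch together, is the main technical obstacle. Since the result is standard, I would ultimately defer to the full argument in \citet{vandervaartAsymptoticStatistics1998}.
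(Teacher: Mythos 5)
Your proposal is correct and aligns with the paper's treatment: the paper does not prove this theorem at all but states it as a classical result cited from \citet{vandervaartAsymptoticStatistics1998}, and your sketch (LAN expansion via differentiability in quadratic mean, prior continuity and positivity at $Q_0$, consistency via the tests in Condition~\ref{cond:bvm}, and upgrading pointwise local convergence to total variation by controlling posterior tails) is a faithful outline of the standard argument in that reference, to which you ultimately defer just as the paper does.
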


\subsection{Differential Privacy and Noise-Aware Synthetic Data}\label{sec:differential-privacy}
\emph{Differential privacy} (DP)~\citep{dworkCalibratingNoiseSensitivity2006} 
quantifies the privacy loss from releasing the results of 
analysing data. The quantification is done by looking at the output distributions
of the analysis algorithm for two data sets that differ in a single 
data subject~\citep{dworkAlgorithmicFoundationsDifferential2014}:
\begin{definition}
    An algorithm $\calm$ is $(\epsilon, \delta)$-DP if 
    \begin{equation*}
        \Pr(\calm(X) \in S) \leq e^\epsilon \Pr(\calm(X') \in S) + \delta
    \end{equation*}
    for all measurable sets $S$ and all data sets $X, X'$ that differ in one 
    data subject.
\end{definition}
The choice of $\epsilon$ and $\delta$ is a matter of 
policy~\citep{dworkDifferentialPrivacySurvey2008}. 
One should set $\delta \ll \nicefrac{1}{n}$ for $n$ datapoints, as
$\delta \approx \nicefrac{1}{n}$ permits mechanisms that clearly 
violate privacy~\citep{dworkAlgorithmicFoundationsDifferential2014}.

A common primitive for making an algorithm DP is the 
\emph{Gaussian mechanism}~\citep{dworkOurDataOurselves2006}, 
which simply adds Gaussian noise to the output of a function:
\begin{definition}
    The Gaussian mechanism with noise variance $\sigma_{DP}^2$ and function 
    $f$ outputs $f(X) + \caln(0, \sigma_{DP}^2I)$ for input $X$.
\end{definition}
For a given $(\epsilon, \delta)$-bound and function $f$, the required value 
for $\sigma_{DP}^2$ can be computed tightly using the analytical Gaussian 
mechanism~\citep{balleImprovingGaussianMechanism2018}.

\subsection{Noise-Aware Private Synthetic Data}\label{sec:background-noise-aware-syn-data}
To solve the uncertainty estimation problem for frequentist analyses from 
DP synthetic data, \citet{raisaNoiseAwareStatisticalInference2023}
developed a noise-aware algorithm for generating synthetic data called 
NAPSU-MQ. NAPSU-MQ takes discrete data, summarises it with marginal queries,
releases the query values under DP with the Gaussian mechanism, and finally generates 
multiple synthetic data sets. The downstream analysis is done on each synthetic 
data set, and the results are combined using Rubin's rules for synthetic 
data~\citep{raghunathanMultipleImputationStatistical2003,rubin1993statistical},
which use the multiple analysis results to account for the extra uncertainty 
coming from the synthetic data generation. 

The synthetic data is generated by sampling the posterior predictive distribution 
\begin{equation}
    p(\xpred | \sdp) = \int p(\xpred | \theta)p(\theta | \sdp)\dx \theta,
\end{equation}
where $\theta$ is the parameters of the synthetic data generator and $\sdp$ is 
the noisy marginal query values.
The conditioning on $\sdp$ and including the Gaussian mechanism 
in the model is what makes NAPSU-MQ noise-aware, and allows Rubin's rules to accurately account
for the synthetic data generation and DP noise in the downstream analysis.

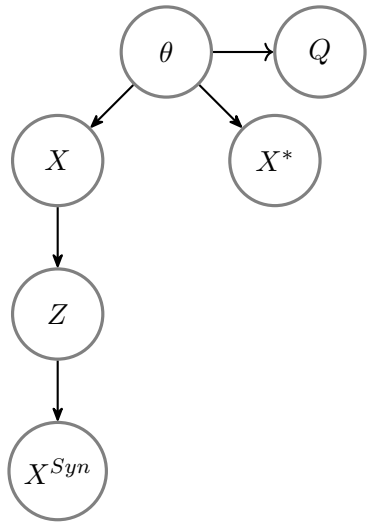
\begin{figure}
    \begin{subfigure}{0.59\textwidth}
        \begin{itemize}
            \item $\theta$: data generating model parameters
            \item $\xreal$: real data
            \item $\xpred$: hypothetical data
            \item $\obs$: observed summary of $X$ ($\obs = X$ without DP)
            \item $\xsyn$: synthetic data, $\xsyn \sim p(\xpred | \obs, I_S)$
            \item $Q$: estimated quantity in downstream analysis
            \item $I_S$: synthetic data provider's background information
            \item $I_A$: analyst's background information
        \end{itemize} 
        \vspace{10mm}
    \end{subfigure}
    \begin{subfigure}{0.4\textwidth}
        \centering
        \begin{tikzpicture}[
            rv/.style={circle, draw=black!50, very thick, minimum size=12mm},
            random edge/.style={thick, >={Stealth[round]}},
            determ edge/.style={thick},
            node distance=8mm,
        ]
            \node[rv] (theta) {$\theta$};
            \node[rv] (Q) [right=of theta] {$Q$}
                edge [<-, determ edge] (theta);
            \node[rv] (y_obs) [below left=of theta] {$\xreal$}
                edge [<-, random edge] (theta);
            \node[rv] (y_star) [below right=of theta] {$\xpred$}
                edge [<-, random edge] (theta);
            \node[rv] (sdp) [below=of y_obs] {$\obs$}
                edge [<-, random edge] (y_obs);
            \node[rv] (xsyn) [below=of sdp] {$\xsyn$}
                edge [<-, random edge] (sdp);
        \end{tikzpicture}  
    \end{subfigure}
    \caption{Left: random variables in noise-aware uncertainty estimation from 
    synthetic data. Right: a Bayesian network describing the dependencies of 
    the random variables. This network is conditional on either $I_S$ or $I_A$,
    depending on whether viewed by the data provider or the analyst.}
    \label{fig:random_variables}
\end{figure}

\section{Bayesian Inference from Synthetic Data}\label{sec:syn-data-bayes-inference}

When the downstream analysis is Bayesian,
the analyst would ultimately want to obtain the 
posterior $p(Q | \xreal, I_A)$ of some quantity $Q$ given real data $X$, where 
$I_A$ denotes the background knowledge such as priors of the 
analyst. We assume the analyst has a method to sample $p(Q | \xreal, I_A)$ if 
they had access to real data, and study what they can do when they only have 
access to synthetic data.

In the DP case, the exact posterior is unobtainable, so we assume that $X$ is only available through a noisy 
summary $\sdp$~\citep{juDataAugmentationMCMC2022,raisaNoiseAwareStatisticalInference2023}, so
the posterior is $p(Q | \sdp, I_A)$.
To unify these notations, we use $\obs$ to denote the observed values, 
so $\obs = \xreal$ in the non-DP case, $\obs = \sdp$ in the DP case, and the 
posterior of interest is $p(Q | \obs, I_A)$. We summarise these random 
variables and their dependencies in Figure~\ref{fig:random_variables}.

In order to introduce the synthetic data into the posterior of interest, we can decompose the posterior as
\begin{equation}
    p(Q | \obs, I_A) = \int p(Q | \obs, \xpred, I_A)p(\xpred | \obs, I_A)\dx \xpred,
    \label{eq:analyst-posterior-decomposition}
\end{equation}
where we abuse notation by using $\xpred$ as the variable to integrate over,
so inside the integral $\xpred$ is not a random variable.
The decomposition in \eqref{eq:analyst-posterior-decomposition} means that we can sample from
$p(Q | \obs, I_A)$ by first sampling the synthetic data from the posterior 
predictive $\xsyn \sim p(\xpred | \obs, I_A)$, and then sampling 
$Q \sim p(Q | \obs, \xpred = \xsyn, I_A)$. The posterior predictive
decomposes as $p(\xpred | \obs, I_A) = \int p(\xpred | \theta, I_A)p(\theta | \obs, I_A)\dx \theta$,
where $\theta$ is the data generating model parameters,
so it is sampled by sampling $\theta \sim p(\theta | \obs, I_A)$, and then sampling 
$\xsyn \sim p(\xpred | \theta, I_A)$.

Note that $\xpred$ and $\xsyn$ are not the same random variable.
$\xpred$ represents a hypothetical real data set that could be obtained if more 
data was collected, as seen in Figure~\ref{fig:random_variables}, and it is not 
the synthetic data set. The synthetic data set $\xsyn$ is a sample from the 
conditional distribution of $\xpred$ given $\obs$. For this reason, 
$p(Q | \obs, \xpred, I_A) \neq p(Q | \obs, I_A)$. To make our notation less 
cluttered, we write $p(\,\cdot\, | \xpred, \,\cdot\,)$ in place of 
$p(\,\cdot\, | \xpred = \xsyn, \,\cdot\,)$ in probabilities when the meaning is 
clear.

There are still two major issues with the decomposition in 
\eqref{eq:analyst-posterior-decomposition}:
\begin{enumerate}
    \item Sampling $p(Q | \obs, \xpred, I_A)$ requires access to $\obs$, which 
    defeats the purpose of using synthetic data.
    \item $\xpred$ needs to be sampled conditionally on the analyst's background 
    information $I_A$, while the synthetic data provider could have different 
    background information $I_S$.
\end{enumerate}
To solve the first issue, in Section~\ref{sec:convergence} we show that if we 
replace $p(Q | \obs, \xpred, I_A)$ inside the integral of \eqref{eq:analyst-posterior-decomposition} with  $p(Q | \xpred, I_A)$,
the resulting distribution converges to the desired posterior,
\begin{equation}
    \int p(Q | \xpred, I_A)p(\xpred | \obs, I_A)\dx \xpred
    \to p(Q | \obs, I_A)
    \label{eq:approx-analyst-posterior-decomposition}
\end{equation}
in total variation distance as the size of $\xpred$ grows. 
It should be noted that many synthetic data sets $\xsyn \sim p(\xpred | \obs, I_A)$ 
will be needed to account for the integral over $\xpred$.

The second issue is known as \emph{congeniality} in the multiple imputation 
literature~\citep{mengMultipleImputationInferencesUncongenial1994,xieDissectingMultipleImputation2016}.
We look at congeniality in the context of Bayesian inference from synthetic data 
in Section~\ref{sec:congeniality}, and find that we can obtain $p(Q | \obs, I_A)$
under appropriate assumptions on the relationship between $I_A$ and $I_S$.

Exactly sampling the LHS of \eqref{eq:approx-analyst-posterior-decomposition}
requires generating a synthetic data set for each sample of $p(Q | \obs, I_A)$, 
which is not practical.
However, we can perform a Monte-Carlo approximation for $p(Q | \obs, I_A)$ by generating $m$ 
synthetic data sets  $\xsyn_1, \dotsc, \xsyn_m \sim p(\xpred | \obs, I_A)$, drawing multiple
samples from each of the $p(Q | \xpred = \xsyn_i, I_A)$,
and mixing these samples, which allows us to obtain more than one sample of 
$p(Q | \obs, I_A)$ per synthetic data set. We look at some properties of this in 
Section~\ref{sec:finite-m}, but we use the integral form in 
\eqref{eq:approx-analyst-posterior-decomposition} in the rest of our theory.

\subsection{Congeniality}\label{sec:congeniality}
In the decomposition \eqref{eq:analyst-posterior-decomposition} of the analyst's 
posterior, $\xpred$ should be sampled conditionally on the 
analyst's background information $I_A$, while in reality the synthetic data 
provider could have different background information $I_S$.

A similar distinction has been studied in the
context of missing data~\citep{mengMultipleImputationInferencesUncongenial1994,xieDissectingMultipleImputation2016},
where the imputer of missing data has a similar role as the 
synthetic data provider. \citet{mengMultipleImputationInferencesUncongenial1994} 
found that Rubin's rules implicitly assume that the 
probability models of both parties are compatible in a certain sense, which 
\citet{mengMultipleImputationInferencesUncongenial1994} 
defined as \emph{congeniality}.

As our examples with Gaussian distributions in Section~\ref{sec:gauss-example} 
show, some notion of congeniality is also 
required in our setting. However, because we study synthetic data instead of 
imputation, and Bayesian instead of frequentist downstream analysis, we need a different 
formal definition. As the analyst only makes inferences on $Q$, it suffices that 
both the analyst and synthetic data provider make the same inferences of $Q$:
\begin{definition}
    The background information sets $I_S$ and $I_A$ are congenial for 
    observation $\obs$ if 
    \begin{equation}
        p(Q | \xpred, I_S) = p(Q | \xpred, I_A)\label{eq:congeniality-x-pred}
    \end{equation}
    for all $\xpred$ and 
    \begin{equation}
        p(Q | \obs, I_S) = p(Q | \obs, I_A).\label{eq:congeniality-obs}
    \end{equation}
\end{definition}
In the non-DP case, \eqref{eq:congeniality-obs} is redundant, as it is implied by 
\eqref{eq:congeniality-x-pred}, but in the DP case, both are needed, as the parties 
may draw different conclusions on $\xreal$ given $\obs = \sdp$.

Combining congeniality and \eqref{eq:approx-analyst-posterior-decomposition}, 
\begin{equation}
    \begin{split}
        \int p(Q | \xpred, I_A)p(\xpred | \obs, I_S)\dx \xpred
        &= \int p(Q | \xpred, I_S)p(\xpred | \obs, I_S)\dx \xpred
        \\&\to p(Q | \obs, I_S) = p(Q | \obs, I_A),
    \end{split}
\end{equation}
where the convergence is in total variation distance as the size of $\xpred$
grows.
In the following, we assume congeniality, and drop $I_A$ and $I_S$ from our notation.

\subsection{Consistency Proof}\label{sec:convergence}
To recap, we want to prove that the posterior from synthetic data,
\begin{equation}
    \pap_n(Q) = \int p(Q | \xpred_n)p(\xpred_n | \obs) \dx \xpred_n, \label{eq:approx-analyst-definition}
\end{equation}
converges in total variation distance to $p(Q | \obs)$ as the size $n$ of 
$\xpred_n$ grows.
We prove this in Theorem~\ref{thm:posterior-approximation}, which
requires that 
both $p(Q | \obs, \xpred_n)$ and $p(Q | \xpred_n)$ approach the same distribution 
as $n$ grows. We formally state this in Condition~\ref{cond:prior-doesnt-matter}.
In Lemma~\ref{thm:bvm-implies-condition}, we show 
that Condition~\ref{cond:prior-doesnt-matter} is a consequence of the 
Bernstein--von Mises theorem (Theorem~\ref{thm:bvm}) under some additional 
assumptions, so we expect it to hold in typical settings.

To make the notation more compact, let $\qex_n \sim p(Q | \obs, \xpred_n)$, 
and let $\qap_n \sim p(Q | \xpred_n)$.

\begin{restatable}{condition}{conditionpriordoesntmatter}\label{cond:prior-doesnt-matter}
    For the observed $Z$ and all $Q$, there exist distributions $D_n$ such that 
    \begin{equation*}
        \TV\left(\qex_n, D_n\right)
        \convprob 0
        \quad \mathrm{and} \quad\,
        \TV\left(\qap_n, D_n\right)
        \convprob 0
    \end{equation*}
    as $n \to \infty$, where the convergence in probability is over sampling 
    $\xpred_n \sim p(\xpred_n | \obs, Q)$.
\end{restatable}

Theorem~\ref{thm:bvm} implies Condition~\ref{cond:prior-doesnt-matter} with some additional 
assumptions:

\begin{restatable}{lemma}{lemmabvmimpliescondition}\label{thm:bvm-implies-condition}
    If the assumptions of Theorem~\ref{thm:bvm} (stated in Condition~\ref{cond:bvm}) 
    hold for the downstream analysis for all $Q_0$, 
    and the following assumptions:
    \begin{enumerate}
        \item[(1)] $\obs$ and $\xpred$ are conditionally independent given $Q$; and
        \item[(2)] $p(\obs | Q) > 0$ for all $Q$,
    \end{enumerate}
    hold, then Condition~\ref{cond:prior-doesnt-matter} holds.
\end{restatable}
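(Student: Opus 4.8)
The plan is to recognize that $\qap_n$ and $\qex_n$ are posteriors built from the \emph{same} likelihood $p(\xpred_n | Q)$ but with two different priors, and then to exploit the fact that the limiting Gaussian in Theorem~\ref{thm:bvm} does not depend on the prior. First I would use the conditional independence assumption~(1) to factor $p(\obs, \xpred_n | Q) = p(\obs | Q)\, p(\xpred_n | Q)$, so that Bayes' rule gives
\begin{equation*}
    p(Q | \obs, \xpred_n) \propto p(\xpred_n | Q)\, p(\obs | Q)\, p(Q) \propto p(\xpred_n | Q)\, p(Q | \obs),
\end{equation*}
with the proportionality taken in $Q$ for fixed $\obs$ and $\xpred_n$. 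Hence $\qex_n$ is precisely the posterior from the downstream likelihood under the modified prior $p(Q | \obs) \propto p(\obs | Q) p(Q)$, while $\qap_n$ is the posterior under the original prior $p(Q)$. Assumption~(2), that $p(\obs | Q) > 0$, ensures $p(Q | \obs)$ is a proper prior with the same support as $p(Q)$.

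Next I would apply Theorem~\ref{thm:bvm} to each posterior with true value $Q_0 = Q$, which is legitimate because the theorem's hypotheses are assumed for all $Q_0$. Conditional independence also gives $p(\xpred_n | \obs, Q) = p(\xpred_n | Q)$, so the sampling in Condition~\ref{cond:prior-doesnt-matter} matches the sampling $\xpred_n \sim p(\xpred_n | Q)$ demanded by the theorem. Applying the theorem to $\qap_n$ (prior $p(Q)$) gives
\begin{equation*}
    \TV\left(\sqrt{n}(\qap_n - Q),\, \caln(\mu(\xpred_n), \Sigma)\right) \convprob 0,
\end{equation*}
and applying it to $\qex_n$ (prior $p(Q | \obs)$) gives the same statement with identical $\mu(\xpred_n)$ and $\Sigma$, since these are prior-independent. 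I would then take $D_n$ to be the law satisfying $\sqrt{n}(D_n - Q) = \caln(\mu(\xpred_n), \Sigma)$, i.e.\ $D_n = \caln(Q + \mu(\xpred_n)/\sqrt{n},\, \Sigma/n)$. Because total variation distance is invariant under the common bijection $q \mapsto \sqrt{n}(q - Q)$, the two displays become $\TV(\qap_n, D_n) \convprob 0$ and $\TV(\qex_n, D_n) \convprob 0$, which is exactly Condition~\ref{cond:prior-doesnt-matter}.

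The main obstacle is checking that the modified prior $p(Q | \obs)$ actually satisfies the regularity conditions (Condition~\ref{cond:bvm}) needed for Theorem~\ref{thm:bvm}, since those conditions are assumed only for the analyst's model with its own prior. Every condition that constrains only the likelihood $p(\xpred_n | Q)$ transfers for free, as the likelihood is unchanged; only the prior-specific conditions, typically positivity and continuity of the prior density at $Q_0 = Q$, require attention. Positivity is immediate from assumption~(2), and continuity at $Q_0$ would follow from continuity of $p(\obs | Q)$ there, which holds under mild smoothness of the observation model. I would also remark that $p(Q | \obs)$ is normalizable, since $\int p(\obs | Q) p(Q) \dx Q = p(\obs)$ is finite, so it is a legitimate prior for the theorem.
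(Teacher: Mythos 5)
Your proposal is correct and follows essentially the same route as the paper's own proof: reinterpret $p(Q \mid \obs, \xpred_n)$ as a posterior under the modified prior $p(Q \mid \obs) \propto p(\obs \mid Q)p(Q)$, invoke the prior-independence of the Bernstein--von Mises limit, use conditional independence to identify $p(\xpred_n \mid \obs, Q)$ with $p(\xpred_n \mid Q)$, and transfer the result through the bijection $q \mapsto \sqrt{n}(q - Q)$ to define $D_n$. If anything, your closing remark on verifying continuity of $p(\obs \mid Q)$ for the modified prior's regularity is slightly more careful than the paper, which only asserts positivity of $p(Q \mid \obs)$ from Condition~\ref{cond:bvm}~(5) and Assumption~(2).
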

\begin{proof}
    The full proof is in Appendix~\ref{sec:missing-proofs-consistency}.
    Proof idea: when $\obs$ and $\xpred$ are conditionally independent given $Q$,
    \begin{equation*}
        p(Q | \obs, \xpred) \propto p(\xpred | Q)p(\obs | Q)p(Q)
    \end{equation*}
    so $p(Q | \obs, \xpred)$ can be equivalently seen as the result of Bayesian inference with 
    observed data $\xpred$ and prior $p(Q | \obs)$. As the only difference to $p(Q | \xpred)$ is 
    the prior, the Bernstein--von Mises theorem implies that both $p(Q | \obs, \xpred)$
    and $p(Q | \xpred)$ converge in total variation distance to the same distribution. 
\end{proof}
Assumption~(1) of Lemma \ref{thm:bvm-implies-condition} will hold if the downstream analysis 
treats its input data as an i.i.d. sample from some distribution. 
Assumption~(2) holds when the likelihood is always 
positive, and in the DP case when the density of the privacy mechanism is positive everywhere, which is 
the case for common DP mechanisms like the Gaussian and Laplace 
mechanisms~\citep{dworkAlgorithmicFoundationsDifferential2014}.

Next is the main theorem of this work: 
\eqref{eq:approx-analyst-posterior-decomposition} holds under 
Condition~\ref{cond:prior-doesnt-matter}.
\begin{restatable}{theorem}{theoremposteriorapproximation}\label{thm:posterior-approximation}
    Under congeniality and Condition~\ref{cond:prior-doesnt-matter}, 
    $\TV\left(p(Q | \obs), \pap_n(Q)\right) \to 0$
    as $n \to \infty$.
\end{restatable}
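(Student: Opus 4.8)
The plan is to compare $\pap_n(Q)$ against the \emph{exact} decomposition of the target. Setting the size index aside for a moment, \eqref{eq:analyst-posterior-decomposition} gives the identity $p(Q | \obs) = \int p(Q | \obs, \xpred_n) p(\xpred_n | \obs) \dx \xpred_n$, while $\pap_n(Q)$ in \eqref{eq:approx-analyst-definition} is $\int p(Q | \xpred_n) p(\xpred_n | \obs) \dx \xpred_n$. The key observation is that both distributions are mixtures of conditional laws against the \emph{same} mixing measure $p(\xpred_n | \obs) \dx \xpred_n$, the only difference being $p(Q | \obs, \xpred_n)$ versus $p(Q | \xpred_n)$ inside the integral. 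My first step is therefore to invoke joint convexity of total variation (one of the properties collected in Lemma~\ref{thm:total-variation-distance-properties}), giving
\begin{equation*}
    \TV\bigl(p(Q | \obs), \pap_n(Q)\bigr)
    \leq \int \TV\bigl(p(Q | \obs, \xpred_n), p(Q | \xpred_n)\bigr) \, p(\xpred_n | \obs) \dx \xpred_n.
\end{equation*}
This reduces the claim to showing that the right-hand integral vanishes as $n \to \infty$.

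Writing $g_n(\xpred_n) = \TV\bigl(p(Q | \obs, \xpred_n), p(Q | \xpred_n)\bigr) = \TV(\qex_n, \qap_n)$, I recognise this bound as the expectation $\E_{\xpred_n \sim p(\xpred_n | \obs)}[g_n(\xpred_n)]$. To control $g_n$ I would bring in the distributions $D_n$ furnished by Condition~\ref{cond:prior-doesnt-matter} and split via the triangle inequality,
\begin{equation*}
    g_n(\xpred_n) = \TV(\qex_n, \qap_n) \leq \TV(\qex_n, D_n) + \TV(\qap_n, D_n),
\end{equation*}
so that Condition~\ref{cond:prior-doesnt-matter} yields $g_n(\xpred_n) \convprob 0$ under $\xpred_n \sim p(\xpred_n | \obs, Q)$ for every value of $Q$.

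The final and most delicate step is to upgrade this convergence in probability into convergence of the integral against the marginal predictive $p(\xpred_n | \obs)$. Two features make this possible. First, $g_n$ takes values in $[0,1]$, so the bounded convergence theorem turns $g_n \convprob 0$ under $p(\xpred_n | \obs, Q)$ into $\E_{\xpred_n \sim p(\xpred_n | \obs, Q)}[g_n] \to 0$ for each fixed $Q$. Second, the predictive factorises as the mixture $p(\xpred_n | \obs) = \int p(\xpred_n | \obs, Q) p(Q | \obs) \dx Q$, so that
\begin{equation*}
    \E_{\xpred_n \sim p(\xpred_n | \obs)}[g_n]
    = \int \E_{\xpred_n \sim p(\xpred_n | \obs, Q)}[g_n] \, p(Q | \obs) \dx Q,
\end{equation*}
and a second application of bounded convergence (the inner expectations again lie in $[0,1]$) drives the whole quantity to $0$. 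I expect this passage from the per-$Q$ convergence granted by Condition~\ref{cond:prior-doesnt-matter} to convergence under the marginal predictive to be the main obstacle, since it is precisely here that the boundedness of total variation and the mixture structure of the posterior predictive are indispensable, and where one must be careful that the sampling law in Condition~\ref{cond:prior-doesnt-matter} (with $\xpred_n$ drawn conditionally on $Q$) differs from the marginal measure appearing in $\pap_n(Q)$. Congeniality enters only through the unified notation adopted after Section~\ref{sec:congeniality}, which lets the provider's mixing measure $p(\xpred_n | \obs)$ and the analyst's conditionals coincide so that the exact decomposition above is valid.
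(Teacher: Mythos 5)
Your proposal is correct and follows essentially the same route as the paper: the paper's Lemma~\ref{thm:downstream-approximation} performs your per-$Q$ triangle-inequality step and the transfer from $p(\xpred_n \mid \obs, Q)$ to $p(\xpred_n \mid \obs)$ (via dominated convergence on the probabilities rather than on the expectations), and its Lemma~\ref{thm:downstream-approx-to-posterior-approx} is an inlined good-set/bad-set proof of exactly what you obtain more compactly from joint convexity of $\TV$ plus bounded convergence. One small citation nit: joint convexity is not among the properties listed in Lemma~\ref{thm:total-variation-distance-properties}, though it follows in one line from the $L_1$ formula (property~1), so your argument stands with that substitution made explicit.
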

\begin{proof}
    The full proof is in Appendix~\ref{sec:missing-proofs-consistency}.
    Proof idea: the proof consists of three steps. The first two are in 
    Lemma~\ref{thm:downstream-approximation} and the third is in 
    Lemma~\ref{thm:downstream-approx-to-posterior-approx} in the Appendix.
    The first step is showing that $\TV(\qap_n, \qex_n) \convprob 0$
    when $\xpred_n \sim p(\xpred_n | \obs, Q)$ for fixed $\obs$ and $Q$. This is a simple 
    consequence of the triangle inequality and Condition~\ref{cond:prior-doesnt-matter},
    as total variation distance is a metric. In the second step, we show that 
    $\TV(\qap_n, \qex_n) \convprob 0$ also holds when 
    $\xpred_n \sim p(\xpred_n | \obs)$. In the final step, we show that this 
    implies the claim.
\end{proof}

\subsection{Convergence Rate}
Under a stronger regularity condition, we can get a convergence rate for 
Theorem~\ref{thm:posterior-approximation}. The regularity condition depends on 
uniform integrability:
\begin{restatable}{definition}{definitionuniformintegrability}\label{def:uniform-integrability}
    A sequence of random variables $X_n$ is uniformly integrable if 
    \begin{equation*}
        \lim_{M\to \infty} \sup_n \E(|X_n|\I_{|X_n| > M}) = 0.
    \end{equation*}
\end{restatable}
Now we can state the regularity condition for a convergence rate $O(R_n)$:
\begin{restatable}{condition}{conditionconvergencerate}\label{cond:convergence-rate}
    For the observed $Z$, there exist distributions $D_n$ such that for a sequence 
    $R_1, R_2, \dots > 0$,
    $R_n \rightarrow 0$ as $n \rightarrow \infty$,
    \begin{equation*}
        \frac{1}{R_n} \TV\left(\qex_n, D_n\right)
        \quad \mathrm{and} \quad\,
        \frac{1}{R_n} \TV\left(\qap_n, D_n\right)
    \end{equation*}
    are uniformly integrable when $\xpred_n \sim p(\xpred_n | \obs)$.
\end{restatable}
Note that $\xpred_n \sim p(\xpred_n | \obs)$ conditions on $\obs$, not 
$Q$ and $Z$ like in Condition~\ref{cond:prior-doesnt-matter}. 

Condition~\ref{cond:convergence-rate} is not a standard regularity condition, so 
it is not clear what settings it applies in. To ensure that it at least applies 
in some setting, we show that it is met in univariate Gaussian mean estimation
with $R_n = \frac{1}{\sqrt{n}}$. This is the rate that commonly in the Bernstein--von 
Mises theorem~\citep{hippBernsteinvMisesApproximation1976}.
\begin{restatable}{theorem}{theoremgaussianpriordoesntmatterconvergencerate}\label{thm:gaussian-prior-doesnt-matter-convergence-rate}
    When the up- and downstream models are Gaussian mean 
    estimations with known variance, and $D_n = \caln(\bar{X}, n^{-1}\sigma_k^2)$,
    \begin{equation*}
        \sqrt{n}\TV\big(\qex_n, D_n\big)
        \quad \mathrm{and}\quad\,
        \sqrt{n}\TV\big(\qap_n, D_n\big)
    \end{equation*}
    are uniformly integrable when 
    $\xpred_n \sim p(\xpred_n | \xreal)$.
\end{restatable}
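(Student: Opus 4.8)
The plan is to exploit conjugacy. With a Gaussian (conjugate) prior $\caln(\mu_0, \tau_0^2)$ on the mean and known variance $\sigma_k^2$ --- and using congeniality to treat the up- and downstream models with a common prior --- every posterior here is Gaussian, so I can reduce the whole claim to explicit arithmetic with Gaussian means and variances. Write $N$ and $\bar X_0$ for the size and sample mean of the real data $\xreal$, and $\xpredbar_n$ for the sample mean of $\xpred_n$; this $\xpredbar_n$ is the center of $D_n$ (the statement writes $\bar X$ for this synthetic-sample mean, which must be the case for the scaled distances to stay bounded, since $D_n$ is the common Bernstein--von Mises limit of both $\qap_n$ and $\qex_n$). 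Since $\obs=\xreal$ and $\xpred_n$ are conditionally independent given $Q$, $\qex_n$ is exactly the posterior from the pooled data $\xreal\cup\xpred_n$, so conjugacy gives
\[
\qap_n \sim \caln(m_n^{\mathrm{ap}}, v_n^{\mathrm{ap}}), \qquad \qex_n \sim \caln(m_n^{\mathrm{ex}}, v_n^{\mathrm{ex}}),
\]
\[
v_n^{\mathrm{ap}} = \Big(\tfrac{1}{\tau_0^2} + \tfrac{n}{\sigma_k^2}\Big)^{-1}, \qquad v_n^{\mathrm{ex}} = \Big(\tfrac{1}{\tau_0^2} + \tfrac{N+n}{\sigma_k^2}\Big)^{-1}.
\]
Crucially, under $\xpred_n \sim p(\xpred_n \mid \xreal)$ the only random quantity is $\xpredbar_n$, as $N,\bar X_0,\mu_0,\tau_0,\sigma_k$ are fixed.

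Next I would bound the total variation distance between two Gaussians via Pinsker's inequality together with the closed form of the Gaussian KL divergence, $\KL{\caln(m_1,v_1)}{\caln(m_2,v_2)} = \tfrac12\big(r-1-\log r + (m_1-m_2)^2/v_2\big)$ with $r=v_1/v_2$. Against the reference variance $v_2 = n^{-1}\sigma_k^2$, a short computation shows $v_n^{\mathrm{ap}}/v_2$ and $v_n^{\mathrm{ex}}/v_2$ equal $1+O(1/n)$ deterministically, so the $r-1-\log r$ terms are $O(1/n^2)$. The mean gaps are
\[
m_n^{\mathrm{ap}} - \xpredbar_n = \frac{\sigma_k^2(\mu_0 - \xpredbar_n)}{\sigma_k^2 + n\tau_0^2}, \qquad m_n^{\mathrm{ex}} - \xpredbar_n = \frac{\tfrac{1}{\tau_0^2}(\mu_0 - \xpredbar_n) + \tfrac{N}{\sigma_k^2}(\bar{X}_0 - \xpredbar_n)}{\tfrac{1}{\tau_0^2} + \tfrac{N+n}{\sigma_k^2}},
\]
both of order $1/n$, so dividing their squares by $v_2=n^{-1}\sigma_k^2$ leaves a quantity of order $1/n$. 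Collecting terms and using $\sqrt{a+b}\le\sqrt a+\sqrt b$, I would obtain, for suitable constants and all $n\ge1$,
\[
\sqrt{n}\,\TV(\qap_n, D_n) \le a\,|\mu_0 - \xpredbar_n| + b, \qquad \sqrt{n}\,\TV(\qex_n, D_n) \le a'\,|\mu_0 - \xpredbar_n| + c'\,|\bar{X}_0 - \xpredbar_n| + b'.
\]

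The final step is uniform integrability. Under $\xpred_n \sim p(\xpred_n\mid\xreal)$ the synthetic points are drawn from $\caln(\mu,\sigma_k^2)$ with $\mu\sim p(\mu\mid\xreal)=\caln(m_X,v_X)$, so marginalizing the mean gives $\xpredbar_n \mid \xreal \sim \caln(m_X,\, v_X + n^{-1}\sigma_k^2)$. Hence $\Var(\xpredbar_n\mid\xreal)\le v_X+\sigma_k^2$ uniformly in $n$, and $\xpredbar_n$ has a uniformly bounded second moment. Since the right-hand sides above are affine in $|\mu_0-\xpredbar_n|$ and $|\bar X_0-\xpredbar_n|$, this yields $\sup_n \E\big[(\sqrt n\,\TV(\qap_n,D_n))^2\big]<\infty$ and likewise for $\qex_n$. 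A nonnegative family with uniformly bounded $L^2$ norm is uniformly integrable (de la Vallée-Poussin, with $\phi(x)=x^2$), and a family dominated by a uniformly integrable one is itself uniformly integrable, which is exactly the claim.

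The routine parts are the conjugate posterior formulas and the Gaussian KL bound. I expect the main obstacle to be the passage from the pointwise $O(1/\sqrt n)$ decay of $\TV$ to genuine uniform integrability: it is not enough that $\sqrt n\,\TV$ is eventually bounded, one must control the tails \emph{uniformly} in $n$. The structural fact that makes this go through is that the posterior-predictive variance of $\xpredbar_n$ is bounded by $v_X+\sigma_k^2$ for every $n$, so the randomness driving the bound never blows up; once each scaled distance is written as an affine function of a single Gaussian quantity with $n$-uniform variance, the $L^2$ criterion closes the argument.
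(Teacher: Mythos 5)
Your proposal is correct, and it follows the paper's overall strategy --- Pinsker's inequality plus the closed-form Gaussian KL divergence, reducing everything to the behaviour of the synthetic sample mean --- but it resolves the crucial uniform-integrability step by a genuinely different and markedly simpler argument. The paper, after isolating the same deterministic variance-ratio terms and the same affine-in-$|\xpredbar_n|$ mean term, verifies the definition of uniform integrability for $|\xpredbar_n|$ \emph{directly}: it slices the tail expectation into unit intervals, applies Gaussian tail bounds, sums the resulting geometric series, and invokes dominated convergence --- several pages of computation. You instead observe that under $\xpred_n \sim p(\xpred_n \mid \xreal)$ the synthetic sample mean is exactly Gaussian, $\xpredbar_n \mid \xreal \sim \caln(m_X, v_X + n^{-1}\sigma_k^2)$, so its second moment is bounded uniformly in $n$; since your bounds on $\sqrt{n}\TV(\qap_n, D_n)$ and $\sqrt{n}\TV(\qex_n, D_n)$ are affine in $|\mu_0 - \xpredbar_n|$ and $|\bar{X}_0 - \xpredbar_n|$, the families are bounded in $L^2$, and the elementary estimate $\E\big(|Y|\I_{|Y|>M}\big) \leq M^{-1}\E(Y^2)$ (de la Vallée-Poussin with $\phi(x)=x^2$) closes the argument. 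This $L^2$ criterion buys a proof that is shorter, less error-prone, and more readily generalisable to other conjugate settings where the relevant statistic has $n$-uniformly bounded second moment, whereas the paper's tail-sum computation is self-contained but specific to the Gaussian tail. A second, minor difference: for $\qex_n$ the paper reuses its $\qap_n$ case by reinterpreting $p(Q \mid \xreal, \xpred_n)$ as a posterior with the Gaussian prior $p(Q \mid \xreal)$ and modified hyperparameters, while you compute the pooled-data posterior explicitly; these are equivalent. Your reading that the $\bar{X}$ in $D_n$ is the synthetic-data sample mean is also the correct one, matching the paper's usage in its proof.
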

\begin{proof}
    The idea of the proof is to use Pinsker's inequality 
    (Lemma~\ref{thm:total-variation-distance-properties}) to upper bound 
    the total variation distance with KL divergence, and prove the 
    required uniform integrability for the KL divergence upper bound.
    This is a fairly lengthy exercise in upper bounding and computing 
    the limit in the definition of uniform integrability for the various 
    terms that appear in the KL-divergence between the two Gaussians in question.
    We defer the full proof to Appendix~\ref{sec:missing-proofs-convergence-rate}.
\end{proof}

Condition~\ref{cond:convergence-rate} implies an $O(R_n)$ convergence rate:
\begin{restatable}{theorem}{theoremposteriorapproximationconvergencerate}\label{thm:posterior-approximation-convergence-rate}
    Under congeniality and Condition~\ref{cond:convergence-rate},
    $\TV\left(p(Q | \obs), \pap_n(Q)\right) = O(R_n)$.
\end{restatable}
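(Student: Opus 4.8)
The plan is to reprove Theorem~\ref{thm:posterior-approximation} while tracking constants, so that the uniform integrability of Condition~\ref{cond:convergence-rate} upgrades the $o(1)$ convergence into a quantitative $O(R_n)$ bound. First, using congeniality exactly as in Section~\ref{sec:congeniality}, I would write both the target and the synthetic-data posterior as mixtures over a \emph{common} mixing measure in the provider's model,
\[
p(Q \mid \obs) = \int p(Q \mid \obs, \xpred_n)\, p(\xpred_n \mid \obs)\dx \xpred_n,
\qquad
\pap_n(Q) = \int p(Q \mid \xpred_n)\, p(\xpred_n \mid \obs)\dx \xpred_n,
\]
so that the mixture components are distributed as $\qex_n$ and $\qap_n$, both mixed against $p(\xpred_n \mid \obs)$. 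This shared-measure structure is the crucial simplification compared with the consistency proof: since Condition~\ref{cond:convergence-rate} already samples $\xpred_n \sim p(\xpred_n \mid \obs)$ rather than conditioning on $Q$ and $Z$ as in Condition~\ref{cond:prior-doesnt-matter}, I can work directly at the level of this mixing measure and skip the intermediate conditioning step of Lemma~\ref{thm:downstream-approximation}.

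Next I would apply the joint convexity of total variation distance — that the $\TV$ between two mixtures over a common mixing measure is bounded by the average $\TV$ of the components (Lemma~\ref{thm:total-variation-distance-properties}) — to obtain
\[
\TV\big(p(Q \mid \obs),\, \pap_n(Q)\big)
\le \E_{\xpred_n \sim p(\xpred_n \mid \obs)}\big[\TV(\qex_n, \qap_n)\big].
\]
Inserting the intermediate $D_n$ and using the triangle inequality for the metric $\TV$ gives $\TV(\qex_n, \qap_n) \le \TV(\qex_n, D_n) + \TV(\qap_n, D_n)$, so after taking expectations it remains to bound $\E[\TV(\qex_n, D_n)]$ and $\E[\TV(\qap_n, D_n)]$ by $O(R_n)$.

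The key step is the conversion of uniform integrability into a rate. A uniformly integrable sequence has uniformly bounded $L^1$ norm: writing $Y_n = R_n^{-1}\TV(\qex_n, D_n) \ge 0$ and choosing $M$ with $\sup_n \E\big(Y_n \I_{Y_n > M}\big) \le 1$, the split $\E[Y_n] = \E(Y_n \I_{Y_n \le M}) + \E(Y_n \I_{Y_n > M}) \le M + 1$ shows $\sup_n \E[Y_n] \le M + 1 < \infty$; write $C_1$ for this bound and $C_2$ for the analogous bound on the $\qap_n$ term. Hence $\E[\TV(\qex_n, D_n)] \le C_1 R_n$ and $\E[\TV(\qap_n, D_n)] \le C_2 R_n$, and combining with the two displays above yields
\[
\TV\big(p(Q \mid \obs),\, \pap_n(Q)\big) \le (C_1 + C_2)\, R_n = O(R_n).
\]
I expect the only genuine subtlety to be this last conversion: pointwise convergence in probability plus the boundedness of $\TV$ would recover only the $o(1)$ statement of Theorem~\ref{thm:posterior-approximation} via dominated convergence, and it is precisely the uniform integrability of Condition~\ref{cond:convergence-rate}, stated against the matching mixing measure $p(\xpred_n \mid \obs)$, that sharpens this into the $O(R_n)$ rate.
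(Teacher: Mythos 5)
Your proof is correct and takes essentially the same route as the paper's: both write the target and $\pap_n(Q)$ as mixtures over the common measure $p(\xpred_n \mid \obs)$, bound the total variation of the mixtures by the expected componentwise total variation, insert $D_n$ via the triangle inequality, and convert uniform integrability into a uniform bound $\E\bigl(R_n^{-1}\TV\bigr) \le M + 1$ by splitting on the event $\{R_n^{-1}\TV > M\}$. One small inaccuracy: the joint convexity of $\TV$ is not among the properties listed in Lemma~\ref{thm:total-variation-distance-properties}; the paper instead derives it inline (the integral-swapping argument at the start of the proof of Lemma~\ref{thm:downstream-approx-to-posterior-approx}), which is exactly the one-line justification your step needs.
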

\begin{proof}
    The full proof is in Appendix~\ref{sec:missing-proofs-convergence-rate}.
    Proof idea: first, we prove the uniform integrability of 
    $\frac{1}{R_n}\TV(\qap_n, \qex_n)$ when $\xpred_n \sim p(\xpred_n | \obs)$
    by using the triangle inequality and properties of uniform integrability.
    Second, we prove that this implies the claimed convergence rate.
\end{proof}

\subsection{Finite Number of Synthetic Data Sets}\label{sec:finite-m}

We have now shown that the mixture of posteriors
\begin{equation*}
    \pap_n(Q) = \int p(Q | \xpred_n)p(\xpred_n | \obs) \dx \xpred_n, 
\end{equation*}
converges to the target posterior $p(Q | \obs)$ as $n$ grows. However, sampling 
$\pap_n(Q)$ exactly requires one synthetic data set per sample, 
which is not practical in realistic settings. We can further 
approximate by generating a fixed number $m$ of synthetic datasets and using a 
Monte-Carlo approximation of the integral:
\begin{equation}
    p(Q | \obs) \approx \int p(Q | \xpred_n)p(\xpred_n | \obs)\dx \xpred_n
    \approx \frac{1}{m}\sum_{i=1}^mp(Q | \xpred_n = \xpred_{i,n}),
\end{equation}
with $\xpred_{i,n} \sim p(\xpred_n | \obs)$.

Total variation distance is a metric, so
\begin{equation*}
    \begin{split}
        &\TV\left(\frac{1}{m}\sum_{i=1}^{m} p(Q | \xpred_{i, {n}}), p(Q | \obs)\right)
        \leq\TV\left(\frac{1}{m}\sum_{i=1}^{m} p(Q | \xpred_{i, {n}}), \pap_n(Q) \right) 
        + \TV\left(\pap_n(Q), p(Q | \obs)\right).
    \end{split}
\end{equation*}
Theorem~\ref{thm:posterior-approximation}
gives 
\begin{equation*}
    \lim_{n\to \infty} \TV\left(\pap_n(Q), p(Q | \obs)\right) = 0.
\end{equation*}
If, for all $n$, $p(Q | \xpred_n)$ is continuous for all $\xpred_n$, $p(Q | \xpred_n) \leq h_n(\xpred_n)$ for
an integrable function $h_n(\xpred_n)$, and $\calq \subset \R^d$ is compact, 
the uniform law of large numbers~\citep[Theorem 2]{jennrichAsymptoticPropertiesNonLinear1969} 
gives
\begin{equation}
    \sup_{Q\in \calq} \left|\frac{1}{m}\sum_{i=1}^m p(Q | \xpred_n = \xpred_{i,n}) - \pap_n(Q)\right| \to 0
    \label{eq:finite-m-density-convergence}
\end{equation}
almost surely as $m\to \infty$. If $\calq = \R^d$, we can represent $\calq$ as a countable union of 
compact sets $\calq_k$, apply the uniform law of large numbers on each $\calq_k$, and use the union
bound to obtain \eqref{eq:finite-m-density-convergence} for $\calq$. A similar decomposition of 
$\calq$ can be done for many other constrained parameter sets encountered in practice.

This implies~\citep[Corollary 2.30]{vandervaartAsymptoticStatistics1998}
\begin{align*}
    \lim_{m\to \infty} \TV\left(\frac{1}{m}\sum_{i=1}^{m} p(Q | \xpred_{i, {n}}),
    \pap_n(Q)\right)
    = 0
\end{align*}
for almost all $\xpred_{i, n}$, so 
\begin{equation}
    \lim_{n\to \infty}\lim_{m\to \infty} \TV\left(\frac{1}{m}\sum_{i=1}^{m} p(Q | \xpred_{i, {n}}), 
    p(Q | \obs)\right) = 0
\end{equation}
almost surely when $\xpred_{i, n}\sim p(\xpred_n | \obs)$. 

In Figure~\ref{fig:gaussian-known-variance-hyperparameter-results}, we see that 
the mixture of posteriors becomes very spiky with small $m$ and large $n$, and does 
not fully converge to the target posterior when $n$ grows but $m$ is kept small.
This suggests that
\begin{equation}
    \lim_{m\to \infty}\lim_{n\to \infty} \TV\left(\frac{1}{m}\sum_{i=1}^{m} p(Q | \xpred_{i, {n}}), 
    p(Q | \obs)\right) \neq 0
\end{equation}
because the distributions $p(Q | \xpred_{i, n})$ become narrower 
as $n$ increases, so a fixed number of them is not enough to 
cover $p(Q | \obs)$. This means that in practice, the number of 
synthetic data sets should be increased along with the size of the 
synthetic data sets.

\subsection{Relation to Missing Data Imputation}\label{sec:relation-with-imputation}
\newcommand{\xobs}{X_{obs}}
\newcommand{\xmis}{X_{mis}}

Combining inferences by mixing posteriors from multiple data sets in the style of \eqref{eq:analyst-posterior-decomposition}
was originally proposed by \citet{gelmanBayesianDataAnalysis2004, gelmanBayesianDataAnalysis2014} for Bayesian inference with missing 
data, with completed data sets corresponding to synthetic data sets of our setting. Large completed data sets
are not required in the missing data setting. We explain where this difference between the two 
settings arises next.

In the missing data setting, only a part $\xobs$ of the complete data set
$X$ is observed, while a part $\xmis$ is 
missing~\citep{rubinMultipleImputationNonresponse1987}. To facilitate downstream analysis,
the missing data are imputed by sampling $\xmis \sim p(\xmis | \xobs, I_I)$. Analogously 
with synthetic data, $I_I$ represents the imputer's background knowledge.

Like with synthetic data, we have the decomposition~\citep{gelmanBayesianDataAnalysis2014}
\begin{equation}
    p(Q | \xobs, I_A) = \int p(Q | \xobs, \xmis, I_A)p(\xmis | \xobs, I_A) \dx \xmis.
\end{equation}
If the analyst's and imputer's models are congenial in the sense that 
\begin{equation*}
    p(Q | \xobs, I_A) = p(Q | \xobs, I_I)
\end{equation*}
and 
\begin{equation*}
    p(Q | X, I_A) = p(Q | X, I_I)
\end{equation*}
for any complete data set $X$, then
\begin{equation}
    \begin{split}
        p(Q | \xobs, I_A) 
        = p(Q | \xobs, I_I)
        &= \int p(Q | \xobs, \xmis, I_I)p(\xmis | \xobs, I_I) \dx \xmis
        \\&= \int p(Q | \xobs, \xmis, I_A)p(\xmis | \xobs, I_I) \dx \xmis,
    \end{split}
\end{equation}
so sampling $p(Q | \xobs, I_A)$ can be done by sampling $\xmis \sim p(\xmis | \xobs, I_I)$ multiple times,
sampling $p(Q | \xobs, \xmis, I_A)$ for each $\xmis$, and combining the samples. Unlike with synthetic data,
where sampling $p(Q | \xreal, \xpred, I_A)$ would require the original data and defeat the purpose of 
using synthetic data, sampling $p(Q | \xobs, \xmis, I_A)$ is simply the analysis 
for a complete data set, so generating large imputed data sets is not required.

\section{Non-private Gaussian Examples}\label{sec:gauss-example}
In this section, we look at the Bayesian inference of a univariate Gaussian mean or variance from 
mixing the posteriors from multiple synthetic data sets, which are also generated from the same model.
This allows us to analytically examine various aspects of Bayesian inference from multiple synthetic 
data sets which are not visible in our high-level theory in Section~\ref{sec:syn-data-bayes-inference}, 
as all of the posteriors are analytically tractable and relatively simple. In particular,
we find that the synthetic data set needs to be larger than the original data set 
(Section~\ref{sec:size-of-n-xpred}) and find a variance 
correction formula for the Gaussian setting that gets around this requirement 
(Section~\ref{sec:gaussian-posterior-approximation}). We also look at two forms of uncongeniality,
and find that they cause very different effects: estimating the mean with incorrect known variance
converges to the data provider's posterior, but estimating the variance with incorrect known mean
does not converge to either party's posterior. For reference, we list the posteriors for 
these Gaussian settings in Appendix~\ref{sec:bayes-gauss-background}.

\subsection{Gaussian Mean Estimation with Known Variance}\label{sec:gaussian-known-variance}

Our first example is very simple: $x \sim \caln(\mu, \sigma^2)$, the analyst infers the 
mean $\mu$ of a univariate Gaussian 
distribution with known variance from synthetic data that has been generated 
from the same model.
To differentiate the variables for the analyst and data provider, 
we use bars for the data provider (like $\ups{\sigma}_0^2)$ and hats for the 
analyst (like $\ds{\sigma}_0^2)$.

When the synthetic data is generated from the model with known 
variance $\ups{\sigma}^2_k$,
we sample from the posterior predictive $p(\xpred | \xreal)$ as
\begin{align*}
    \ups{\mu} | \xreal &\sim \caln(\ups{\mu}_{n_\xreal}, \ups{\sigma}_{n_\xreal}^2), \quad
    \xpred | \ups{\mu} \sim \caln^{n_{\xpred}}(\ups{\mu}, \ups{\sigma}_k^2) \\
    \ups{\mu}_{n_\xreal} &= \frac{\frac{1}{\ups{\sigma_0}^2}\ups{\mu}_0 + \frac{n_{\xreal}}{\ups{\sigma}^2_k}\bar{X}}
    {\frac{1}{\ups{\sigma}_0^2} + \frac{n_{\xreal}}{\ups{\sigma}^2_k}}, \quad
    \frac{1}{\ups{\sigma}_{n_\xreal}^2} = \frac{1}{\ups{\sigma}_0^2} + \frac{n_{\xreal}}{\ups{\sigma}_k^2}.
\end{align*}
$\caln^{n_{\xpred}}$ denotes a Gaussian distribution over $n_{\xpred}$ i.i.d. 
samples and $\bar{\xreal}$ is the mean of $\xreal$.

When downstream analysis is the model with known variance $\ds{\sigma}_k^2$, 
we have 
\begin{align*}
    \ds{\mu} | \xpred &\sim \caln(\ds{\mu}_{n_{\xpred}}, \ds{\sigma}_{n_{\xpred}}^2), \quad
    \ds{\mu}_{n_{\xpred}} = \frac{\frac{1}{\ds{\sigma_0}^2}\ds{\mu}_0 + \frac{n_{\xpred}}{\ds{\sigma}^2_k}\bar{X}^*}
    {\frac{1}{\ds{\sigma}_0^2} + \frac{n_{\xpred}}{\ds{\sigma}^2_k}}, \quad
    \frac{1}{\ds{\sigma}_{n_{\xpred}}^2} = \frac{1}{\ds{\sigma}_0^2} + \frac{n_{\xpred}}{\ds{\sigma}_k^2}.
\end{align*}

Now, using $\mu^*$ to denote a sample from the mixture of posteriors
from synthetic data $\pap_n(\mu)$ in \eqref{eq:approx-analyst-definition}, 
we check where the mean and variance of $\mu^*\sim \pap_n(\mu)$ converge when $n_{\xpred} \to \infty$:
\begin{align*}
    \E(\mu^*)  &= \E(\E(\mu^* | \xpred))
    = \E\left(\ds{\mu}_{n_{\xpred}}\right)
    = \E\left(\frac{\frac{1}{\ds{\sigma}_0^2}\ds{\mu}_0 + \frac{n_{\xpred}}
    {\ds{\sigma}^2_k}\xpredbar}{\frac{1}{\ds{\sigma}_0^2} 
    + \frac{n_{\xpred}}{\ds{\sigma}^2_k}}\right)
    \\&= \frac{\frac{1}{\ds{\sigma}_0^2}\ds{\mu}_0 + \frac{n_{\xpred}}
    {\ds{\sigma}^2_k}\E(\xpredbar)}{\frac{1}{\ds{\sigma}_0^2} 
    + \frac{n_{\xpred}}{\ds{\sigma}^2_k}}
    \to \E(\xpredbar) = \ups{\mu}_{n_\xreal}
\end{align*}
as $n_{\xpred} \to \infty$.

For the variance,
\begin{align*}
    \Var(\mu^*) 
    &= \E(\Var(\mu^* | \xpred)) + \Var(\E(\mu^* | \xpred))
    = \E(\ds{\sigma}_{n_{\xpred}}^2) + \Var(\ds{\mu}_{n_{\xpred}}),
\end{align*}
\begin{equation*}
    \E(\ds{\sigma}_{n_{\xpred}}^2)
    = \E\left(\frac{1}{\frac{n_{\xpred}}{\ds{\sigma}_k^2} 
    + \frac{1}{\ds{\sigma}_0^2}}\right) \to 0, n_{\xpred} \to \infty,
\end{equation*}
\begin{equation*}
    \Var\left(\ds{\mu}_{n_{\xpred}}\right)
    = \Var\left(\frac{\frac{n_{\xpred}}{\ds{\sigma}_k^2}\xpredbar 
    + \frac{\ds{\mu}_0}{\ds{\sigma}_0^2}}{\frac{n_{\xpred}}{\ds{\sigma}_k^2} + \frac{1}{\ds{\sigma}_0^2}}\right)
    = \left(\frac{\frac{n_{\xpred}}{\ds{\sigma}_k^2}}
    {\frac{n_{\xpred}}{\ds{\sigma}_k^2} + \frac{1}{\ds{\sigma}_0^2}}\right)^2
    \Var(\xpredbar),
\end{equation*}
and
\begin{align*}
    \Var(\xpredbar) &= \E(\Var(\xpredbar | \ups{\mu}))
    + \Var(\E(\xpredbar | \ups{\mu}))
    \\&= \frac{1}{n_{\xpred}}\E(\Var(x^*_i)) + \Var(\ups{\mu})
    \\&\to \Var(\ups{\mu}) = \ups{\sigma}_{n_\xreal}^2
\end{align*}
as $n_{\xpred} \to \infty$.
Putting these together, 
\begin{align}
    \E(\mu^*) \to \ups{\mu}_{n_\xreal}, \quad
    \Var(\mu^*) \to \ups{\sigma}_{n_\xreal}^2
\end{align}
as $n_{\xpred} \to \infty$.

$\mu^*$ also has a Gaussian distribution, which we will show next.
In 
\begin{equation*}
    \mu^* \sim \int p(\mu | \xpred_n)p(\xpred_n | \xreal)d \xpred,
\end{equation*}
both $p(\mu | \xpred_n) = \mathcal{N}(\hat{\mu}_{n_{\xpred}}, \hat{\sigma}^2_{n_{\xpred}})$ 
and $p(\xpred_n | X)$ are Gaussian.
$\hat{\mu}_{n_{\xpred}}$ is a linear function of $\xpred_n$, so
$p(\hat{\mu}_{n_{\xpred}} | X)$ is also Gaussian. 
$\hat{\sigma}^2_{n_{\xpred}}$ does not depend on $\xpred$, so $\mu^*$ is the 
sum of a random variable with distribution 
$\mathcal{N}(0, \hat{\sigma}^2_{n_{\xpred}})$ and
$\hat{\mu}_{n_{\xpred}}$, which is also Gaussian, meaning that 
$\mu^*$ is Gaussian.

This means that $p(\mu^*) \to p(\mu | \xreal, I_S)$. This is regardless of 
congeniality, which corresponds to both parties having equal known variances
($\ups{\sigma}_k^2 = \ds{\sigma}_k^2)$ in this setting.

\begin{figure}
    \centering
    \includegraphics[width=\textwidth]{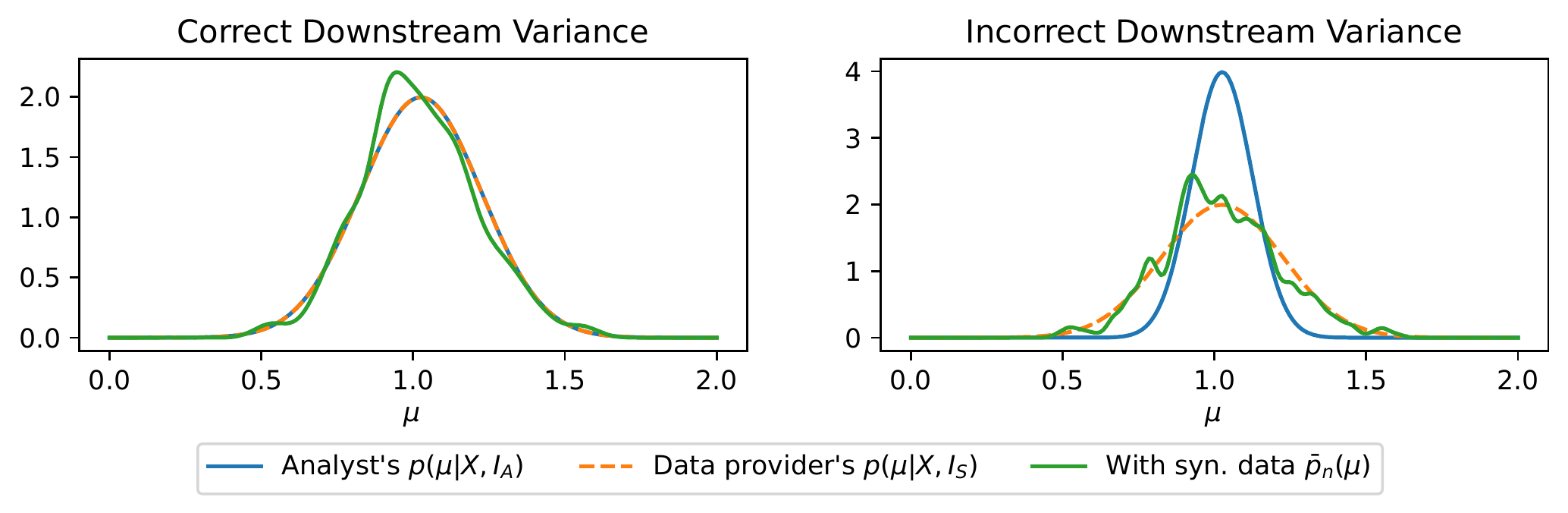}
    \caption{
        Simulation results for the Gaussian mean estimation example, showing that
        the mixture of posteriors from synthetic data in green converges. In the left 
        panel, both the analyst and data provider have the correct known variance.
        The blue and orange lines overlap, as both parties have the same 
        $p(\mu | \xreal)$.
        On the right, the analyst's known variance is too small 
        ($\ds{\sigma}_k^2 = \frac{1}{4}\ups{\sigma}_k^2$), so 
        congeniality is not met, but the mixture of posteriors from synthetic data, $\pap_n(\mu)$, 
        still converges to the data provider's posterior.
        In both panels, $m = 400$ and $\frac{n_{\xpred}}{n_{\xreal}} = 20$.
    }
    \label{fig:known-variance-results}
\end{figure}

We test the theory with a numerical simulation in 
Figure~\ref{fig:known-variance-results}. We generated the real data 
$\xreal$ of size $n_\xreal = 100$ by i.i.d. sampling from $\caln(1, 4)$. 
Both the analyst and data provider use $\caln(0, 10^2)$ as the prior. The 
data provider uses the correct known variance ($\ups{\sigma}_k^2 = 4$),
and the analyst either uses the correct known variance
($\ds{\sigma}_k^2 = 4$), or a too small known variance ($\ds{\sigma}_k^2 = 1$), 
which is an example of uncongeniality.

In the congenial case in the left panel of Figure~\ref{fig:known-variance-results}, both parties 
have the same posterior given the real data 
$X$, and the mixture of posteriors from synthetic data is very close to that. In the 
uncongenial case in the right panel, where the analyst underestimates the variance, the 
parties have different posteriors given $X$, but the mixture of 
synthetic data posteriors is still close to the data provider's posterior.

In Figure~\ref{fig:gaussian-known-variance-hyperparameter-results},
we examine the convergence of the mixture of posteriors from synthetic data under 
congeniality.  We see that setting $n_{\xpred} = n_\xreal$ is not enough, as the 
mixture of posteriors is significantly wider than the analyst's posterior for all values of $m$. 
The synthetic data set needs to be larger than the original, with $n_{\xpred} = 5n_\xreal$ 
already giving a decent approximation and $n_{\xpred} = 20 n_\xreal$ a rather good one with the 
larger values of $m$. We also see that $m$ must be sufficiently large, otherwise 
the method produces very jagged posteriors, for example the top right corner.

\begin{figure}
    \centering
    \includegraphics[width=\textwidth]{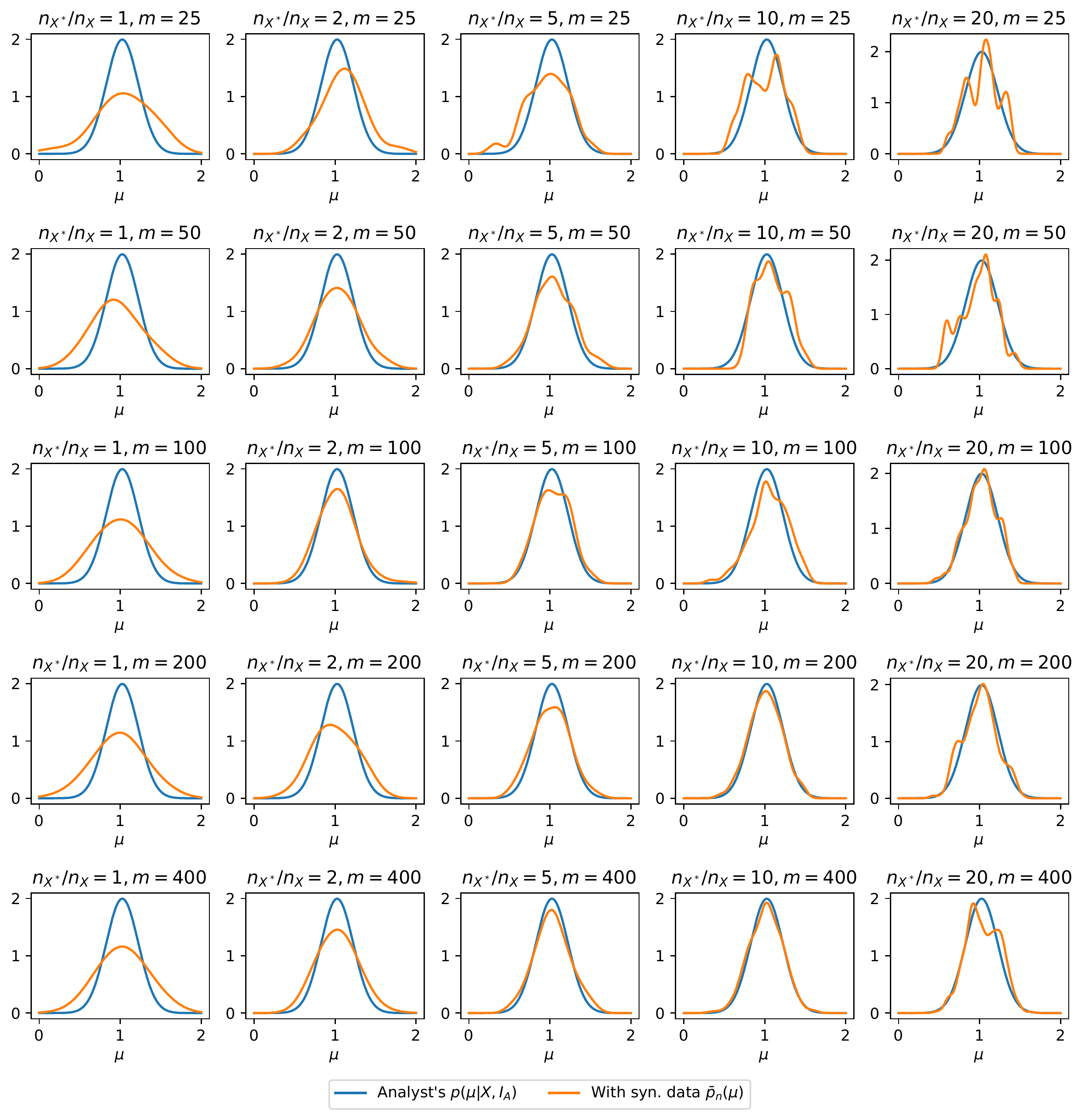}
    \caption{
        Convergence of the mixture of synthetic data posteriors (in orange) 
        with different values of $m$ (increasing top-to-bottom) and $n_{\xpred}$ (increasing left-to-right) in Gaussian mean 
        estimation with known variance. 
    }
    \label{fig:gaussian-known-variance-hyperparameter-results}
\end{figure}

\subsection{Gaussian with Unknown Variance Upstream, Known Variance Downstream}\label{sec:gaussian-unknown-known-variance}

When the synthetic data is generated from the Gaussian mean estimation with unknown 
variance model, $p(\xpred | \xreal)$ is
\begin{align*}    
    \ups{\sigma}^2 | X &\sim \invchisq(\ups{\nu}_{n_\xreal}, \ups{\sigma}_{n_\xreal}^2) \\
    \ups{\mu} | \ups{\sigma}^2, X &\sim \caln\left(\ups{\mu}_{n_\xreal}, \frac{\ups{\sigma}^2}{\ups{\kappa}_{n_\xreal}}\right) \\
    x^*_i | \ups{\mu}, \ups{\sigma}^2 &\sim \caln(\ups{\mu}, \ups{\sigma}^2). \\
\end{align*}

When downstream analysis is the model with known variance $\ds{\sigma}_k^2$, 
$p(\mu^* | \xpred)$ is 
\begin{align*}
    \mu^* | \xpred &\sim \caln(\ds{\mu}_{n_{\xpred}}, \ds{\sigma}_{n_{\xpred}}^2) \\ 
    \ds{\mu}_{n_{\xpred}} &= \frac{\frac{1}{\ds{\sigma_0}^2}\ds{\mu}_0 + \frac{n_{\xpred}}{\ds{\sigma}^2_k}\bar{X}^*}
    {\frac{1}{\ds{\sigma}_0^2} + \frac{n_{\xpred}}{\ds{\sigma}^2_k}} \\
    \frac{1}{\ds{\sigma}_{n_{\xpred}}^2} &= \frac{1}{\ds{\sigma}_0^2} + \frac{n_{\xpred}}{\ds{\sigma}_k^2}.
\end{align*}

Checking where the mean and variance of $\pap_n(\mu)$ converge when $n_{\xpred} \to \infty$:
\begin{align*}
    \E(\mu^*)  &= \E(\E(\mu^* | \xpred))
    = \E\left(\ds{\mu}_{n_{\xpred}}\right)
    = \E\left(\frac{\frac{1}{\ds{\sigma}_0^2}\ds{\mu}_0 + \frac{n_{\xpred}}
    {\ds{\sigma}^2_k}\xpredbar}{\frac{1}{\ds{\sigma}_0^2} 
    + \frac{n_{\xpred}}{\ds{\sigma}^2_k}}\right)
    \\&= \frac{\frac{1}{\ds{\sigma}_0^2}\ds{\mu}_0 + \frac{n_{\xpred}}
    {\ds{\sigma}^2_k}\E(\xpredbar)}{\frac{1}{\ds{\sigma}_0^2} 
    + \frac{n_{\xpred}}{\ds{\sigma}^2_k}}
    \to \E(\xpred) = \ups{\mu}_{n_\xreal}
\end{align*}
as $n_{\xpred} \to \infty$.

For the variance,
\begin{align*}
    \Var(\mu^*) 
    &= \E(\Var(\mu^* | \xpred)) + \Var(\E(\mu^* | \xpred))
    \\&= \E(\ds{\sigma}_{n_{\xpred}}^2) + \Var(\ds{\mu}_{n_{\xpred}}),
\end{align*}
\begin{equation*}
    \E(\ds{\sigma}_{n_{\xpred}}^2)
    = \E\left(\frac{1}{\frac{n_{\xpred}}{\ds{\sigma}_k^2} 
    + \frac{1}{\ds{\sigma}_0^2}}\right) \to 0, n_{\xpred} \to \infty,
\end{equation*}
\begin{equation*}
    \Var\left(\ds{\mu}_{n_{\xpred}}\right)
    = \Var\left(\frac{\frac{n_{\xpred}}{\ds{\sigma}_k^2}\xpredbar 
    + \frac{\ds{\mu}_0}{\ds{\sigma}_0^2}}{\frac{n_{\xpred}}{\ds{\sigma}_k^2} + \frac{1}{\ds{\sigma}_0^2}}\right)
    = \left(\frac{\frac{n_{\xpred}}{\ds{\sigma}_k^2}}
    {\frac{n_{\xpred}}{\ds{\sigma}_k^2} + \frac{1}{\ds{\sigma}_0^2}}\right)^2
    \Var(\xpredbar),
\end{equation*}
and
\begin{align*}
    \Var(\xpredbar) &= \E(\Var(\xpredbar | \ups{\mu}, \ups{\sigma}^2))
    + \Var(\E(\xpredbar | \ups{\mu}, \ups{\sigma}^2))
    \\&= \frac{1}{n_{\xpred}}\E(\ups{\sigma}^2) + \Var(\ups{\mu})
    \\&\to \Var(\ups{\mu}) = \frac{\ups{\sigma}_0^2}{\ups{\kappa}_{n_\xreal}}
\end{align*}
as $n_{\xpred} \to \infty$.
Putting these together, 
\begin{align}
    \E(\mu^*) \to \ups{\mu}_{n_\xreal}, \quad
    \Var(\mu^*) \to \frac{\ups{\sigma}_0^2}{\ups{\kappa}_{n_\xreal}}
\end{align}
as $n_{\xpred} \to \infty$, so $\mu^*$ asymptotically has the same mean and 
variance as the 
marginal posterior $p(\mu | \xreal, I_S)$ of $\mu$ in the synthetic data model, 
which is not the same as the downstream posterior distribution 
$p(\mu | \xreal, I_A)$ on the real data.

We verify this with the simulation in Figure~\ref{fig:unknown-known-variance-results},
where the synthetic data is generated from the model with unknown variance, 
while the analyst uses the known variance model. The setting is otherwise 
identical to the case where both used the known variance model in 
Figure~\ref{fig:known-variance-results}.
The mixture of synthetic data posteriors converges to the data provider's 
posterior, even when the 
analyst uses an incorrect value for the known variance $\ds{\sigma}_k^2$.

\begin{figure}
    \centering
    \includegraphics[width=\textwidth]{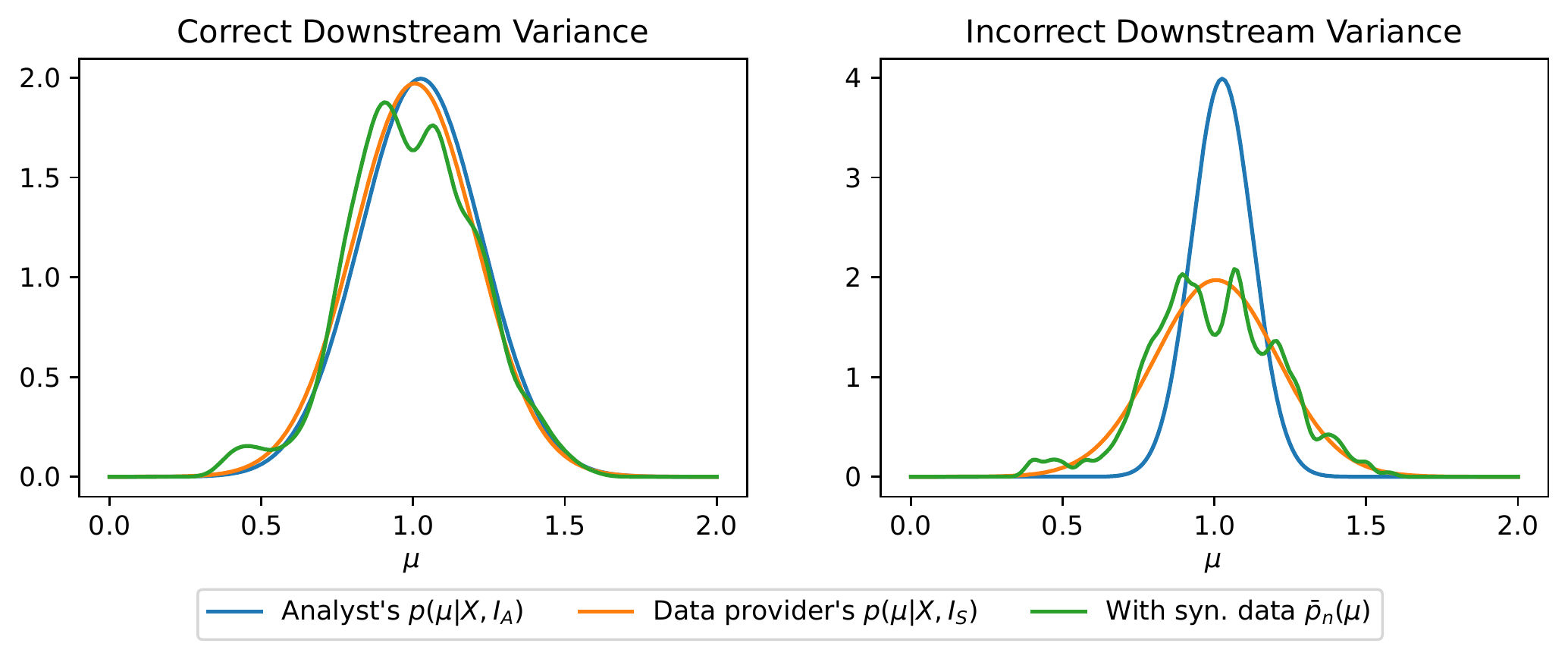}
    \caption{
        Results when the synthetic data is generated from the unknown variance 
        Gaussian mean estimation model, and the analyst uses the model with 
        known variance. On the left, the analyst's known variance is correct,
        on the right it is incorrect. In both cases, the mixture of synthetic 
        data posteriors converges to the data provider's posterior. In both 
        panels, $m = 400$ and $\frac{n_{\xpred}}{n_\xreal} = 20$.
    }
    \label{fig:unknown-known-variance-results}
\end{figure}

\subsection{Size of the Synthetic Data set}\label{sec:size-of-n-xpred}
In the preceding analysis, most of the approximations hold when $n_{\xpred}$
is large, even when $n_{\xpred} \approx n_{\xreal}$. However, based on the 
experiment with different values of $n_{\xpred}$ and $m$ in 
Figure~\ref{fig:gaussian-known-variance-hyperparameter-results},
$n_{\xpred} \gg n_{\xreal}$ is needed for all of the approximations to hold.

This is explained by looking at $\Var(\xpredbar)$. In the case where 
both parties use the known variance model,
\begin{align*}
    \Var(\xpredbar) &= \frac{1}{n_{\xpred}}\E(\Var(x_i^*)) + \Var(\ups{\mu})
    = \frac{1}{n_{\xpred}}(\ups{\sigma}_k^2 + \ups{\sigma}_{n_\xreal}^2)
    + \ups{\sigma}_{n_\xreal}^2
    \\&= \frac{1}{n_{\xpred}}\ups{\sigma}_k^2
    + \left(1 + \frac{1}{n_{\xpred}}\right)
    \frac{1}{\frac{1}{\ups{\sigma}_0^2} + \frac{n_{\xreal}}{\ups{\sigma}_k^2}}.
\end{align*}
If $n_\xreal \approx n_{\xpred}$ and both are large, 
$1 + \frac{1}{n_{\xpred}} \approx 1$ and 
$\frac{1}{\ups{\sigma}_0^2} + \frac{n_{\xreal}}{\ups{\sigma}_k^2} \approx \frac{n_{\xreal}}{\ups{\sigma}_k^2}$, so 
\begin{equation*}
    \Var(\xpredbar) \approx \frac{\ups{\sigma}_{k}^2}{n_{\xpred}} 
    + \frac{\ups{\sigma}_{k}^2}{n_\xreal} 
    \approx \frac{2\ups{\sigma}_{k}^2}{n_\xreal}.
\end{equation*}
With these approximations, 
\begin{equation*}
    \Var(\ups{\mu}) \approx \frac{\ups{\sigma}_{k}^2}{n_\xreal},
\end{equation*}
so 
\begin{equation*}
    \Var(\xpredbar) \approx 2\Var(\ups{\mu}),
\end{equation*}
while the $n_{\xpred} \to \infty$ limit is $\Var(\xpredbar) \to \Var(\ups{\mu})$.
This means that $n_{\xpred} \gg n_{\xreal}$ is required.

The same happens when the synthetic data is generated from the unknown variance 
model:
\begin{align*}
    \Var(\xpredbar) &= \frac{1}{n_{\xpred}}\E(\ups{\sigma}^2) + \Var(\ups{\mu})
    = \frac{1}{n_{\xpred}}
    \frac{\ups{\nu}_0 + n_{\xreal}}{\ups{\nu}_0 + n_\xreal - 2}\ups{\sigma}_{n_{\xreal}}^2
    + \frac{\ups{\sigma}_{n_\xreal}^2}{\ups{\kappa}_0 + n_{\xreal}}.
\end{align*}
If $n_\xreal \approx n_{\xpred}$ and both are large, 
$\frac{\ups{\nu}_0 + n_{\xreal}}{\ups{\nu}_0 + n_\xreal - 2} \approx 1$ and 
$\ups{\kappa}_0 + n_\xreal \approx n_\xreal$, so 
\begin{equation*}
    \Var(\xpredbar) \approx \frac{\ups{\sigma}_{n_\xreal}^2}{n_{\xpred}} 
    + \frac{\ups{\sigma}_{n_\xreal}^2}{n_\xreal} 
    \approx \frac{2\ups{\sigma}_{n_\xreal}^2}{n_\xreal}.
\end{equation*}
With these approximations, 
\begin{equation*}
    \Var(\ups{\mu}) \approx \frac{\ups{\sigma}_{n_\xreal}^2}{n_\xreal},
\end{equation*}
so 
\begin{equation*}
    \Var(\xpredbar) \approx 2\Var(\ups{\mu}).
\end{equation*}

\subsection{Approximate Variance Correction}\label{sec:gaussian-posterior-approximation}
When $n_{\xpred}$ is large, 
\begin{equation*}
    \Var(\mu^*) \approx \E(\ds{\sigma}_{\xpred}^2) + \Var(\bar{X}^*).
\end{equation*}
If $n_{\xpred} = cn_\xreal$ for some $c > 0$, from the analyses in 
Section~\ref{sec:size-of-n-xpred}, we get
\begin{equation*}
    \Var(\bar{X}) \approx \left(1 + \frac{1}{c}\right)\Var(\ups{\mu}),
\end{equation*}
so
\begin{equation*}
    \Var(\mu^*) \approx \E(\ds{\sigma}_{n_{\xpred}}^2) + \left(1 + \frac{1}{c}\right)\Var(\ups{\mu}).
\end{equation*}
Solving for $\Var(\mu)$ gives
\begin{equation}
    \Var(\ups{\mu}) \approx \left(1 + \frac{1}{c}\right)^{-1}\left(\Var(\mu^*) - \E(\ds{\sigma}_{n_{\xpred}}^2)\right).
    \label{eq:gaussian-approximation}
\end{equation}
which gives a Rubin's rules-like~\citep{rubinMultipleImputationNonresponse1987} 
approximation of $\Var(\mu)$ that can be 
computed from smaller synthetic data sets with $n_{\xpred} \approx n_{\xreal}$.

We validate this with the experiment in 
Figure~\ref{fig:gaussian-approximation-results}, which shows that approximating
$p(\mu | X, I_A)$ with a Gaussian with variance from \eqref{eq:gaussian-approximation}
is closer to the real data posterior than 
the mixed posterior approximation from Section~\ref{sec:syn-data-bayes-inference}.

\begin{figure}
    \centering
    \includegraphics[width=\textwidth]{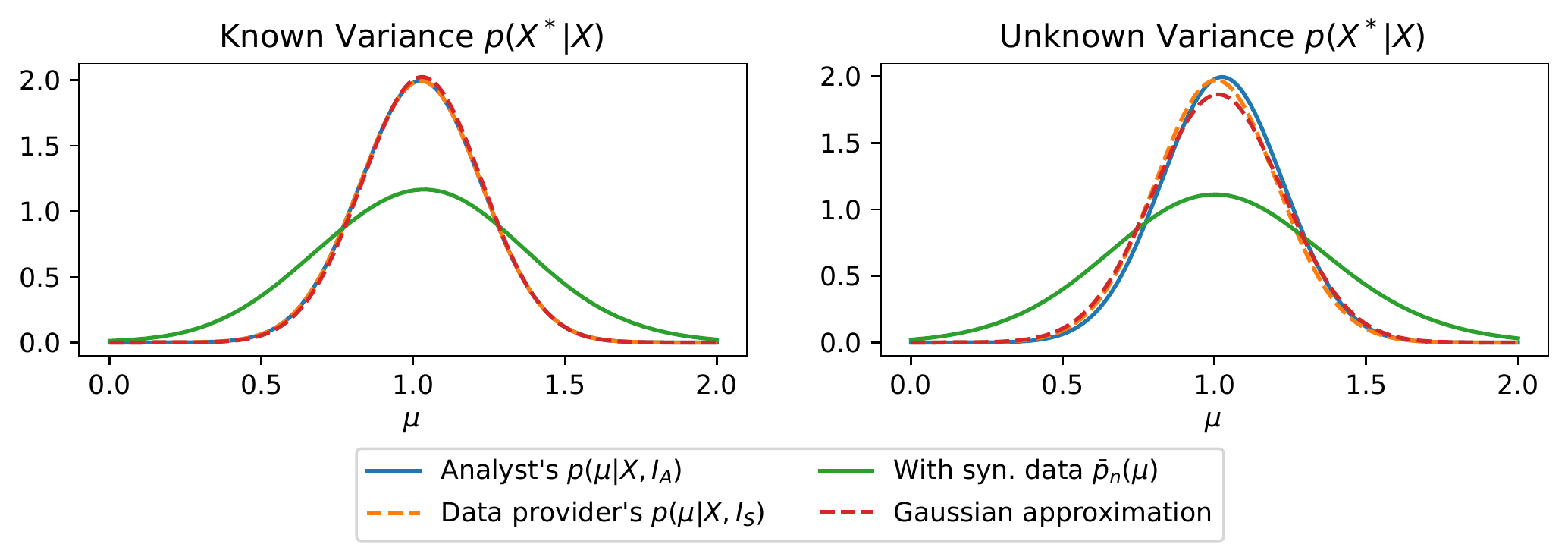}
    \caption{
        Results with the Gaussian approximation with $n_{\xpred} = n_\xreal$,
        showing that the Gaussian approximation is closer to the 
        real data posterior than the mixture of synthetic data posteriors. On 
        the left, the synthetic data is generated from the known variance model, and 
        on the right, the synthetic data is generated from the unknown variance 
        model. In both cases, the known variances for both parties are correct,
        and $m = 400$.
    }
    \label{fig:gaussian-approximation-results}
\end{figure}

\subsection{Gaussian with Known Mean, Unknown Variance}
To asses the effects of uncongeniality when the downstream posterior is not 
Gaussian, we look at Bayesian estimation of the variance of a Gaussian,
with known mean. In this case, the data provider's conjugate prior is 
\begin{equation*}
    \ups{\sigma}^2 \sim \invchisq(\ups{\nu}_0, \ups{\sigma}_0^2),
\end{equation*}
and their known mean is $\ups{\mu}_k$. The synthetic data is generated from
\begin{align*}
    x^*_i|\ups{\sigma}^2 &\sim \caln(\ups{\mu}_k, \ups{\sigma}^2) \\
    \ups{\sigma}^2 | \xreal &\sim \invchisq(\ups{\nu}_{n_\xreal}, \ups{\sigma}_{n_\xreal}^2) \\
    \ups{\sigma}_{n_\xreal}^2 
    &= \frac{\ups{\nu}_0\ups{\sigma}_0^2 + n_\xreal \ups{v}}{\ups{\nu}_0 + n_\xreal} \\
    \ups{\nu}_{n_\xreal} &= \ups{\nu}_0 + n_\xreal \\
    \ups{v} &= \frac{1}{n_\xreal}\sum_{i=1}^{n_\xreal} (x_i - \ups{\mu}_k)^2.
\end{align*}

The analyst's conjugate prior is 
\begin{equation*}
    \ds{\sigma}^2 \sim \invchisq(\ds{\nu}_0, \ds{\sigma}_0^2),
\end{equation*}
their known mean is $\ds{\mu}_k$, and the downstream posterior is 
\begin{align*}
    \ds{v} &= \frac{1}{n_{\xpred}}\sum_{i=1}^{n_{\xpred}} (x_i^* - \ds{\mu}_k)^2 \\
    \ds{\nu}_{n_{\xpred}} &= \ds{\nu}_0 + n_{\xpred} \\
    \ds{\sigma}_{n_{\xpred}}^2 
    &= \frac{\ds{\nu}_0\ds{\sigma}_0^2 + n_{\xpred} \ds{v}}{\ds{\nu}_0 + n_{\xpred}} \\
    \ds{\sigma}^2 | \xpred &\sim \invchisq(\ds{\nu}_{n_{\xpred}}, \ds{\sigma}_{n_{\xpred}}^2).
\end{align*}

Denoting a sample from the mixture of synthetic data posteriors as $\sigma^2_*$,
we have 
\begin{align*}
    \E(\sigma^2_*) &= \E(\E(\sigma^2_* | \xpred))
    = \E\left(\frac{\ds{\nu}_{n_{\xpred}}}{\ds{\nu}_{n_{\xpred}} - 2}\ds{\sigma}^2_{n_{\xpred}}\right)
    = \frac{\ds{\nu}_0 + n_{\xpred}}{\ds{\nu}_0 + n_{\xpred} - 2}
    \E\left(\ds{\sigma}^2_{n_{\xpred}}\right)
    \\&= \frac{\ds{\nu}_0 + n_{\xpred}}{\ds{\nu}_0 + n_{\xpred} - 2}
    \frac{\ds{\nu_0}\ds{\sigma}_0^2 + n_{\xpred} \E(\ds{v})}{\ds{\nu}_0 + n_{\xpred}}
    = \frac{\ds{\nu_0}\ds{\sigma}_0^2 + n_{\xpred} \E(\ds{v})}{\ds{\nu}_0 + n_{\xpred} - 2},
\end{align*}
\begin{align*}
    \E(\ds{v}) &= \frac{1}{n_{\xpred}}\sum_{i=1}^{n_{\xpred}} \E((x_i^* - \ds{\mu}_k)^2)
    = \frac{1}{n_{\xpred}}\sum_{i=1}^{n_{\xpred}} \E((x^*_i)^2 - 2x^*_i\ds{\mu}_k + \ds{\mu}_k^2)
    \\&= \frac{1}{n_{\xpred}}\sum_{i=1}^{n_{\xpred}} \big(\E((x^*_i)^2) - 2\ups{\mu}_k\ds{\mu}_k + \ds{\mu}_k^2\big)
    = \ds{\mu}_k^2 - 2\ups{\mu}_k\ds{\mu}_k + \frac{1}{n_{\xpred}}\sum_{i=1}^{n_{\xpred}} \E((x^*_i)^2)
    \\&= \ds{\mu}_k^2 - 2\ups{\mu}_k\ds{\mu}_k + \frac{1}{n_{\xpred}}\sum_{i=1}^{n_{\xpred}} (\E(x^*_i)^2 + \Var(x_i^*))
    \\&= \ds{\mu}_k^2 - 2\ups{\mu}_k\ds{\mu}_k + \frac{1}{n_{\xpred}}\sum_{i=1}^{n_{\xpred}} (\ups{\mu}_k^2 + \Var(x_i^*))
    \\&= \ups{\mu}_k^2 + \ds{\mu}_k^2 - 2\ups{\mu}_k\ds{\mu}_k + \Var(x_i^*)
    = (\ups{\mu}_k - \ds{\mu}_k)^2 + \Var(x_i^*),
\end{align*}
and
\begin{align*}
    \Var(x_i^*) &= \Var(\E(x_i^* | \ups{\sigma}^2)) + \E(\Var(x_i^* | \ups{\sigma}^2))
    = \E(\ups{\sigma}^2).
\end{align*}
Putting these together, 
\begin{equation}
    \E(\sigma_*^2) \to \E(\ups{\sigma}^2) + (\ups{\mu}_k - \ds{\mu}_k)^2,
    \label{eq:gaussian-known-mean-mean-correction}
\end{equation}
as $n_{\xpred} \to \infty$,
so mixing the downstream posteriors can only recover the data provider's posterior 
when both parties have equal known means.

We verify this with a simulation shown in Figure~\ref{fig:gaussian-known-mean-posterior}.
Both the data provider and analyst use the Gaussian with unknown variance and known 
mean as their model. Otherwise, the setting is identical with the other Gaussian 
examples. When both parties have the correct known mean, $\pap_n(\sigma^2)$ converges 
as expected, but when the analyst has an incorrect known mean, 
$\pap_n(\sigma^2)$ converges to neither party's posterior. However, after applying 
the mean correction from \eqref{eq:gaussian-known-mean-mean-correction},
$\pap_n(\sigma^2)$ appears to have the same variance and shape as the data provider's 
posterior.

\begin{figure}
    \centering
    \includegraphics[width=\textwidth]{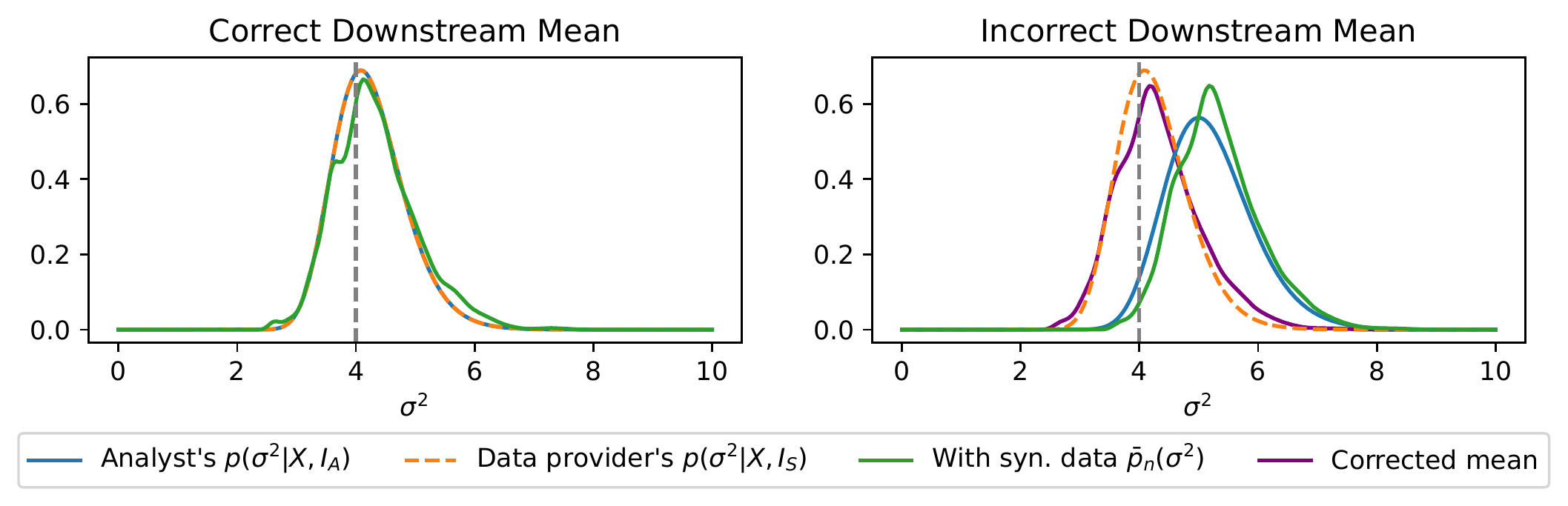}
    \caption{Posteriors from estimating a Gaussian variance $\sigma^2$ with 
    known mean. On the right, both the data provider and the analyst use the 
    correct known mean, so $\pap_n(\sigma^2)$ converges as expected by our theory.
    On the right, the analyst mean is incorrect, so the model is not congenial.
    In this case, $\pap_n(\sigma^2)$ does not converge to either the analyst's or the 
    data provider's posterior. After correcting the mean of $\pap_n(\sigma^2)$ as in 
    \eqref{eq:gaussian-known-mean-mean-correction}, it 
    appears to have the same variance and shape as the data provider's posterior.
    The gray line shows the true parameter value. In both 
    panels, $m = 400$ and $\frac{n_{\xpred}}{n_\xreal} = 20$.   
    }
    \label{fig:gaussian-known-mean-posterior}
\end{figure}

\section{Differentially Private Logistic Regression}\label{sec:logistic-regression-example}

Our second example is logistic regression with DP synthetic data. We consider two 
settings used by \citet{raisaNoiseAwareStatisticalInference2023} with frequentist 
logistic regression, and change the downstream task to Bayesian logistic regression.

Under DP, $\obs$ is a noisy summary $\sdp$ of the real data. 
We need synthetic data sampled from the posterior predictive 
$p(\xpred | \sdp)$, which is 
exactly what the NAPSU-MQ algorithm of 
\citet{raisaNoiseAwareStatisticalInference2023} provides. 
In NAPSU-MQ, 
$\sdp$ contains the values of user-selected marginal queries with added Gaussian 
noise. We used the open-source implementation of 
NAPSU-MQ\footnote{\url{https://github.com/DPBayes/NAPSU-MQ-experiments}} by \citet{raisaNoiseAwareStatisticalInference2023}, and 
describe NAPSU-MQ in Section~\ref{sec:differential-privacy}.

\subsection{Toy Data Logistic Regression}\label{sec:logistic-regression-additional}

The first logistic regression setting we consider uses a simple toy data set
of three binary variables, with $n_\xreal = 2000$ samples. The first 
two variables are sampled with independent coinflips, and the third is 
sampled from logistic regression on the other two, with coefficients 
$(1, 0)$. The prior for the downstream logistic regression is 
$\caln(0, 10I)$.

We generate synthetic data with the 
NAPSU-MQ algorithm~\citep{raisaNoiseAwareStatisticalInference2023}, 
instructing the algorithm to generate $m$ synthetic data sets of size 
$n_{\xpred}$. For the privacy bounds, we vary $\epsilon$, and set 
$\delta = n_\xreal^{-2} \toydatadelta$.

Because of the simplicity of this model, it is possible to use the exact 
posterior decomposition \eqref{eq:analyst-posterior-decomposition} as a baseline,
by using $p(\xreal | \sdp)$ instead of $p(\xpred | \sdp)$ to generate 
synthetic data. We give a detailed description 
of this process in Appendix~\ref{sec:toy-data-sampling-exact-posterior}.
We have also included the DP-GLM 
algorithm~\citep{kulkarniDifferentiallyPrivateBayesian2021} that does not 
use synthetic data, and the non-DP posterior from the real data as 
baselines. We obtained the code for DP-GLM from 
\citet{kulkarniDifferentiallyPrivateBayesian2021} upon request.

\subsubsection{Hyperparameters}
For NAPSU-MQ, we use the hyperparameters of 
\citet{raisaNoiseAwareStatisticalInference2023}, except we used 
NUTS~\citep{hoffmanNoUTurnSamplerAdaptively2014}
with 200 warmup samples and 500 kept samples 
for $\epsilon \in \{0.5, 1\}$,
and 1500 kept samples for $\epsilon = 0.1$,
as the posterior sampling algorithm.
The NAPSU-MQ prior is $\caln(0, 10^2I)$, and the marginal queries are the full set of 3-way 
marginals of all three variables.

The hyperparameters of DP-GLM are the $L_2$-norm upper bound $R$ for the covariates 
of the logistic regression, a coefficient norm upper bound $s$, and the 
parameters of the posterior sampling algorithm DP-GLM uses.
We set $R = \sqrt{2}$ so that the covariates do not get clipped, and set $s = 5$ 
after some preliminary runs. The posterior sampling algorithm is 
NUTS~\citep{hoffmanNoUTurnSamplerAdaptively2014} with 1000 warmup 
samples and 1000 kept samples from 4 parallel chains.

\subsubsection{Results}
Figure~\ref{fig:toy-data-posteriors} compares the mixture of posteriors from synthetic data 
$\pap_n(Q)$ from \eqref{eq:approx-analyst-definition} that uses 
$p(Q | \xpred)$, with $n_{\xpred} / n_\xreal = 20$ and $m = 400$
synthetic data sets, to the baselines. $\pap_n(Q)$ is very close to the 
posterior $p(Q | \sdp)$ from \eqref{eq:analyst-posterior-decomposition}. The 
DP-GLM posterior that does not use synthetic data is somewhat wider.

We ran the experiment 100 times and also with $\epsilon = 0.1$ and 
$\epsilon = 0.5$, and plot coverages and widths of credible 
intervals in Figure~\ref{fig:toy-data-aggregate-results}. 
With $\epsilon = 1$ and $\epsilon = 0.5$,
the coverages are accurate and DP-GLM consistently produces wider intervals.
With $\epsilon = 0.1$, the mixture of synthetic data posteriors likely needs 
more and larger synthetic data sets to converge, as it produced wider and 
slightly overconfident intervals for one coefficient.

Figures~\ref{fig:toy-data-hyperparameter-comparison-tvd} and 
\ref{fig:toy-data-hyperparameter-comparison-tvd-transpose} 
look at how $\pap_n(Q)$ converges to $p(Q | \sdp)$ in terms of total 
variation distance as $n_{\xpred}$ and $m$ increase. They show that increasing 
both together decreases the total variation distance, but just increasing one 
leads to a plateau at some point.

\subsubsection{Plotting Details}
The plotted density of DP-GLM in Figure~\ref{fig:toy-data-posteriors} is 
a kernel density estimate from the posterior samples DP-GLM returns.
The non-DP density is a Laplace approximation. Both synthetic data methods 
use Laplace approximations in the downstream analysis, so their posteriors 
are mixtures of these Laplace approximations for each synthetic data set.
This was also used in Figure~\ref{fig:toy-data-hyperparameter-comparison}.

\begin{figure}
    \centering
    \includegraphics[width=\textwidth]{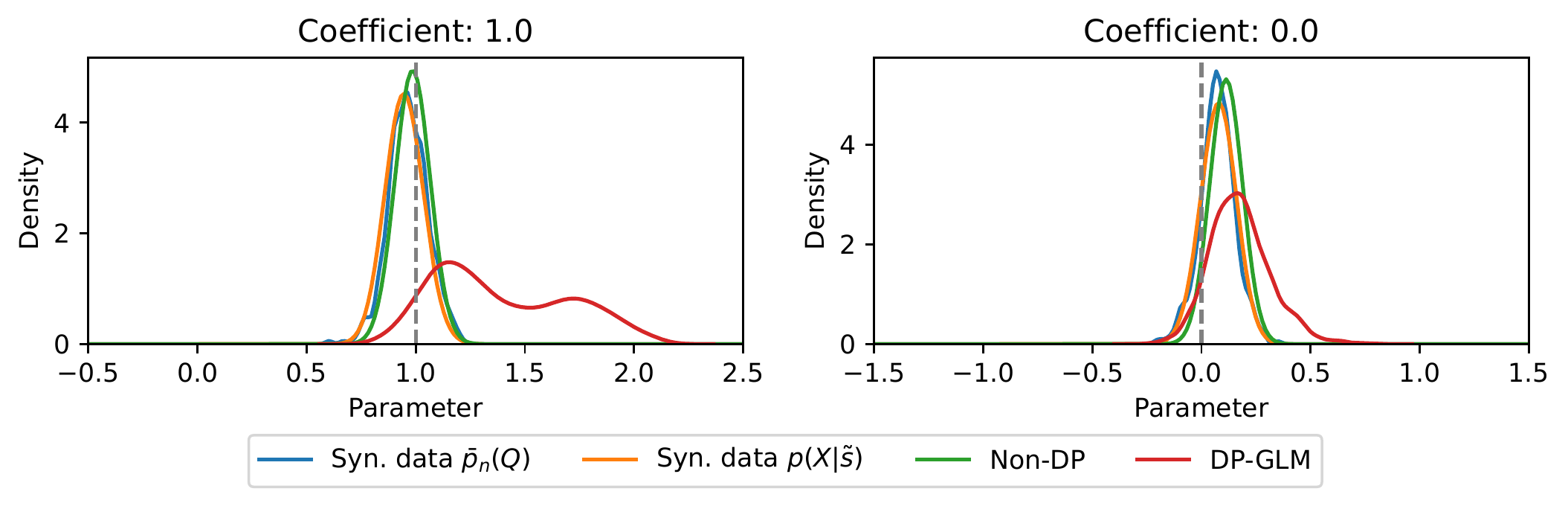}
    \caption{
        Posteriors in the DP logistic regression experiment, where $Q$ are the regression coefficients. The mixture of posteriors from synthetic data, 
        $\pap_n(Q)$, 
        (with $n_{\xpred} / n_\xreal = 20$, $m = 400$) is very close 
        the to the private posterior $p(Q | \sdp)$ computed using \eqref{eq:analyst-posterior-decomposition}. 
        Computing the posterior
        without synthetic data with DP-GLM gives a somewhat wider posterior.
        The true parameter values are highlighted by the grey dashed lines and 
        shown in the panel titles.
        The privacy bounds are $\epsilon = 1$, $\delta = n_\xreal^{-2} 
        \toydatadelta$.
    }
    \label{fig:toy-data-posteriors}
\end{figure}

\begin{figure}
    \centering
    \begin{subfigure}{0.48\textwidth}
        \centering
        \includegraphics[width=\textwidth]{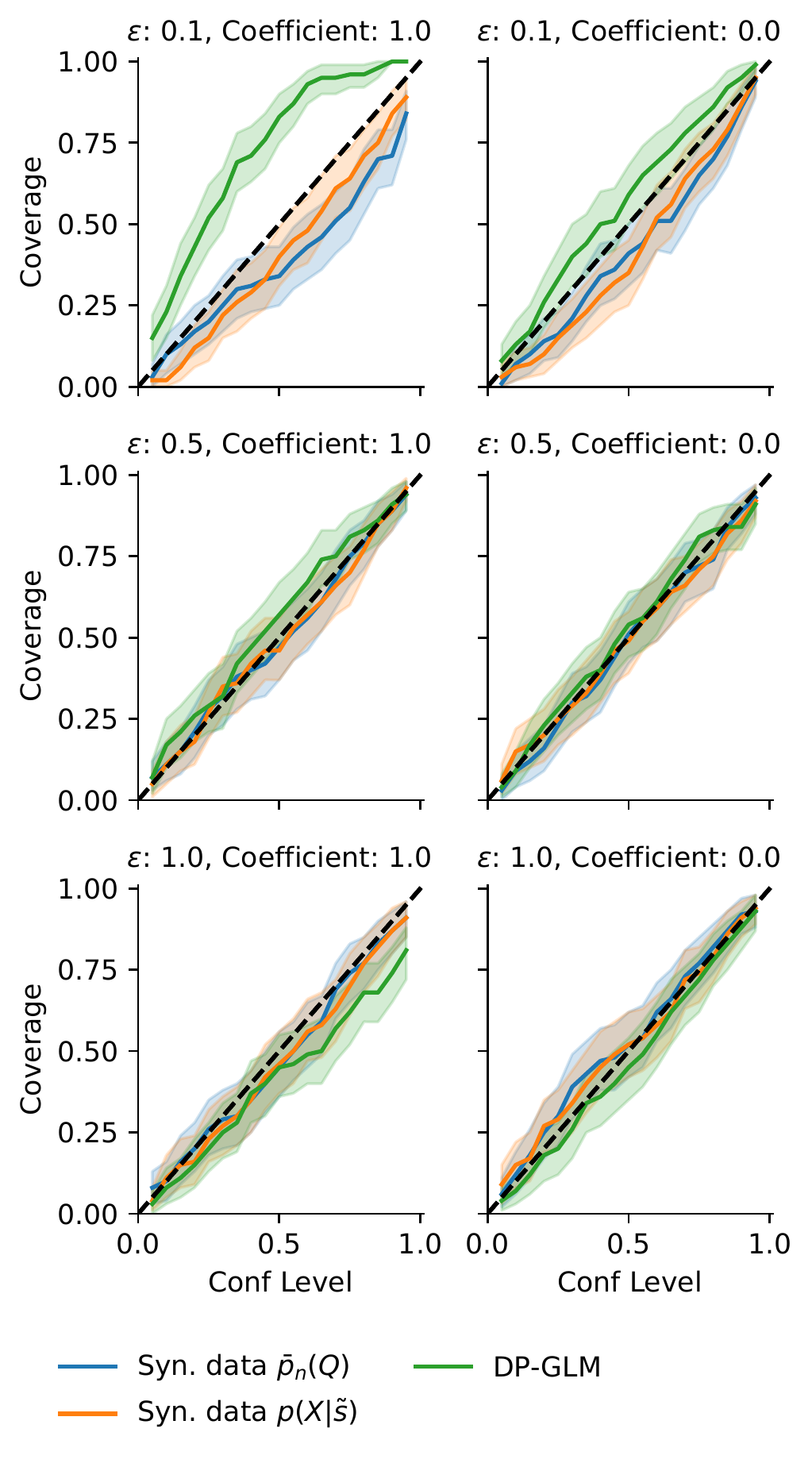}
        \caption{}
        \label{fig:toy-data-coverages}
    \end{subfigure}\quad
    \begin{subfigure}{0.48\textwidth}
        \centering
        \includegraphics[width=\textwidth]{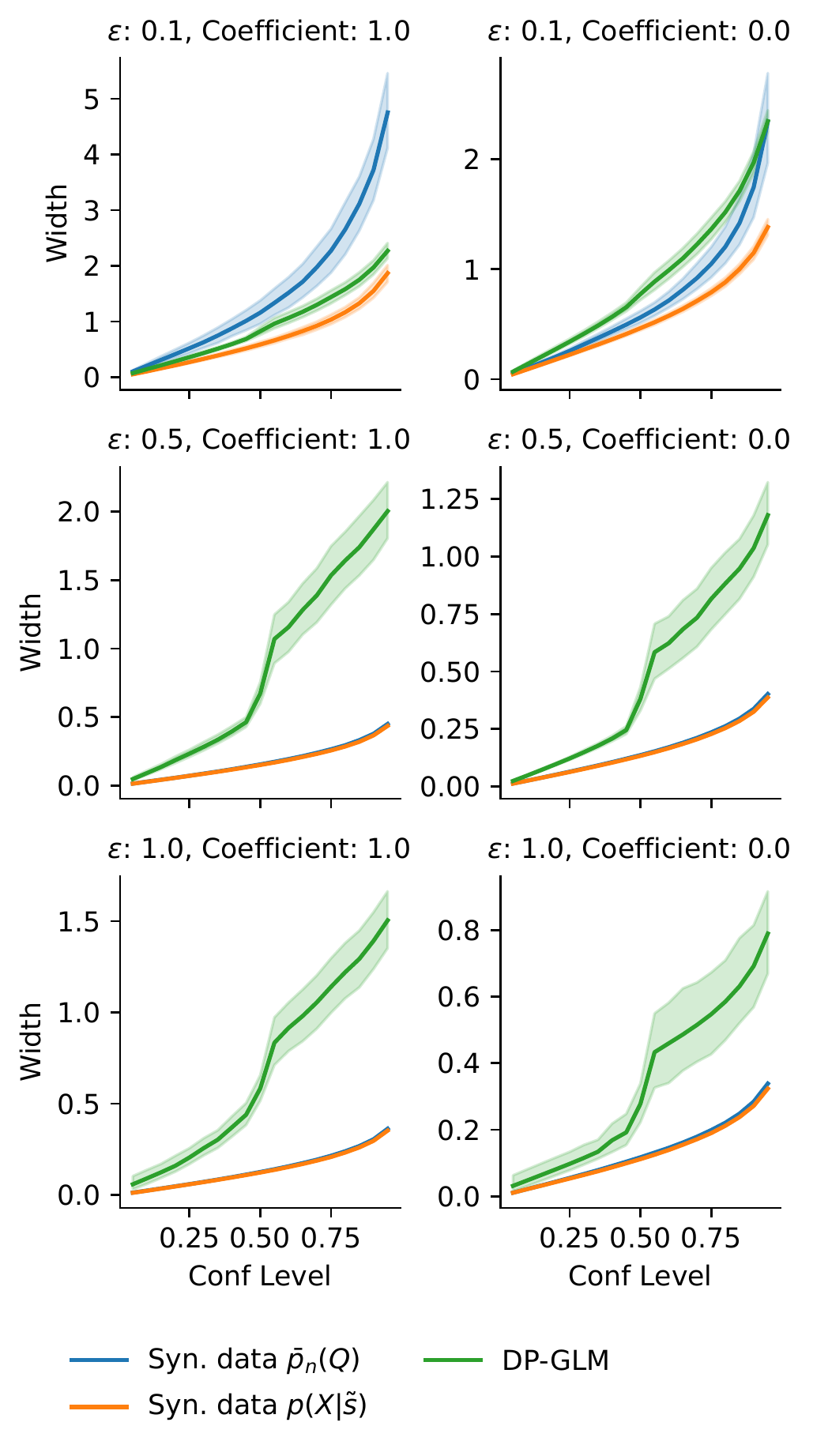}
        \caption{}
        \label{fig:toy-data-widths}
    \end{subfigure}
    \caption{
        (a) Coverages of credible intervals in the toy data experiment.
        The mixture of synthetic data posteriors is accurate, except with 
        $\epsilon = 0.1$, where it may not have converged yet.
        (b) Widths of credible intervals in the toy data experiment.
        DP-GLM produces much wider intervals than other methods, except 
        with $\epsilon = 0.1$.
    }
    \label{fig:toy-data-aggregate-results}
\end{figure}

\begin{figure}
    \centering
    \includegraphics[width=\textwidth]{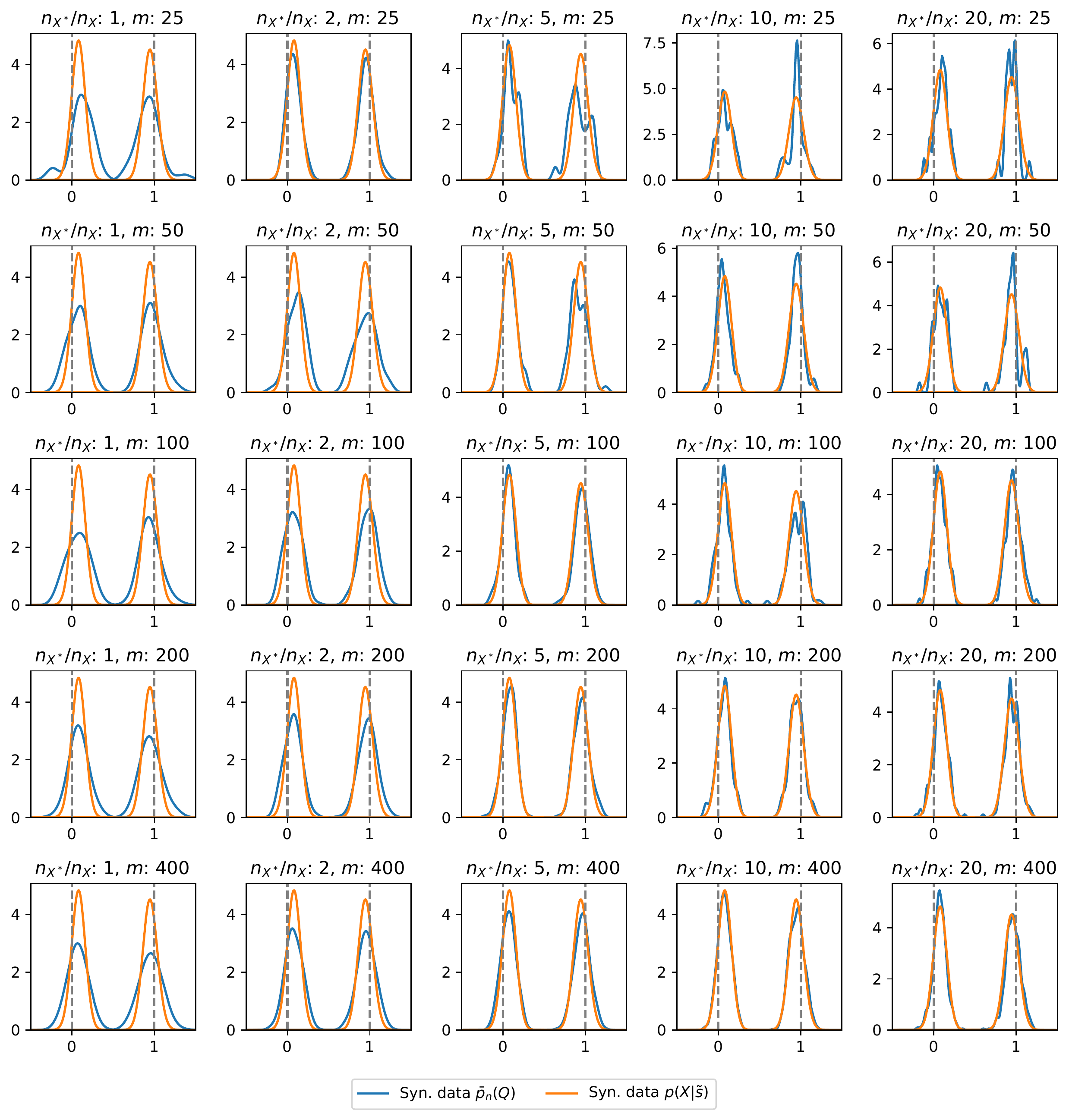}
    \caption{
        Convergence of the mixture of synthetic data posteriors (in blue) with 
        different values of $m$ and $n_{\xpred}$ in the toy data logistic 
        regression experiment.
    }
    \label{fig:toy-data-hyperparameter-comparison}
\end{figure}

\begin{figure}[p]
    \centering
    \includegraphics[width=\textwidth]{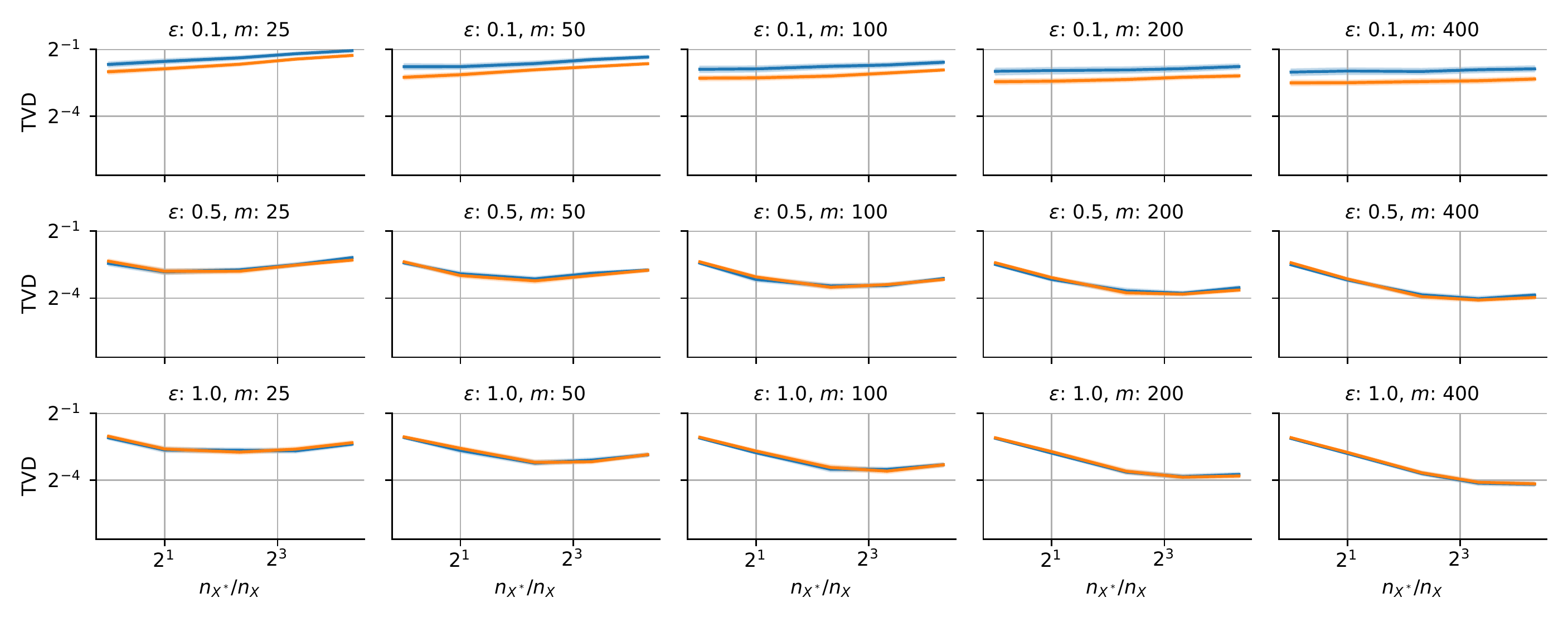}
    \caption{
        Total variation distance (TVD) between $\pap_n(Q)$ and the target 
        $p(Q | \sdp)$ for both 1D marginals (blue and orange)
        in the toy data experiment. For $\epsilon = \{0.5, 1\}$, increasing 
        the size of the synthetic data sets $n_{\xpred}$ decreases the total 
        variation distance at a steady rate,
        until hitting a point where the decrease stops. This point moves 
        further as number of synthetic data sets $m$ increases.
    }
    \label{fig:toy-data-hyperparameter-comparison-tvd}
\end{figure}

\begin{figure}[p]
    \centering
    \includegraphics[width=\textwidth]{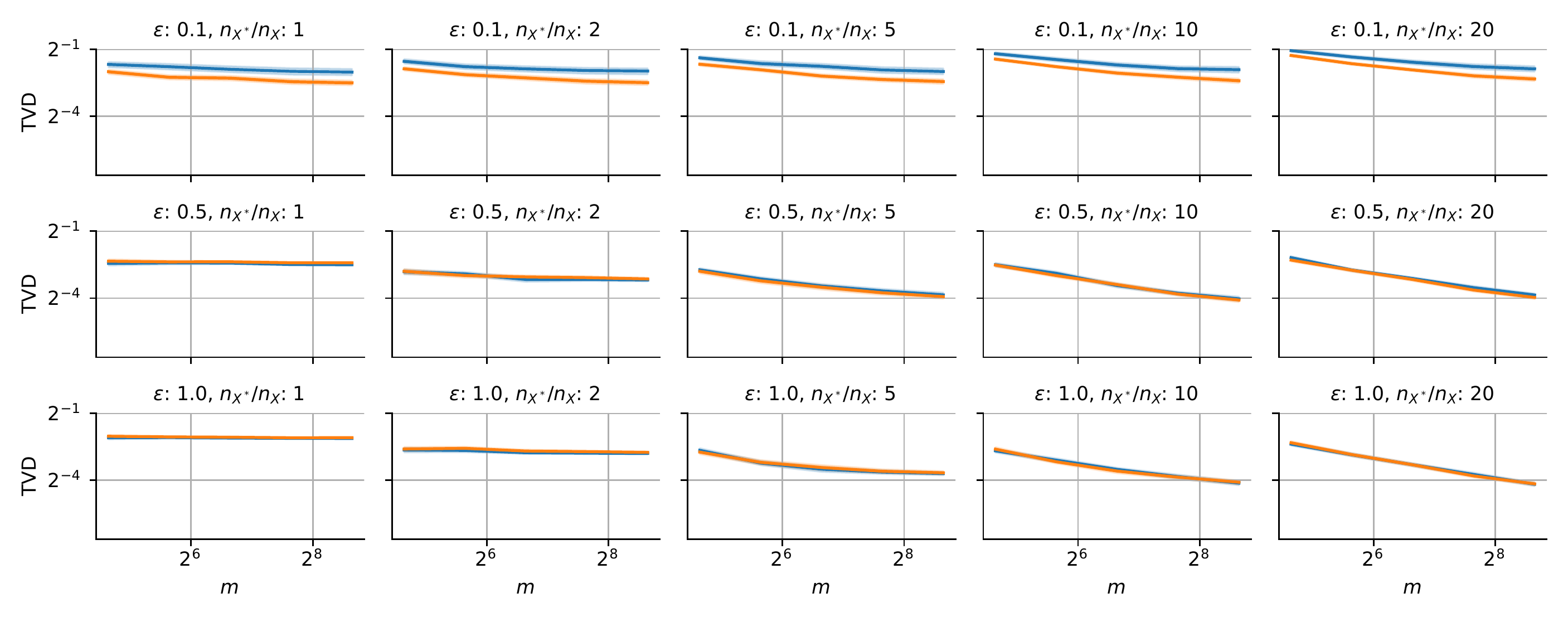}
    \caption{
        Total variation distance (TVD) between $\pap_n(Q)$ and the target 
        $p(Q | \sdp)$ for both 1D marginals (blue and orange)
        in the toy data experiment, with roles of $n_{\xpred}$ and $m$ 
        swapped from Figure~\ref{fig:toy-data-hyperparameter-comparison-tvd}. 
        Increasing $m$ decreases the total variation distance at a rate which
        depends on $n_{\xpred}$ and $\epsilon$.
    }
    \label{fig:toy-data-hyperparameter-comparison-tvd-transpose}
\end{figure}

\subsection{UCI Adult Logistic Regression}\label{sec:adult-experiment}
To test our theory on real data, we used the UCI Adult 
data set~\citep{kohaviAdult1996} setting that was used to test 
NAPSU-MQ~\citep{raisaNoiseAwareStatisticalInference2023}. In this setting, 
the synthetic data set is generated from a subset of 10 columns\footnote{
    age, workclass, education, marital-status, race, gender, capital-gain, 
    capital-loss, hours-per-week and income
}, with the continuous columns age and hours-per-week discretised to 5 
categories, and capital-loss and capital-gain binarised according to whether
they are greater than 0 or not. The income column is already binarised in the 
original data to denote whether it is over $\$50 000$ or not. All rows with 
missing values in the original data set are deleted, which results in 
$n_{\xreal} = 46043$ datapoints. The downstream 
task is logistic regression predicting income using age, race and gender, with 
age converted back to a continuous value by picking the midpoint of each 
category. The reference value for race is ``white'' and for gender is 
``female''. These subsets were originally used to make the runtime of NAPSU-MQ 
manageable, and to make sure that enough relevant information for the downstream 
task can be included in the input queries for 
NAPSU-MQ~\citep{raisaNoiseAwareStatisticalInference2023}.

\subsubsection{Algorithms and Hyperparameters}
The target distribution $p(Q | \obs)$ is not tractable in this setting, so 
we used the non-DP Laplace approximation from the original data set, and the DP 
variational inference (DPVI) 
algorithm~\citep{jalkoDifferentiallyPrivateVariational2017, predigerD3pPythonPackage2022}
as baselines. We also tried running 
DP-GLM~\citep{kulkarniDifferentiallyPrivateBayesian2021}, but we were not 
able to get useful results out of it in this setting. We have also included 
the Gaussian approximation to the mixture of synthetic data posteriors 
discussed in Section~\ref{sec:gaussian-posterior-approximation}, which is 
called ``with variance correction'' in the figures.

The prior for the downstream Bayesian logistic regression is 
$\caln(0, 10I)$, i.i.d. for each coefficient. The privacy parameters are 
$\epsilon \in \{0.25, 0.5, 1\}$, and $\delta = n^{-2} \adultdelta$.
We repeat the experiment 20 times.

The hyperparameters, prior, and selected queries for NAPSU-MQ are the same as in 
the original paper~\citep{raisaNoiseAwareStatisticalInference2023}. 
The synthetic data set size and number are $n_{\xpred} / n_{\xreal} = 10$,
$m = 100$.

DPVI runs DP-SGD~\citep{rajkumarDifferentiallyPrivateStochastic2012, songStochasticGradientDescent2013,abadiDeepLearningDifferential2016}, 
specifically DP-Adam, under the hood, so 
it inherits the clip bound, learning rate, number of iterations, and 
subsampling (without replacement)
ratio hyperparameters from DP-SGD. We tuned these with 
the Optuna library~\citep{akibaOptunaNextgenerationHyperparameter2019a},
using the bounds $[0.1, 50]$ for the clip bound, $[10^{-4}, 10^{-1}]$ for the 
learning rate, $[10^{4}, 10^5]$ for the number of iterations and $[0.001, 1]$ for the 
subsampling ratio. We used the distance of the DPVI posterior mean from the non-DP
real data Laplace approximation as the optimisation criterion. We also tried using 
KL divergence, which gave hyperparameters that produced much wider posteriors.
We used 100 trials for the tuning, and repeated it independently for all 
values of $\epsilon$. The privacy cost of the hyperparameter tuning is 
not reflected in the final results. As the variational posterior, we used a
mean-field Gaussian.

\subsubsection{Results}

Figure~\ref{fig:adult-posterior} compares posteriors from one of the 20 runs
with $\epsilon = 1$. The mixture of synthetic data posteriors $\pap_n(Q)$
is fairly close to the non-DP posterior from the real data set, with the 
exception of two coefficients. The Gaussian approximation to $\pap_n(Q)$ from 
Section~\ref{sec:gaussian-posterior-approximation} is very close to $\pap_n(Q)$.
The posteriors from DPVI are close to non-DP posterior, but for some coefficients,
they are too narrow to overlap the non-DP posterior.

The logistic regression coefficients for which $\pap_n(Q)$ does not work well correspond 
to the two races 
with the smallest number of people in the original data set. These posteriors are 
very wide due to the fact that NAPSU-MQ adds noise uniformly to all queries, 
which means that the queries with small values, corresponding to minority groups 
in the data, get relatively larger amounts of noise.

Figure~\ref{fig:adult-coverage} shows credible interval coverages from the 
Adult experiment, computed from 20 runs. $\pap_n(Q)$ does not achieve perfect 
coverages, as some information is lost due to not running NAPSU-MQ with all 
marginal queries. The coverages are still much better than DPVI.

Figure~\ref{fig:adult-widths} shows the widths of credible intervals 
from the same 20 runs. DPVI produces narrower posteriors than $\pap_n(Q)$,
but the width of $\pap_n(Q)$ posteriors drops as $\epsilon$ increases,
reflecting the reduced uncertainty from DP, which is not the case for DPVI.

In summary, while DPVI is able to find the posterior mean fairly well, it fails
to accurately reflect the additional uncertainty from DP, making the posteriors
overconfident, which would lead to spurious findings if applied in practice.
In contrast, $\pap_n(Q)$ accounts for the DP noise very well, at the cost of 
producing very wide posteriors for the coefficients with a large amount of noise.
Estimating uncertainty reliably is much more important than producing a narrow 
uncertainty estimate: a very wide posterior containing the correct value signals uncertainty, 
but a narrow posterior in the wrong place is confidently incorrect.

\subsubsection{Plotting Details}
The plotted densities for the non-DP posterior and the mixture of synthetic 
data posteriors use Laplace approximations like in the toy data experiment.
The posterior from DPVI is a multivariate Gaussian, which is plotted as is.

\begin{figure}
    \centering
    \includegraphics[width=\textwidth]{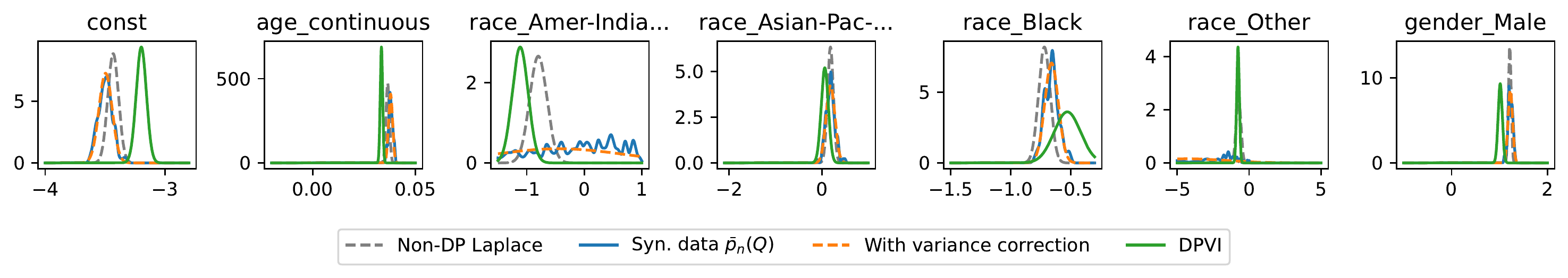}
    \caption{Posteriors from one run on the Adult experiment with $\epsilon = 1$.}
    \label{fig:adult-posterior}
\end{figure}

\begin{figure}
    \centering
    \includegraphics[width=\textwidth]{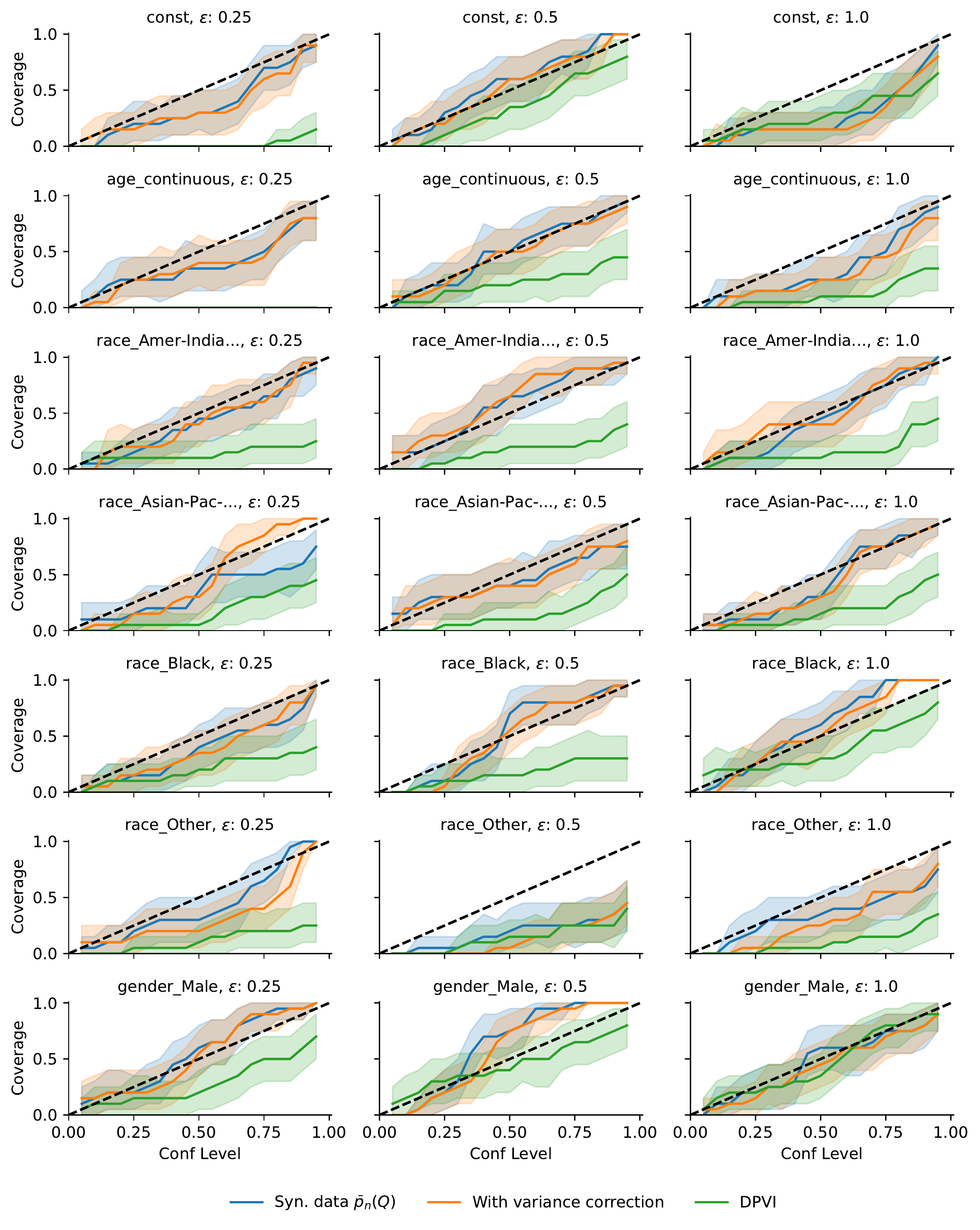}
    \caption{Credible interval coverages on the Adult experiment.}
    \label{fig:adult-coverage}
\end{figure}

\begin{figure}
    \centering
    \includegraphics[width=\textwidth]{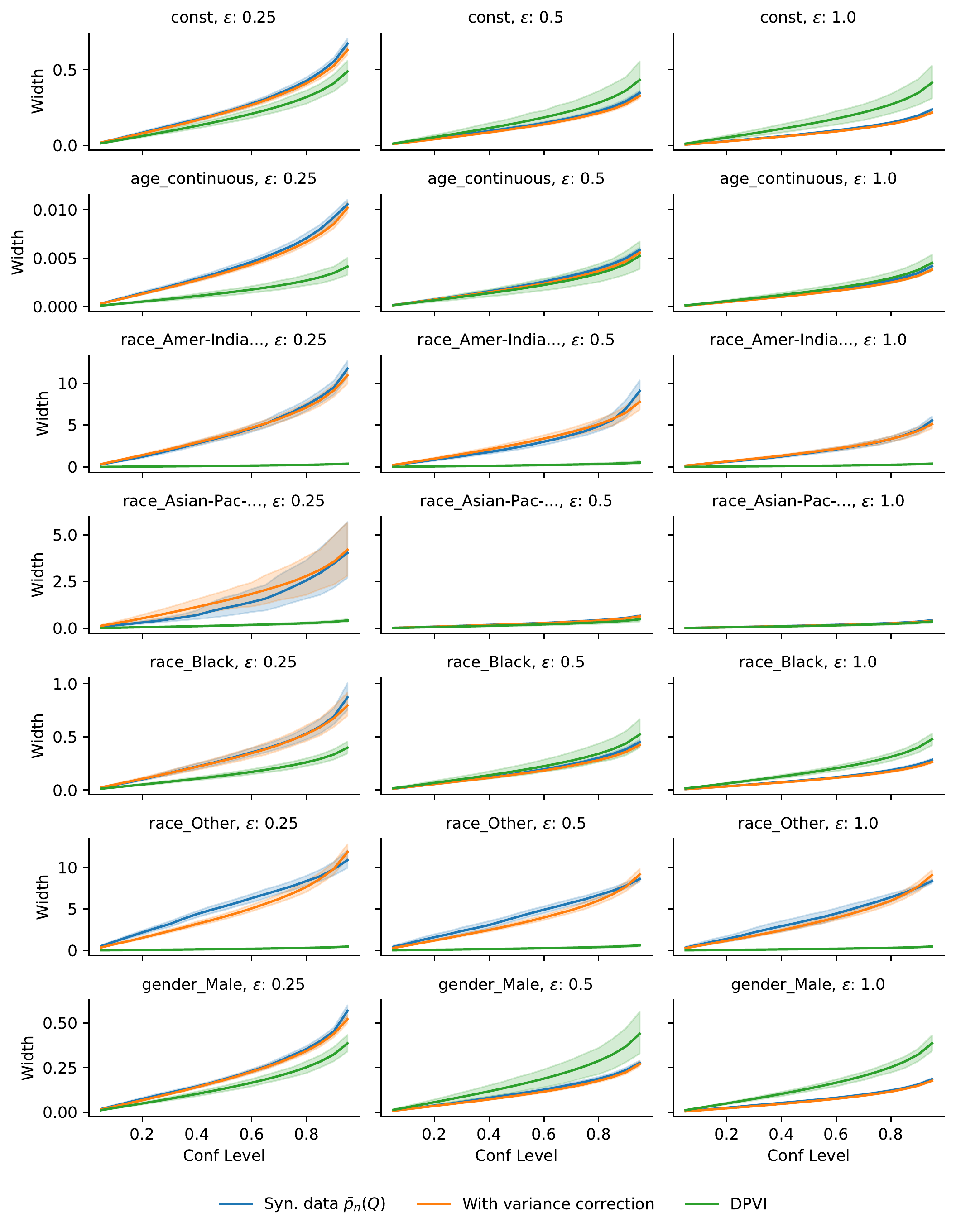}
    \caption{Credible interval widths on the Adult experiment.}
    \label{fig:adult-widths}
\end{figure}

\section{Discussion}
Synthetic data are often considered as a substitute for real data that
are sensitive.
Since the data generation process is based on having access to $Z$, one might
ask why is the synthetic data needed in first place. Why cannot we simply perform
the downstream posterior analysis directly using $Z$? Our analysis allows $Z$ to 
be an arbitrary, even noisy, representation of the data, and it might be difficult
for the analyst to place a model for such generative process for $Q$. In most 
applications, the analyst does have a model for $Q$ arising from the data. Therefore
using the synthetic data as a proxy for the $Z$ allows the analyst to use existing 
models and inference methods to perform the analysis.

\subsection{Limitations}
A clear limitation of mixing posteriors from multiple synthetic data sets is the 
computational cost of analysing many 
large synthetic data sets. This may be substantial for more complex Bayesian 
downstream models, where even a single analysis can be computationally 
expensive. However, the separate analyses can be run in parallel. We also 
expect that the information gained from sampling the posteriors from a few 
synthetic data sets could be used to speed up sampling the others, for 
example by using importance sampling, as they 
likely won't be too far from the sampled ones.

Under DP, we need noise-aware synthetic data generation, which limits 
the settings in which the method can currently be applied. However, if new 
noise-aware methods are developed in the future, the method can 
immediately be used with them.

Condition~\ref{cond:prior-doesnt-matter} limits the applicability of our theory
to downstream analyses where the prior's influence vanishes as the sample size 
grows. This does not always happen for some models, such as some 
infinite-dimensional models, models where the number of parameters increases 
with data set size, and models with a support that heavily depends on the 
parameters. The method also requires congeniality, which 
basically requires the analyst's prior to be compatible with the data provider's. 
Our Gaussian examples in Section~\ref{sec:gauss-example} show that it is sometimes 
possible to recover useful 
inferences even without congeniality. This is not always the case, so an important 
direction for future research is separating these two cases, and 
finding out what can be done in the latter case.

\subsection{Conclusion}
We considered the problem of consistent Bayesian inference 
using multiple, potentially DP, synthetic data sets, and studied an 
inference method 
that mixes the posteriors from multiple large synthetic data sets while re-using 
existing analysis methods designed for real data.
We proved, under congeniality and the general and well-understood regularity conditions of the 
Bernstein--von Mises theorem, that the method is asymptotically exact as 
the sizes of the synthetic data sets grow. 
We studied the method in two examples: non-private Gaussian mean or variance
estimation and 
DP logistic regression. In the former, we were able to use the analytically tractable
structure of the setting to derive additional properties of the method, in particular 
examining what can happen without congeniality.
In both settings, we experimentally validated our theory, and showed that the method works 
in practice. When examining what can go wrong when our assumptions are not met, we found 
that the method can still give sensible results in some cases, but not all, showing that 
the method should be applied with care.
This greatly expands the understanding of Bayesian inference from synthetic data,
filling a major gap in the synthetic data analysis literature.

\acks{
    This work was supported by the Research Council of Finland 
    (Flagship programme: Finnish Center for Artificial Intelligence, 
    FCAI and Grant 356499), the Strategic Research Council 
    at the Research Council of Finland (Grant 358247)
    as well as the European Union (Project
    101070617). Views and opinions expressed are however
    those of the author(s) only and do not necessarily reflect
    those of the European Union or the European Commission. Neither the European 
    Union nor the granting authority can be held responsible for them.
    The authors wish to thank the Finnish Computing Competence 
    Infrastructure (FCCI) for supporting this project with 
    computational and data storage resources.
    We thank Tejas Kulkarni for providing the DP-GLM code.
}

\newpage

\appendix

\section{Additional Background}\label{sec:additional-background}

In this section, we collected background material we use in the rest of the Appendix.
\subsection{Total Variation Distance Properties}\label{sec:total-variation-distance-properties}
Recall the definition of total variation distance:
\definitiontotalvariationdistance*

\begin{lemma}[\citealp{kelbertSurveyDistancesMost2023}]\label{thm:total-variation-distance-properties}
    Properties of total variation distance:
    \begin{enumerate}
        \item For probability densities $p_1$ and $p_2$,
        \begin{equation*}
            \TV(p_1, p_2) = \frac{1}{2}\int |p_1(x) - p_2(x)|\dx x.
        \end{equation*}
        \item Total variation distance is a metric.
        \item Pinsker's inequality: 
        for distributions $P_1$ and $P_2$,
        \begin{equation*}
            \TV(P_1, P_2) \leq \sqrt{\frac{1}{2}\KL{P_1}{P_2}}.
        \end{equation*}
        \item Invariance to bijections: if $f$ is a bijection and $P_1$ and 
        $P_2$ are random variables,
        \begin{equation*}
            \TV(f(P_1), f(P_2)) = \TV(P_1, P_2).
        \end{equation*}
    \end{enumerate}
\end{lemma}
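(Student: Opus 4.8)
The plan is to verify the four properties one at a time, working directly from the supremum definition in Definition~\ref{def:total-variation-distance}. Properties~1, 2, and~4 are essentially bookkeeping with that definition, whereas Pinsker's inequality in property~3 carries the genuine content. For property~1, I would exhibit the optimal set $A^\ast = \{x : p_1(x) > p_2(x)\}$. For any measurable $A$ we have $\Pr(P_1 \in A) - \Pr(P_2 \in A) = \int_A (p_1 - p_2)\dx x$, and this is maximised over $A$ by retaining only the region where the integrand is positive, namely $A^\ast$. Since $\int (p_1 - p_2)\dx x = 0$, the positive and negative parts of $p_1 - p_2$ carry equal mass, so $\int_{A^\ast}(p_1 - p_2)\dx x = \tfrac12\int |p_1 - p_2|\dx x$; the same magnitude is attained with the roles reversed on $(A^\ast)^c$, which both supplies the outer absolute value and confirms that the supremum is achieved.

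For property~2, non-negativity and symmetry are immediate from the definition, and identity of indiscernibles holds because $\TV = 0$ forces $\Pr(P_1\in A) = \Pr(P_2\in A)$ for every measurable $A$, i.e.\ the two measures coincide. The triangle inequality I would obtain by inserting $\Pr(P_3\in A)$: for each fixed $A$, $|\Pr(P_1\in A) - \Pr(P_3\in A)| \le |\Pr(P_1\in A) - \Pr(P_2\in A)| + |\Pr(P_2\in A) - \Pr(P_3\in A)| \le \TV(P_1,P_2) + \TV(P_2,P_3)$, and taking the supremum over $A$ on the left preserves the bound. For property~4, if $f$ is a bijection then $\Pr(f(P_i)\in A) = \Pr(P_i \in f^{-1}(A))$, and as $A$ ranges over all measurable sets so does $B = f^{-1}(A)$; relabelling the supremum then gives $\TV(f(P_1),f(P_2)) = \TV(P_1,P_2)$.

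The real work is Pinsker's inequality (property~3), which I would prove by reduction to the two-point case. First I would establish the scalar inequality $2(p-q)^2 \le \KL{\mathrm{Ber}(p)}{\mathrm{Ber}(q)}$ for $p,q\in[0,1]$: fixing $q$ and setting $g(p) = \KL{\mathrm{Ber}(p)}{\mathrm{Ber}(q)} - 2(p-q)^2$, one checks that $g(q) = g'(q) = 0$ and $g''(p) = \tfrac{1}{p(1-p)} - 4 \ge 0$ since $p(1-p)\le\tfrac14$, so $g$ is convex with minimum value $0$. Then, for general $P_1,P_2$, I would coarsen the space into the binary partition $\{A^\ast,(A^\ast)^c\}$ from property~1; writing $p = \Pr(P_1\in A^\ast)$ and $q = \Pr(P_2\in A^\ast)$, property~1 gives $\TV(P_1,P_2) = |p-q| = \TV(\mathrm{Ber}(p),\mathrm{Ber}(q))$, while the log-sum inequality (equivalently, the data-processing inequality for KL under this coarsening) gives $\KL{\mathrm{Ber}(p)}{\mathrm{Ber}(q)} \le \KL{P_1}{P_2}$. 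Chaining these yields $2\,\TV(P_1,P_2)^2 \le \KL{P_1}{P_2}$, which is the claim after taking square roots.

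The main obstacle is precisely this property~3: the two-point scalar inequality together with the coarsening step is where all of Pinsker's subtlety lives, since it requires both the calculus estimate and the monotonicity of KL under the binary partition via the log-sum inequality. Everything else reduces to manipulating the defining supremum and poses no real difficulty.
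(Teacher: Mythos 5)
The paper never proves this lemma: it is stated purely as a citation to the survey of \citet{kelbertSurveyDistancesMost2023}, so there is no in-paper argument to compare against. Your proposal supplies a self-contained proof, and it is correct — it is the standard textbook treatment. Properties~1, 2, and~4 follow from manipulating the defining supremum exactly as you do (the extremal set $A^\ast = \{p_1 > p_2\}$, insertion of the middle term $\Pr(P_3 \in A)$ for the triangle inequality, relabelling the supremum under $f$), and your Pinsker proof is the classical two-step argument: the scalar bound $2(p-q)^2 \le \KL{\mathrm{Ber}(p)}{\mathrm{Ber}(q)}$ via $g(q)=g'(q)=0$ and $g''(p) = \tfrac{1}{p(1-p)} - 4 \ge 0$, then coarsening to the binary partition $\{A^\ast, (A^\ast)^c\}$ and invoking the log-sum inequality to get $\KL{\mathrm{Ber}(p)}{\mathrm{Ber}(q)} \le \KL{P_1}{P_2}$; chaining gives the claim. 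What the two routes buy: the paper's citation keeps the exposition short, since these are classical facts that play only a supporting role; your version makes the development self-contained at the cost of about a page.

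Two caveats if you want the argument airtight. First, in property~4 the step ``as $A$ ranges over all measurable sets so does $B = f^{-1}(A)$'' silently requires $f$ to be measurable with measurable inverse; for a bare set-theoretic bijection, $f(P_i)$ need not even be a random variable, and only the inequality $\TV(f(P_1), f(P_2)) \le \TV(P_1, P_2)$ would follow from measurability of $f$ alone. This hypothesis is implicit in the lemma as stated, so it is a gap you share with the source rather than one you introduced. Second, property~1 and your coarsening step presuppose densities; for arbitrary distributions one should fix a common dominating measure such as $(P_1 + P_2)/2$, and note that Pinsker's inequality is vacuous when $P_1$ is not absolutely continuous with respect to $P_2$, since the KL divergence is then infinite. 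Relatedly, your identity-of-indiscernibles argument shows TV is a metric on the space of \emph{distributions}; on random variables it is only a pseudometric, which is all the paper ever needs.
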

We also occasionally write $\TV(p_1, p_2)$ for probability densities 
$p_1$ and $p_2$ as 
\begin{equation*}
    \TV(p_1, p_2) = \sup_h \left|\int h(x)p_1(x)\dx x - \int h(x)p_2(x)\dx x\right|
\end{equation*}
where $h$ is an indicator function of some measurable set.

\subsection{Bernstein--von Mises Theorem Regularity Conditions}\label{sec:bvm-regularity-conditions}

The version of the Bernstein--von Mises theorem we use is from 
\citet{vandervaartAsymptoticStatistics1998}. To state the regularity 
conditions, we need two definitions:
\begin{definition}
    A parametric probability density $p_Q$ is differentiable in quadratic mean 
    at $Q_0$ if there exists a measurable vector-valued function 
    $\dot{\ell}_{Q_0}$ such that, as $Q \to Q_0$,
    \begin{equation*}
        \int\left(\sqrt{p_Q(x)} - \sqrt{p_{Q_0}(x)}
        - \frac{1}{2}(Q - Q_0)^T \dot{\ell}_{Q_0}(x)\sqrt{p_{Q_0}(x)}\right)^2\dx x
        = o(||Q - Q_0||_2^2).
    \end{equation*}
\end{definition}

\begin{definition}
    A randomised test is a function $\phi\colon \calx \to [0, 1]$.
\end{definition}
The interepretation of $\phi(X)$ is the probability of 
rejecting some null hypothesis after observing data $X$.

Now we can state the regularity conditions of Theorem~\ref{thm:bvm}:
\begin{condition}[\citealp{vandervaartAsymptoticStatistics1998}]\label{cond:bvm}
    For true parameter value $Q_0$ and observed data $X_n$:
    \begin{enumerate}
        \item The datapoints of $X_n$ are i.i.d.
        \item The likelihood $p(x | Q)$ for a single datapoint $x$ is differentiable in quadratic mean at $Q_0$.
        \item The Fisher information matrix of $p(x | Q)$ is nonsingular at $Q_0$.
        \item For every $\beta > 0$, there exists a 
        sequence of randomised tests $\phi_n$ such that 
        \begin{equation*}
            p(X_n | Q_0) \phi_n(X_n) \to 0, \quad 
            \sup_{||Q - Q_0||_2 \geq \beta} p(X_n | Q) (1 - \phi_n(X_n)) \to 0.
        \end{equation*}
        \item The prior $p(Q)$ is absolutely continuous (as a measure) in a 
        neighbourhood of $Q_0$ with a continuous positive density at $Q_0$.
    \end{enumerate}
\end{condition}

\subsection{Bayesian Inference with Gaussian Models}\label{sec:bayes-gauss-background}
In this section, we collect well-known results on Bayesian inference 
of a Gaussian mean. See \citet{gelmanBayesianDataAnalysis2014} for 
proofs.

\subsubsection{Scaled Inverse-Chi-Square Distribution} This parameterisation of 
the inverse gamma distribution is convenient in this setting.
\begin{equation*}
    \invchisq(\nu, s^2) =
    \text{Inv-Gamma}\left(\alpha = \frac{\nu}{2}, \beta = \frac{\nu}{2}s^2\right).
\end{equation*}
If $\theta \sim \invchisq(\nu, s^2)$, $\theta > 0$,
\begin{align*}
    p(\theta) &= \frac{(\frac{\nu}{2})^{\frac{\nu}{2}}}{\Gamma(\frac{\nu}{2})}
    s^\nu \theta^{-(\frac{\nu}{2} + 1)} e^{-\frac{\nu s^2}{2\theta}} \\
    \E(\theta) &= \frac{\nu}{\nu - 2}s^2, \quad \nu > 2 \\ 
    \Var(\theta) &= \frac{2\nu^2}{(\nu - 2)^2(\nu - 4)}s^4, \quad \nu > 4.
\end{align*}

\subsubsection{Gaussian Model with Known Variance}
When the variance of the data is known to be $\sigma_k^2$, and only the mean 
is unknown, the conjugate prior is another Gaussian, and we get the following 
inference problem:
\begin{align*}
    \mu &\sim \caln(\mu_0, \sigma_0^2) \\
    x_i | \mu &\sim \caln(\mu, \sigma_k^2).
\end{align*}
The posterior with $n$ datapoints with sample mean $\bar{X}$ is:
\begin{align*}
    \mu | X &\sim \caln(\mu_n, \sigma_n^2) \\ 
    \mu_n &= \frac{\frac{1}{\sigma_0^2}\mu_0 + \frac{n}{\sigma^2_k}\bar{X}}
    {\frac{1}{\sigma_0^2} + \frac{n}{\sigma^2_k}} \\
    \frac{1}{\sigma_n^2} &= \frac{1}{\sigma_0^2} + \frac{n}{\sigma_k^2}.
\end{align*}

\subsubsection{Gaussian Model with Unknown Variance}
When the variance of the data is also unknown, the conjugate prior is a 
inverse-chi-squared for the variance, and Gaussian for the mean, which gives the 
following inference problem:
\begin{align*}
    \sigma^2 &\sim \invchisq(\nu_0, \sigma_0^2) \\
    \mu | \sigma^2 &\sim \caln\left(\mu_0, \frac{\sigma^2}{\kappa_0}\right) \\
    x_i | \mu, \sigma^2 &\sim \caln(\mu, \sigma^2).
\end{align*}
The joint posterior of $\mu$ and $\sigma^2$ for $n$ datapoints is:
\begin{align*}
    \sigma^2 | X &\sim \invchisq(\nu_n, \sigma_n^2) \\
    \mu | \sigma^2, X &\sim \caln\left(\mu_n, \frac{\sigma^2}{\kappa_n}\right) \\
\end{align*}
with
\begin{align*}
    \bar{X} &= \frac{1}{n}\sum_{i=1}^n x_i \\
    s^2 &= \frac{1}{n - 1}\sum_{i=1}^n (x_i - \bar{X})^2 \\
    \mu_n &= \frac{\kappa_0}{\kappa_0 + n}\mu_0 + \frac{n}{\kappa_0 + n}\bar{X} \\
    \kappa_n &= \kappa_0 + n \\ 
    \nu_n &= \nu_0 + n \\ 
    \nu_n\sigma_n^2 &= \nu_0\sigma^2_0 + (n - 1)s^2 
    + \frac{\kappa_0n}{\kappa_0 + n} (\bar{X} - \mu_0)^2.
\end{align*}
The marginal posterior of $\mu$ is 
\begin{equation*}
    \mu | X \sim t_{\nu_n}\left(\mu_n, \frac{\sigma_n^2}{\kappa_n}\right).
\end{equation*}

\section{Missing Proofs}\label{sec:missing-proofs}

This sections contains the missing proofs from the main text. The proof of our main 
Theorem~\ref{thm:posterior-approximation} is in Appendix~\ref{sec:missing-proofs-consistency},
and proofs of the convergence rate-related Theorems~\ref{sec:missing-proofs-convergence-rate}
and \ref{thm:gaussian-prior-doesnt-matter-convergence-rate} are in 
Appendix~\ref{thm:posterior-approximation-convergence-rate}.
\subsection{Consistency Proof}\label{sec:missing-proofs-consistency}

For ease of reference, we repeat Theorem~\ref{thm:bvm} and 
Condition~\ref{cond:prior-doesnt-matter}:
\theorembvm*

Recall that $\qex_n \sim p(Q | \obs, \xpred_n)$, and $\qap_n \sim p(Q | \xpred_n)$.
\conditionpriordoesntmatter*

\lemmabvmimpliescondition*
\begin{proof}
    Under Assumption~(1)
    \begin{equation*}
        p(Q | \obs, \xpred_n) \propto p(\xpred_n | Q)p(\obs | Q)p(Q),
    \end{equation*}
    so we can view both $p(Q | \obs, \xpred_n)$ and $p(Q | \xpred_n)$ as the posteriors for the 
    same Bayesian inference problem with observed data $\xpred_n$, and priors
    $p(Q | \obs) \propto p(\obs | Q)p(Q)$ and $p(Q)$, respectively. 
    Due to Condition~\ref{cond:bvm} (5) and Assumption~(2), 
    $p(Q | \obs)$ has an everywhere  positive density. Recall that 
    $\qex_n \sim p(Q | \obs, \xpred_n)$ and 
    $\qap_n \sim p(Q | \xpred_n)$. Now, Theorem~\ref{thm:bvm} gives
    \begin{equation*}
        \TV\big(\sqrt{n}(\qex_n - Q_0), \caln(\mu_n, \Sigma)\big) \convprob 0
    \end{equation*}
    and
    \begin{equation*}
        \TV\big(\sqrt{n}(\qap_n - Q_0), \caln(\mu_n, \Sigma)\big) \convprob 0
    \end{equation*}
    as $n \to \infty$, where $\mu_n, \Sigma$ are equal in the two cases because they do not depend on the prior. The probability is over $\xpred_n \sim p(\xpred_n | Q_0)$.
    Because of Assumption (1), $p(\xpred_n | Q_0) = p(\xpred_n | \obs, Q_0)$, so the convergence
    also holds with probability over $\xpred_n \sim p(\xpred_n | \obs, Q_0)$.
    These hold for any $Q_0$.
    Because the function $f_n(q) = \sqrt{n}(q - Q_0)$ is a bijection and 
    total variation distance is invariant to bijections, Condition~\ref{cond:prior-doesnt-matter}
    holds with $D_n$ being the pushforward distribution $D_n = f_n^{-1} \circ \caln(\mu_n, \Sigma)$,
    with the $Q$ of Condition~\ref{cond:prior-doesnt-matter} being $Q_0$.
    Note that $D_n$ is allowed to depend on $Q$ in 
    Condition~\ref{cond:prior-doesnt-matter} due to the order of 
    quantifiers.
\end{proof}

\begin{lemma}\label{thm:downstream-approximation}
    Under Condition~\ref{cond:prior-doesnt-matter}, 
    \begin{equation*}
        \TV(\qex_n, \qap_n) \convprob 0
    \end{equation*}
    as $n \to \infty$, with the probability over $\xpred_n \sim p(\xpred_n | \obs)$.
\end{lemma}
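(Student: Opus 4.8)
The plan is to split the argument into two stages, matching the two different sampling laws for $\xpred_n$ that appear: the finer conditioning $\xpred_n \sim p(\xpred_n \mid \obs, Q)$ under which Condition~\ref{cond:prior-doesnt-matter} is stated, and the coarser conditioning $\xpred_n \sim p(\xpred_n \mid \obs)$ in which the conclusion is phrased.

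First I would establish the conclusion under the finer conditioning. Since total variation distance is a metric (Lemma~\ref{thm:total-variation-distance-properties}), the triangle inequality gives
\[
    \TV(\qex_n, \qap_n) \leq \TV(\qex_n, D_n) + \TV(\qap_n, D_n),
\]
where $D_n$ is the distribution supplied by Condition~\ref{cond:prior-doesnt-matter}. Both terms on the right converge to zero in probability as $n \to \infty$ when $\xpred_n \sim p(\xpred_n \mid \obs, Q)$, directly by the condition, so the nonnegative left-hand side is squeezed to zero in probability as well. This yields $\TV(\qex_n, \qap_n) \convprob 0$ under $p(\xpred_n \mid \obs, Q)$ for every fixed $Q$.

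Second, I would bridge from conditioning on $(\obs, Q)$ to conditioning on $\obs$ alone. Fixing $\epsilon > 0$, the first stage says that the conditional probabilities $g_n(Q) := \Pr\!\big(\TV(\qex_n, \qap_n) > \epsilon \mid \obs, Q\big)$ satisfy $g_n(Q) \to 0$ pointwise in $Q$. Marginalising $Q$ through the law of total probability,
\[
    \Pr\!\big(\TV(\qex_n, \qap_n) > \epsilon \mid \obs\big) = \int g_n(Q)\, p(Q \mid \obs) \dx Q,
\]
and since $0 \leq g_n(Q) \leq 1$ with the constant $1$ integrable against the probability measure $p(Q \mid \obs)$, the dominated convergence theorem forces this integral to $0$. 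As $\epsilon$ was arbitrary, this is exactly $\TV(\qex_n, \qap_n) \convprob 0$ under $p(\xpred_n \mid \obs)$.

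The first stage is routine, being only the triangle inequality applied to the condition. The main obstacle is the second stage: pointwise-in-$Q$ convergence in probability does not automatically upgrade to convergence in probability under the $Q$-mixture $p(\xpred_n \mid \obs) = \int p(\xpred_n \mid \obs, Q)\, p(Q \mid \obs) \dx Q$. What rescues the argument is that we are pushing \emph{indicator} probabilities through the mixture, and these are uniformly bounded by $1$; this boundedness is precisely the integrable dominating function that dominated convergence requires, so the interchange of limit and integral is justified without any further uniformity assumption on the convergence across $Q$.
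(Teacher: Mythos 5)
Your proposal is correct and follows essentially the same route as the paper's proof: the triangle inequality for total variation distance combined with Condition~\ref{cond:prior-doesnt-matter} handles the conditioning on $(\obs, Q)$, and the law of total probability plus dominated convergence (with the indicator probabilities bounded by $1$ serving as the dominating function) bridges to conditioning on $\obs$ alone. The only difference is presentational: you make the dominating function explicit, which the paper leaves implicit.
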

\begin{proof}
    Total variation distance is a metric, so 
    \begin{equation*}
        \TV\big(\qex_n, \qap_n\big)
        \leq 
        \TV\big(\qex_n, D_n\big)
        + \TV\big(\qap_n, D_n\big).
    \end{equation*}
    Now, by Condition~\ref{cond:prior-doesnt-matter}
    \begin{equation}
        \TV\big(\qex_n, \qap_n\big)
        \convprob 0 \label{eq:thm-downstream-approximation-1}
    \end{equation}
    as $n \to \infty$, with the probability over $\xpred_n \sim p(\xpred_n | \obs, Q)$.

    It remains to show \eqref{eq:thm-downstream-approximation-1} with the probability 
    over $\xpred_n \sim p(\xpred_n | \obs)$ instead of $\xpred_n \sim p(\xpred_n | \obs, Q)$.
    With $\xpred_n \sim p(\xpred_n | \obs)$, for any $\epsilon > 0$,
    \begin{align*}
        \Pr_{\xpred_n | \obs}(\TV(\qex_n, \qap_n) > \epsilon)
        &= \int \Pr_{\xpred_n | \obs, Q}(\TV(\qex_n, \qap_n) > \epsilon)p(Q | \obs) \dx Q.
    \end{align*}
    \eqref{eq:thm-downstream-approximation-1} holds for any $Q$, so
    \begin{equation*}
        \lim_{n\to \infty} \Pr_{\xpred_n| \obs, Q}(\TV(\qex_n, \qap_n) > \epsilon) = 0.
    \end{equation*}
    The dominated convergence theorem then implies that 
    \begin{equation*}
        \lim_{n\to \infty} \Pr_{\xpred_n|\obs}(\TV(\qex_n, \qap_n) > \epsilon) = 0,
    \end{equation*}
    so
    \begin{equation*}
        \TV(\qex_n, \qap_n) \convprob 0 
    \end{equation*}
    as $n \to \infty$, with the probability over $\xpred_n \sim p(\xpred_n | \obs)$.
\end{proof}

\begin{lemma}\label{thm:downstream-approx-to-posterior-approx}
    Let $y_n \sim U_n$ be an arbitrary sequence of continuous random variables 
    and let $S(y_n)$, $T(y_n)$ be continuous random variables that depend on $y_n$.
    Let the density functions of $S(y_n)$, $T(y_n)$ and $U_n$ be
    $f_{S(y_n)}, f_{T(y_n)}$ and $f_{U_n}$, respectively. If
    \begin{equation*}
        \TV(S(y_n), T(y_n)) \convprob 0
    \end{equation*}
    as $n \to \infty$, where the probability is over $y_n \sim U_n$, then 
    \begin{equation*}
        \TV\left(\int f_{S(y_n)}(x) f_{U_n}(y_n) \dx y_n, \int f_{T(y_n)}(x) f_{U_n}(y_n) \dx y_n\right)
        \to 0
    \end{equation*}
    as $n \to \infty$.
\end{lemma}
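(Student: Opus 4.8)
The plan is to bound the total variation distance between the two mixture densities by the \emph{expected} total variation distance between the conditional laws $S(y_n)$ and $T(y_n)$, and then to show that this expectation vanishes. Write $g_{S,n}(x) = \int f_{S(y_n)}(x) f_{U_n}(y_n) \dx y_n$ and $g_{T,n}(x) = \int f_{T(y_n)}(x) f_{U_n}(y_n) \dx y_n$ for the two mixture densities whose distance I must control. The whole point is that mixing commutes well with total variation distance in the direction I need: the distance between mixtures is never larger than the average of the distances between components.

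First I would invoke the integral representation of total variation distance (Lemma~\ref{thm:total-variation-distance-properties}, property~1) to write
\[ \TV(g_{S,n}, g_{T,n}) = \frac{1}{2}\int \left| \int \big(f_{S(y_n)}(x) - f_{T(y_n)}(x)\big) f_{U_n}(y_n) \dx y_n \right| \dx x. \]
Moving the absolute value inside the inner integral by the triangle inequality, and then exchanging the order of the two integrations (justified by Tonelli's theorem, since the integrand is nonnegative), bounds the right-hand side by $\int \TV(S(y_n), T(y_n)) f_{U_n}(y_n) \dx y_n = \E_{y_n \sim U_n}\big[\TV(S(y_n), T(y_n))\big]$, where I recognise the inner $x$-integral (with its factor $\tfrac{1}{2}$) as $\TV(S(y_n), T(y_n))$ via the same formula. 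This reduces the claim to showing that this expected conditional distance tends to $0$.

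For the remaining step I would exploit that total variation distance always lies in $[0,1]$, so that $A_n := \TV(S(y_n), T(y_n))$ is a uniformly bounded (hence uniformly integrable) sequence of random variables of $y_n$. The hypothesis provides $A_n \convprob 0$, and for uniformly bounded sequences convergence in probability upgrades to convergence in mean. Concretely, for any $\epsilon > 0$ I split $\E[A_n] \le \epsilon + \E[A_n \I_{A_n > \epsilon}] \le \epsilon + \Pr(A_n > \epsilon)$ using $A_n \le 1$; letting $n \to \infty$ gives $\limsup_n \E[A_n] \le \epsilon$, and then $\epsilon \to 0$ gives $\E[A_n] \to 0$. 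Combined with the bound of the previous paragraph, this yields $\TV(g_{S,n}, g_{T,n}) \to 0$, which is the assertion.

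The argument is essentially routine once the mixture-to-component bound is in place. The only two points that need care are the measure-theoretic interchange of integrals, which is clean here because nonnegativity lets me appeal to Tonelli without integrability hypotheses, and the passage from convergence in probability to convergence in expectation. The latter is the main obstacle in the sense that it is the only place where the mode of convergence matters, but it is disarmed immediately by the boundedness of total variation distance, so no additional integrability assumption on $U_n$, $S$, or $T$ is required.
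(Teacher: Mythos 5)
Your proof is correct and takes essentially the same approach as the paper's: both arguments bound the total variation distance between the mixtures by the expected componentwise distance $\E_{y_n \sim U_n}\left[\TV(S(y_n), T(y_n))\right]$ via an interchange of integration order, and both then exploit the boundedness of total variation together with convergence in probability through the same $\epsilon$-splitting (your event $\{A_n > \epsilon\}$ plays exactly the role of the paper's bad set $Y_n^C$). The only differences are presentational: you invoke the $L_1$ density formula for total variation and isolate the step that bounded convergence in probability implies convergence in mean, whereas the paper works with the supremum over indicator functions and interleaves the two steps.
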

\begin{proof}
    Let $h$ be an indicator function of $x$ over any measurable set and let $\epsilon > 0$. Then
    \begin{align*}
        &\left|\int h(x) \int f_{S(y_n)}(x) f_{U_n}(y_n) \dx y_n \dx x - \int h(x)\int f_{T(y_n)}(x) f_{U_n}(y_n) \dx y_n \dx x\right|
        \\&= \left|\int h(x) \int f_{U_n}(y_n) \big(f_{S(y_n)}(x) - f_{T(y_n)}(x)\big) \dx y_n \dx x\right|
        \\&= \left|\int f_{U_n}(y_n) \int h(x) \big(f_{S(y_n)}(x) - f_{T(y_n)}(x)\big) \dx x \dx y_n\right|
        \\ &\leq \int f_{U_n}(y_n) \left|\int h(x) \big(f_{S(y_n)}(x) - f_{T(y_n)}(x)\big) \dx x \right| \dx y_n
        \\ &= \int f_{U_n}(y_n) \left|\int h(x) f_{S(y_n)}(x) \dx x - \int h(x) f_{T(y_n)}(x) \dx x \right| \dx y_n.
    \end{align*}
    Because $\TV(S(y_n), T(y_n)) \convprob 0$, for large enough $n$, there is a set $Y_n$ with 
    \begin{equation*}
        \TV(S(y_n), T(y_n)) < \frac{\epsilon}{2}
    \end{equation*}
    for all $y_n\in Y_n$, and $\Pr(y_n\in Y_n^C) < \frac{\epsilon}{2}$. As
    \begin{equation*}
        \TV(S(y_n), T(y_n)) = \sup_{h}\left|\int h(x) f_{S(y_n)}(x) \dx x - \int h(x) f_{T(y_n)}(x) \dx x
        \right| \leq 1,
    \end{equation*}
    now 
    \begin{align*}
        &\int f_{U_n}(y_n) \left|\int h(x) f_{S(y_n)}(x) \dx x - \int h(x) f_{T(y_n)}(x) \dx x \right| \dx y_n
        \\
        \begin{split}
        =\int_{Y_n} f_{U_n}(y_n) \left|\int h(x) f_{S(y_n)}(x) \dx x - \int h(x) f_{T(y_n)}(x) \dx x \right| \dx y_n
        \\+\int_{Y_n^C} f_{U_n}(y_n) \left|\int h(x) f_{S(y_n)}(x) \dx x - \int h(x) f_{T(y_n)}(x) \dx x \right| \dx y_n
        \end{split}
        \\&<\int_{Y_n} f_{U_n}(y_n) \frac{\epsilon}{2} \dx y_n + \int_{Y_n^C} f_{U_n}(y_n) \dx y_n
        \\&< \frac{\epsilon}{2} + \frac{\epsilon}{2}
        \\&= \epsilon
    \end{align*}
    for large enough $n$. Now
    \begin{align*}
        &\TV\left(\int f_{S(y_n)}(x) f_{U_n}(y_n) \dx y, \int f_{T(y_n)}(x) f_{U_n}(y_n) \dx y_n\right)
        \\&=  \sup_{h}\left|\int h(x) \int f_{S(y_n)}(x) f_{U_n}(y_n) \dx y_n \dx x 
        - \int h(x)\int f_{T(y_n)}(x) f_{U_n}(y_n) \dx y_n \dx x\right|
        \\&< \epsilon
    \end{align*}
    for any $\epsilon > 0$ with large enough $n$.
\end{proof}

\theoremposteriorapproximation*
\begin{proof}
    The claim follows from Lemma~\ref{thm:downstream-approx-to-posterior-approx} with 
    $y_n = \xpred_n$, $U_n = p(\xpred_n | \obs)$, $S(y_n) \sim p(Q | \xpred_n)$ and 
    $T(y_n) \sim p(Q | \obs, \xpred_n)$.
    These meet the condition for Lemma~\ref{thm:downstream-approx-to-posterior-approx}
    due to Lemma~\ref{thm:downstream-approximation}.
\end{proof}

\subsection{Convergence Rate}\label{sec:missing-proofs-convergence-rate}

\definitionuniformintegrability*

\begin{lemma}\label{thm:uniform-integrable-dominated}
    If $|X_n| \leq Y_n$ almost surely and $Y_n$ is uniformly 
    integrable, $X_n$ is uniformly integrable.
\end{lemma}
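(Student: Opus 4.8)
The plan is to bound the defining quantity for the uniform integrability of $X_n$ directly by the corresponding quantity for $Y_n$, and then invoke the uniform integrability of $Y_n$. The whole argument is a short chain of inequalities resting on a single set-inclusion observation.

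First I would note that since $0 \le |X_n| \le Y_n$ almost surely, we have $Y_n \ge 0$ almost surely, and moreover the event $\{|X_n| > M\}$ is contained in $\{Y_n > M\}$: whenever $|X_n| > M$, the bound $Y_n \ge |X_n|$ forces $Y_n > M$. Combined with $Y_n \ge 0$, this yields the pointwise domination $Y_n \I_{|X_n| > M} \le Y_n \I_{Y_n > M}$ almost surely.

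Next I would chain the estimates. Using first $|X_n| \le Y_n$ and then the pointwise domination just established,
\begin{equation*}
    \E(|X_n| \I_{|X_n| > M}) \le \E(Y_n \I_{|X_n| > M}) \le \E(Y_n \I_{Y_n > M}).
\end{equation*}
Taking the supremum over $n$ on both ends and letting $M \to \infty$ gives
\begin{equation*}
    \lim_{M \to \infty} \sup_n \E(|X_n| \I_{|X_n| > M}) \le \lim_{M \to \infty} \sup_n \E(Y_n \I_{Y_n > M}) = 0,
\end{equation*}
where the final equality is precisely the hypothesis that $Y_n$ is uniformly integrable. Since the left-hand side is nonnegative, it must equal zero, which is exactly the definition of uniform integrability for $X_n$.

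There is no genuine obstacle in this lemma; the only point requiring a moment of care is the set inclusion $\{|X_n| > M\} \subseteq \{Y_n > M\}$ and the resulting pointwise bound, both of which follow at once from the almost-sure domination $|X_n| \le Y_n$ together with the nonnegativity of $Y_n$. Everything else is monotonicity of the expectation and taking limits.
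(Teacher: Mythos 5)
Your proof is correct and is essentially the paper's own argument: the paper's proof is the single chain of inequalities $0 \leq \lim_{M\to\infty}\sup_n \E(|X_n|\I_{|X_n| > M}) \leq \lim_{M\to\infty}\sup_n \E(Y_n\I_{Y_n > M}) = 0$, which implicitly rests on exactly the set inclusion $\{|X_n| > M\} \subseteq \{Y_n > M\}$ and pointwise domination that you spell out. You have simply made the intermediate justifications explicit; there is no substantive difference.
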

\begin{proof}
    \begin{align*}
        0 \leq \lim_{M\to \infty}\sup_n \E(|X_n|\I_{|X_n| > M})
        \leq \lim_{M\to \infty}\sup_n \E(Y_n\I_{Y_n > M})
        = 0.\qedhere
    \end{align*}
\end{proof}

\begin{lemma}[\citealp{billingsleyProbabilityMeasure1995}, Section 16]\label{thm:uniform-integrable-sum}
    If $X_n$ and $Y_n$ are uniformly integrable, $X_n + Y_n$ is uniformly 
    integrable.
\end{lemma}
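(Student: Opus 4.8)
The plan is to reduce uniform integrability of the sum to the classical two-part characterization: a sequence $W_n$ is uniformly integrable if and only if it is bounded in $L^1$, i.e.\ $\sup_n \E(|W_n|) < \infty$, and \emph{uniformly absolutely continuous}, meaning that for every $\epsilon > 0$ there is a $\delta > 0$ such that $\Pr(A) < \delta$ implies $\sup_n \E(|W_n| \I_A) < \epsilon$. Both halves of this characterization follow quickly from the limit definition in Definition~\ref{def:uniform-integrability}, and both pass cleanly to sums, which is what makes the argument transparent.

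First I would derive the two ingredients for a single uniformly integrable sequence $X_n$. For $L^1$-boundedness, pick $M_0$ with $\sup_n \E(|X_n|\I_{|X_n|>M_0}) \leq 1$; then $\E(|X_n|) \leq M_0 + 1$ for all $n$. For uniform absolute continuity, given $\epsilon > 0$ choose $M$ with $\sup_n \E(|X_n|\I_{|X_n|>M}) < \epsilon/2$ and split $\E(|X_n|\I_A) \leq M\Pr(A) + \epsilon/2$, so that $\delta = \epsilon/(2M)$ works. The identical reasoning applies to $Y_n$.

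Then I would assemble the sum. Boundedness in $L^1$ is immediate from the triangle inequality, $\sup_n \E(|X_n+Y_n|) \leq \sup_n \E(|X_n|) + \sup_n \E(|Y_n|) =: C < \infty$. For the uniform-integrability limit itself, I would bound
\[
\E(|X_n+Y_n|\I_{|X_n+Y_n|>M}) \leq \E(|X_n|\I_{|X_n+Y_n|>M}) + \E(|Y_n|\I_{|X_n+Y_n|>M}),
\]
and control the common event $A_{n,M} = \{|X_n+Y_n|>M\}$ uniformly in $n$ through Markov's inequality, $\Pr(A_{n,M}) \leq C/M$. Given $\epsilon > 0$, take $\delta$ from the uniform absolute continuity of both $X_n$ and $Y_n$, then choose $M$ large enough that $C/M < \delta$; for all such $M$ the event $A_{n,M}$ has probability below $\delta$ for every $n$, so each of the two terms is below $\epsilon/2$. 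Letting $M \to \infty$ yields $\lim_{M\to\infty}\sup_n \E(|X_n+Y_n|\I_{|X_n+Y_n|>M}) = 0$, which is the claim.

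The one subtlety to watch is that the indicator in the definition is tied to the magnitude of the \emph{sum} $|X_n+Y_n|$, not to $|X_n|$ and $|Y_n|$ individually, so a naive use of $|X_n+Y_n| \leq |X_n|+|Y_n|$ under the wrong indicator does not close the argument. The main obstacle is precisely handling these cross terms: the event $\{|X_n+Y_n|>M\}$ need not force either summand to be large, so it is essential to control its probability uniformly in $n$ (via $L^1$-boundedness and Markov) and then feed that bound into the uniform absolute continuity of each sequence. Routing everything through the two-part characterization is what turns an otherwise awkward indicator manipulation into a short argument.
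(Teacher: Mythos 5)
Your proof is correct. Note that the paper does not actually prove this lemma---it is stated purely as a citation to Billingsley (Section 16)---so your write-up supplies a self-contained argument where the paper defers to a reference. Your route, via the classical characterization of uniform integrability as $L^1$-boundedness plus uniform absolute continuity, is the standard textbook treatment, and every step checks out: both halves of the characterization are derived correctly from the limit definition, and the genuine subtlety---that the event $\{|X_n+Y_n|>M\}$ depends on $n$---is handled properly, because uniform absolute continuity quantifies over all small-probability events simultaneously with all indices, and Markov's inequality bounds $\Pr(|X_n+Y_n|>M)$ uniformly in $n$. For comparison, there is a shorter direct argument that avoids the characterization entirely: pointwise,
\begin{equation*}
    |X_n+Y_n|\,\I_{|X_n+Y_n|>2M} \;\le\; 2|X_n|\,\I_{|X_n|>M} + 2|Y_n|\,\I_{|Y_n|>M},
\end{equation*}
since $|X_n+Y_n|>2M$ forces $\max(|X_n|,|Y_n|)>M$ while $|X_n+Y_n|\le 2\max(|X_n|,|Y_n|)$; taking expectations, then $\sup_n$, then $M\to\infty$ gives the claim immediately. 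Your approach costs more space but produces reusable intermediate facts (the $L^1$ bound and uniform absolute continuity); the direct inequality buys brevity at the price of a slightly less transparent pointwise estimate.
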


\conditionconvergencerate*

\theoremposteriorapproximationconvergencerate*
\begin{proof}
    Total variation distance is a metric, so 
    \begin{equation*}
        \frac{1}{R_n}\TV(\qex_n, \qap_n) 
        \leq \frac{1}{R_n} \TV\left(\qex_n, D_n\right)
        + \frac{1}{R_n} \TV\left(\qap_n, D_n\right).
    \end{equation*}
    Now Condition~\ref{cond:convergence-rate} and 
    Lemmas~\ref{thm:uniform-integrable-dominated} and 
    \ref{thm:uniform-integrable-sum} imply that 
    \begin{equation*}
        \frac{1}{R_n}\TV(\qex_n, \qap_n)
    \end{equation*}
    is uniformly integrable with $\xpred_n \sim p(\xpred_n | Z)$.

    Recall that 
    \begin{equation*}
        \frac{1}{R_n}\TV(\qex_n, \qap_n)
        = \frac{1}{R_n}\sup_h\left|\int h(Q)p(Q | \obs, \xpred_n)\dx Q 
        - \int h(Q)p(Q | \xpred_n)\dx Q  \right|
    \end{equation*}
    and 
    \begin{equation*}
        \begin{split}
            &\frac{1}{R_n}\TV(p(Q | \obs), \bar{p}_n(Q))
            \\&= \frac{1}{R_n}\sup_h \left|\int h(Q) \int p(Q | \obs, \xpred_n) p(\xpred_n | \obs) \dx \xpred_n \dx Q
            - \int h(Q) \int p(Q | \xpred_n) p(\xpred_n | \obs) \dx \xpred_n \dx Q\right|,
        \end{split}
    \end{equation*}
    where $h$ is an indicator function of some measurable set.

    For any indicator function $h$, using the start of the proof of Lemma~\ref{thm:downstream-approx-to-posterior-approx}
    gives
    \begin{align*}
        & \frac{1}{R_n}\left|\int h(Q) \int p(Q | \obs, \xpred_n) p(\xpred_n | \obs) \dx \xpred_n \dx Q
        - \int h(Q) \int p(Q | \xpred_n) p(\xpred_n | \obs) \dx \xpred_n \dx Q\right|
        \\&\leq \int p(\xpred_n | \obs) \frac{1}{R_n}\left|\int h(Q)p(Q | \obs, \xpred_n)\dx Q 
        - \int h(Q)p(Q | \xpred_n)\dx Q  \right| \dx \xpred_n
        \\&\leq \int p(\xpred_n | \obs) \frac{1}{R_n}\TV(\qex_n, \qap_n) \dx \xpred_n.
    \end{align*}
    Because $R^{-1}_n\TV(\qex_n, \qap_n)$
    is uniformly integrable when $\xpred_n \sim p(\xpred_n | \obs)$, there exists 
    an $M$ such that for all $n$,
    \begin{equation*}
        \int_{Y_n} p(\xpred_n | \obs) \frac{1}{R_n}\left|\int h(Q) p(Q | \obs, \xpred_n) \dx Q 
        - \int h(Q) p(Q | \xpred_n) \dx Q\right| \dx \xpred_n \leq 1,
    \end{equation*}
    where $Y_n = \{\xpred_n \mid \frac{1}{R_n}\TV(\qex_n, \qap_n) > M\}$.
    Now, for all $n$
    \begin{align*}
        & \frac{1}{R_n}\left|\int h(Q) \int p(Q | \obs, \xpred_n) p(\xpred_n | \obs) \dx \xpred_n \dx Q
        - \int h(Q) \int p(Q | \xpred_n) p(\xpred_n | \obs) \dx \xpred_n \dx Q\right|
        \\&\leq \int p(\xpred_n | \obs) \frac{1}{R_n}\left|\int h(Q)p(Q | \obs, \xpred_n)\dx Q 
        - \int h(Q)p(Q | \xpred_n)\dx Q  \right| \dx \xpred_n
        \\
        \begin{split}
            = \int_{Y_n} p(\xpred_n | \obs) \frac{1}{R_n}\left|\int h(Q)p(Q | \obs, \xpred_n)\dx Q 
            - \int h(Q)p(Q | \xpred_n)\dx Q  \right| \dx \xpred_n
            \\+ \int_{Y_n^C} p(\xpred_n | \obs) \frac{1}{R_n}\left|\int h(Q)p(Q | \obs, \xpred_n)\dx Q 
            - \int h(Q)p(Q | \xpred_n)\dx Q  \right| \dx \xpred_n
        \end{split}
        \\&\leq 1 + \int_{Y_n^C} p(\xpred_n | \obs) M \dx \xpred_n
        \\&\leq 1 + M,
    \end{align*}
    so $\TV\left(p(Q | \obs), \pap_n(Q)\right) = O(R_n)$.
\end{proof}

\theoremgaussianpriordoesntmatterconvergencerate*
\begin{proof}
    When the downstream model is Gaussian mean estimation with 
    known variance, 
    \begin{equation*}
        \qap_n = \caln(\mu_n, \sigma^2_n)
    \end{equation*}
    \begin{equation*}
        \mu_n = \frac{\frac{1}{\sigma_0^2}\mu_0 
        + \frac{n}{\sigma_k^2}\bar{X}}
        {\frac{1}{\sigma_0^2} + \frac{n}{\sigma_k^2}}
    \end{equation*}
    \begin{equation*}
        \frac{1}{\sigma_n^2} = \frac{1}{\sigma_0^2} + \frac{n}{\sigma_k^2}.
    \end{equation*}
    We start with the proof for $\sqrt{n}\TV\big(\qap_n, D_n\big)$.
    By Pinsker's equality and the formula for KL-divergence 
    between Gaussians, 
    \begin{align*}
        \sqrt{n}\TV\big(\qap_n, D_n\big)
        &\leq \sqrt{\frac{1}{2}n\KL{\qap_n}{D_n}}
        \\&=\sqrt{\frac{1}{4}n \left(
        \frac{\sigma_k^2}{n\sigma_n^2}
        + \frac{(\mu_n - \bar{X})^2}{\sigma_n^2}
        -1 + \ln \frac{n\sigma_n^2}{\sigma_k^2}\right)}
        \\&\leq\sqrt{\left|\frac{1}{4}n 
        \left(\frac{\sigma_k^2}{n\sigma_n^2} - 1\right)\right|
        + \frac{1}{4}n \frac{(\mu_n - \bar{X})^2}{\sigma_n^2}
        + \left|\frac{1}{4}n \ln \frac{n\sigma_n^2}{\sigma_k^2}\right|}
        \\&\leq\sqrt{\left|\frac{1}{4}n 
        \left(\frac{\sigma_k^2}{n\sigma_n^2} - 1\right)\right|}
        + \sqrt{\frac{1}{4}n \frac{(\mu_n - \bar{X})^2}{\sigma_n^2}}
        + \sqrt{\left|\frac{1}{4}n \ln \frac{n\sigma_n^2}{\sigma_k^2}\right|}.
    \end{align*}
    The last inequality can be deduced from the fact that the $L_2$-norm is 
    upper bounded by the $L_1$ norm.
    Denote 
    \begin{align*}
        C_1(n) &= n\left(\frac{\sigma_k^2}{n\sigma_n^2} - 1\right)
        = \left(\frac{1}{\sigma_0^2} + \frac{n}{\sigma_k^2}
        \right)\sigma_k^2 - n
        = \frac{\sigma_k^2}{\sigma_0^2} + n - n
        = \frac{\sigma_k^2}{\sigma_0^2}
    \end{align*}
    and
    \begin{align*}
        C_2(n) &= n\ln \frac{n\sigma_n^2}{\sigma_k^2}
        = -n\ln \frac{\sigma_k^2}{n\sigma_n^2}
        = -n\ln \left(\left(\frac{1}{\sigma_0^2} 
        + \frac{n}{\sigma_k^2}\right)\frac{\sigma_k^2}{n}\right)
        \\&= -n\ln \left(\frac{\sigma_k^2}{n\sigma_0^2} + 1\right)
        = -\frac{u\sigma_k^2}{\sigma_0^2}\ln \left(\frac{1}{u} + 1\right)
        = -\frac{\sigma_k^2}{\sigma_0^2}\ln \left(\frac{1}{u} + 1\right)^u,
    \end{align*}
    with $n = \frac{u\sigma_k^2}{\sigma_0^2}$. Because
    \begin{equation*}
        \lim_{u\to \infty}\left(1 + \frac{1}{u}\right)^u = e,
    \end{equation*}
    we have
    \begin{equation*}
        \lim_{u\to \infty}
        -\frac{\sigma_k^2}{\sigma_0^2}\ln \left(\frac{1}{u} + 1\right)^u
        = -\frac{\sigma_k^2}{\sigma_0^2},
    \end{equation*}
    which implies that $C_2(n)$ is bounded.
    
    Futhermore,
    \begin{align*}
        \sqrt{\frac{1}{4}n\frac{(\mu_n - \bar{X})^2}{\sigma_n^2}}
        &= \frac{1}{2}\sqrt{n\left(\frac{1}
        {\sigma_0^2} + \frac{n}{\sigma_k^2}\right)}
        |\mu_n - \bar{X}|.
    \end{align*}
    Denote 
    \begin{equation*}
        s_n = \frac{1}{2}\sqrt{n\left(\frac{1}
        {\sigma_0^2} + \frac{n}{\sigma_k^2}\right)}.
    \end{equation*}
    Note that $s_n = O(n)$. Then 
    \begin{align*}
        \sqrt{\frac{1}{4}n\frac{n(\mu_n - \bar{X})^2}{n\sigma_n^2}}
        = s_n|\mu_n - \bar{X}|,
    \end{align*}
    and
    \begin{align*}
        s_n|\mu_n - \bar{X}|
        &= s_n\left|\frac{\frac{1}{\sigma_0^2}\mu_0}{\frac{1}{\sigma_0^2} + \frac{n}{\sigma_k^2}}
        + \frac{\frac{n}{\sigma_k^2}}{\frac{1}{\sigma_0^2} + \frac{n}{\sigma_k^2}}\bar{X} - \bar{X} \right|
        \\&\leq s_n\frac{\frac{1}{\sigma_0^2}\mu_0}{\frac{1}{\sigma_0^2} + \frac{n}{\sigma_k^2}}
        + s_n\left|\frac{\frac{n}{\sigma_k^2}}{\frac{1}{\sigma_0^2} + \frac{n}{\sigma_k^2}} - 1\right||\bar{X}|
        \\&= s_n\frac{\frac{1}{\sigma_0^2}\mu_0}{\frac{1}{\sigma_0^2} + \frac{n}{\sigma_k^2}}
        + s_n\left|\frac{\frac{n}{\sigma_k^2}}{\frac{1}{\sigma_0^2} + \frac{n}{\sigma_k^2}} - 
        \frac{\frac{1}{\sigma_0^2} + \frac{n}{\sigma_k^2}}{\frac{1}{\sigma_0^2} + \frac{n}{\sigma_k^2}}\right||\bar{X}|
        \\&= s_n\frac{\frac{1}{\sigma_0^2}\mu_0}{\frac{1}{\sigma_0^2} + \frac{n}{\sigma_k^2}}
        + s_n\frac{\frac{1}{\sigma_0^2}}{\frac{1}{\sigma_0^2} + \frac{n}{\sigma_k^2}}|\bar{X}|.
    \end{align*}
    
    Denote 
    \begin{equation*}
        C_3(n) = s_n\frac{\frac{1}{\sigma_0^2}\mu_0}{\frac{1}{\sigma_0^2} + \frac{n}{\sigma_k^2}},
    \end{equation*}
    and 
    \begin{equation*}
        C_4(n) = s_n\frac{\frac{1}{\sigma_0^2}}{\frac{1}{\sigma_0^2} + \frac{n}{\sigma_k^2}}.
    \end{equation*}
    Because $s_n = O(n)$, we have $C_3(n) = O(1)$ and $C_4(n) = O(1)$, so $C_3(n)$ and $C_4(n)$ are 
    bounded.
    
    We now have
    \begin{align*}
        \sqrt{n}\TV\big(\qap_n, D_n\big)
        \leq \sqrt{\frac{1}{4}|C_1(n)|} + \sqrt{\frac{1}{4}|C_2(n)|}
        + C_3(n) + C_4(n)\bar{X}.
    \end{align*}
    By Lemmas~\ref{thm:uniform-integrable-dominated} and \ref{thm:uniform-integrable-sum}, 
    it suffices to show that each of the terms on the right is uniformly integrable.
    The terms containing $C_1, C_2$ and $C_3$ are non-random and bounded in $n$, 
    so they are uniformly integrable. It remains to show that 
    $C_4(n)\bar{X}$ is uniformly integrable.  $C_4(n)$ is bounded,
    so we only need to show that $\bar{X}$ is uniformly integrable.
    
    To bound the expectation in the definition of uniform integrability for 
    $|\bar{X}|$, we need some background facts.
    For geometric series, with $a\in \R$ and $|r| < 1$, 
    \begin{equation*}
        \sum_{i=0}^\infty ar^i = \frac{a}{1 - r},
    \end{equation*}
    and differentiating both sides with regards to $r$ gives 
    \begin{equation*}
        \sum_{i=0}^\infty a(i+1)r^{i} = \frac{a}{(1 - r)^2}.
    \end{equation*}
    For a Gaussian random variable $Y$ with mean $\mu$ and variance $\sigma$,
    $\Pr(Y > \mu + t) \leq e^{-\frac{t^2}{2\sigma^2}}$.
    $\bar{X} \sim \caln\left(\mu, \frac{1}{n}\sigma_k^2\right)$, so this tail bound 
    gives
    \begin{equation*}
        \Pr(\bar{X} > t + \mu) \leq 2e^{-\frac{nt^2}{2\sigma_k^2}}, \quad
        \Pr(\bar{X} < \mu - t) \leq 2e^{-\frac{nt^2}{2\sigma_k^2}}.
    \end{equation*}
    Now
    \begin{align*}
        &\lim_{M\to \infty}\sup_n 
        \E\big(|\bar{X}|\I_{|\bar{X}| > M}\big)
        \\&= \lim_{M\to \infty}\sup_n 
        \E_\mu\big(\E(|\bar{X}|\I_{|\bar{X}| > M} | \mu)\big)
        \\&= \lim_{M\to \infty}\sup_n 
        \E_\mu\left(
        \sum_{i=0}^\infty \E\big(|\bar{X}|
        \I_{M + i < |\bar{X}| \leq M + i + 1} | \mu\big)
        \right)
        \\&\leq \lim_{M\to \infty}\sup_n 
        \E_\mu\left(\sum_{i=0}^\infty \E\big((M + i + 1)
        \I_{|\bar{X}| > M + i} | \mu\big) \right)
        \\&= \lim_{M\to \infty}\sup_n 
        \E_\mu\left(\sum_{i=0}^\infty (M + i + 1)
        \Pr\big(|\bar{X}| > M + i | \mu\big) \right)
        \\&= \lim_{M\to \infty}\sup_n 
        \E_\mu\left(\sum_{i=0}^\infty (M + i + 1)
        \bigg(\Pr\big(\bar{X} > M + i | \mu\big) 
        + \Pr\big(\bar{X} < -M - i | \mu\big)\bigg)
        \right)
        \\&\leq \lim_{M\to \infty}\sup_n 
        \E_\mu\left(\sum_{i=0}^\infty (M + i + 1)
        \bigg(e^{-\frac{n(M + i - \mu)^2}{2\sigma_k^2}}
        + e^{-\frac{n(\mu + M + i)^2}{2\sigma_k^2}}\bigg)
        \right)
        \\&\leq \lim_{M\to \infty} \E_\mu\left(\sum_{i=0}^\infty (M + i + 1)
        \bigg(e^{-\frac{(M + i - \mu)^2}{2\sigma_k^2}}
        + e^{-\frac{(\mu + M + i)^2}{2\sigma_k^2}}\bigg)
        \right),
    \end{align*}
    so
    \begin{equation}
        \begin{split}
            \lim_{M\to \infty}\sup_n 
            \E\big(|\bar{X}|\I_{|\bar{X}| > M}\big)
            &\leq
            \lim_{M\to \infty} \E_\mu\left(\sum_{i=0}^\infty (M + i + 1)
            e^{-\frac{(M + i - \mu)^2}{2\sigma_k^2}}\right)
            \\&+ \lim_{M\to \infty} \E_\mu\left(\sum_{i=0}^\infty (M + i + 1)
            e^{-\frac{(\mu + M + i)^2}{2\sigma_k^2}}
            \right).
        \end{split}
        \label{eq:gaussian-convergence-rate-proof-2}
    \end{equation}
    Looking at the first term on the RHS of \eqref{eq:gaussian-convergence-rate-proof-2},
    when $|M + i - \mu| \geq 1$, $(M + i - \mu)^2 \geq M + i - \mu$.
    It is possible that $|M + i - \mu| < 1$ for exactly two 
    values of $i$ that depend on $\mu$. Let $i_{\mu1}$ and $i_{\mu2}$ 
    be those values. 
    We know that $i_{\mu j} < 1 + \mu - M$ for $j\in \{1, 2\}$.
    Now
    \begin{equation}
        \begin{split}
            \E_\mu\left(\sum_{i=0}^\infty (M + i + 1)
            e^{-\frac{(M + i - \mu)^2}{2\sigma_k^2}}
            \right)
            &= \E_\mu\left(\sum_{i=0, i\neq i_{\mu 1}, i\neq i_{\mu 2}}^\infty (M + i + 1)
            e^{-\frac{(M + i - \mu)^2}{2\sigma_k^2}}
            \right)
            \\&+ \E_\mu\left( (M + i_{\mu 1} + 1)
            e^{-\frac{(M + i_{\mu 1} - \mu)^2}{2\sigma_k^2}}
            \right)
            \\&+ \E_\mu\left( (M + i_{\mu 2} + 1)
            e^{-\frac{(M + i_{\mu 2} - \mu)^2}{2\sigma_k^2}}
            \right).
        \end{split}
        \label{eq:gaussian-convergence-rate-proof-1}
    \end{equation}
    We can upper bound the series using $(M + i - \mu)^2 \geq M + i - \mu$
    and the formulas for geometric series.
    \begin{align*}
        &\E_\mu\left(\sum_{i=0, i\neq i_{\mu 1}, i\neq i_{\mu 2}}^\infty (M + i + 1)
        e^{-\frac{(M + i - \mu)^2}{2\sigma_k^2}}
        \right)
        \\&\leq \E_\mu\left(\sum_{i=0, i\neq i_{\mu 1}, i\neq i_{\mu 2}}^\infty (M + i + 1)
        e^{-\frac{(M + i - \mu)}{2\sigma_k^2}}
        \right)
        \\&\leq \E_\mu\left(\sum_{i=0}^\infty (M + i + 1)
        e^{-\frac{(M + i - \mu)}{2\sigma_k^2}}
        \right)
        \\&= \E_\mu\left(\sum_{i=0}^\infty (M + i + 1)
        e^{-\frac{(M - \mu)}{2\sigma_k^2}}
        \left(e^{-\frac{1}{2\sigma_k^2}}\right)^i
        \right)
        \\&= \E_\mu\left(
        \sum_{i=0}^\infty M
        e^{-\frac{(M - \mu)}{2\sigma_k^2}}
        \left(e^{-\frac{1}{2\sigma_k^2}}\right)^i
        \right)
        + \E_\mu\left(
        \sum_{i=0}^\infty (i + 1)
        e^{-\frac{(M - \mu)}{2\sigma_k^2}}
        \left(e^{-\frac{1}{2\sigma_k^2}}\right)^i
        \right)
        \\&= \E_\mu\left(
        \frac{Me^{-\frac{(M - \mu)}{2\sigma_k^2}}}{1 - e^{-\frac{1}{2\sigma_k^2}}}
        \right)
        + \E_\mu\left(
        \frac{Me^{-\frac{(M - \mu)}{2\sigma_k^2}}}{\left(1 - e^{-\frac{1}{2\sigma_k^2}}\right)^2}
        \right)
        \\&= \left(\frac{M}{1 - e^{-\frac{1}{2\sigma_k^2}}}
        + \frac{M}{\left(1 - e^{-\frac{1}{2\sigma_k^2}}\right)^2}\right)
        \E_\mu\left(
        e^{-\frac{(M - \mu)}{2\sigma_k^2}}
        \right).
    \end{align*}
    For the expectation, we have
    \begin{align*}
        \E_\mu\left(e^{-\frac{M - \mu}{2\sigma_k^2}} \right)
        = e^{-\frac{M}{2\sigma_k^2}} 
        \E_\mu\left(e^{\frac{1}{2\sigma_k^2}\mu} \right).
    \end{align*}
    $\E_\mu\left(e^{\frac{1}{2\sigma_k^2}\mu} \right)$ is finite, as it is 
    an evaluation of the moment generating function of $\mu$, which means that
    \begin{equation*}
        \lim_{M\to \infty} \left(\frac{M}{1 - e^{-\frac{1}{2\sigma_k^2}}}
        + \frac{M}{\left(1 - e^{-\frac{1}{2\sigma_k^2}}\right)^2}\right)
        \E_\mu\left(
        e^{-\frac{(M - \mu)}{2\sigma_k^2}}
        \right)
        = 0.
    \end{equation*}
    For the two other terms on the RHS of \eqref{eq:gaussian-convergence-rate-proof-1}
    \begin{align*}
        \E_\mu\left( (M + i_{\mu j} + 1)
        e^{-\frac{(M + i_{\mu j} - \mu)^2}{2\sigma_k^2}}
        \right)
        &\leq \E_\mu\left( (M + 1 + \mu - M + 1)
        e^{-\frac{(M + i_{\mu j} - \mu)^2}{2\sigma_k^2}}
        \right)
        \\&= \E_\mu\left( (\mu + 2)
        e^{-\frac{(M + i_{\mu j} - \mu)^2}{2\sigma_k^2}}
        \right)
        \\&\to 0
    \end{align*}
    as $M\to \infty$ by the dominated convergence theorem, as 
    \begin{equation*}
        (\mu + 2)e^{-\frac{(M + i_{\mu j} - \mu)^2}{2\sigma_k^2}}
        \leq (\mu + 2)e^{-\frac{0}{2\sigma_k^2}},
    \end{equation*}
    and the right-hand-side has a finite expectation.

    We have now shown that the first limit on the RHS of 
    \eqref{eq:gaussian-convergence-rate-proof-2} is 0. For the other limit, 
    setting $\mu' = -\mu$, and using the 
    reasoning above with $\mu$ replaced by $\mu'$, we have 
    \begin{align*}
        &\lim_{M\to \infty} \E_\mu\left(\sum_{i=0}^\infty (M + i + 1)
        \bigg(e^{-\frac{(\mu + M + i)^2}{2\sigma_k^2}}\bigg)
        \right)
        \\&= \lim_{M\to \infty} \E_{\mu'}\left(\sum_{i=0}^\infty (M + i + 1)
        \bigg(e^{-\frac{(M + i - \mu')^2}{2\sigma_k^2}}\bigg)
        \right)
        \\&= 0.
    \end{align*}
    so the RHS of \eqref{eq:gaussian-convergence-rate-proof-2} is 0.
    
    We have now shown that 
    \begin{equation*}
        \lim_{M\to \infty}\sup_n 
        \E\big(|\bar{X}|\I_{|\bar{X}| > M}\big)
        = 0,
    \end{equation*}
    or, in other words, that $|\bar{X}|$ is uniformly integrable. As shown earlier, 
    this concludes the proof that 
    \begin{equation*}
        \sqrt{n}\TV\big(\qap_n, D_n\big)
    \end{equation*}
    is uniformly integrable when 
    $\xpred_n \sim p(\xpred_n | \xreal)$.

    To show that $\sqrt{n}\TV\big(\qex_n, D_n\big)$ is uniformly integrable, 
    as in the proof of Lemma~\ref{thm:bvm-implies-condition}, we have
    \begin{equation*}
        p(Q | \xreal, \xpred) \propto p(\xpred | Q)p(\xreal | Q)p(Q),
    \end{equation*}
    so we can view both $p(Q | \xreal, \xpred_n)$ and $p(Q | \xpred_n)$ as the posteriors for the 
    same Bayesian inference problem with observed data $\xpred$, and priors
    $p(Q | \xreal) \propto p(\xreal | Q)p(Q)$ and $p(Q)$, respectively. 
    $p(Q | \xreal)$ is Gaussian, so the uniform integrability of 
    \begin{equation*}
        \sqrt{n}\TV\big(\qex_n, D_n\big)
    \end{equation*}
    follows from the previous case with different values for $\mu_0$ and $\sigma_0^2$.
\end{proof}

\section{Sampling the Exact Posterior in the Toy Data Experiment}\label{sec:toy-data-sampling-exact-posterior}
In order to sample the exact 
posterior $p(Q | \sdp)$, we use another decomposition:
\begin{equation}
    p(Q | \sdp) 
    = \int p(Q | \sdp, \xreal)p(\xreal | \sdp)\dx \xreal
    = \int p(Q | \xreal)p(\xreal | \sdp)\dx \xreal,
\end{equation}
where $p(Q | \sdp, \xreal) = p(Q | \xreal)$ due to the independencies of the graphical 
model in Figure~\ref{fig:random_variables}. It remains to sample 
$p(X | \sdp)$. This is not tractable in general, but is possible in 
the toy data setting due to using the full set of 3-way marginals that 
covers all possible values of a datapoint, and the simplicity of the toy 
data. 

We can decompose 
\begin{equation*}
    p(X | \sdp) 
    = \int p(s | \sdp)p(X | s) \dx \theta \dx X
    = \int p(X | s)\int p(s, \theta | \sdp) \dx \theta \dx X,
\end{equation*}
so we can sample $(s, \theta) \sim p(s, \theta | s)$ and then sample 
$\xreal \sim p(\xreal | s)$ to obtain a sample from $p(\xreal | \sdp)$.
Due to the simplicity of the toy data setting, sampling both 
$p(s, \theta | s)$ and $p(\xreal | s)$ is possible.

NAPSU-MQ uses the following Bayesian inference problem:
\begin{align*}
    \theta &\sim \mathrm{Prior} \\
    X &\sim \medt^n \\
    s &= a(X) \\
    \sdp &\sim \caln(s, \sigmadp^2),
\end{align*}
where $a$ are the marginal queries, $\sigmadp^2$ is the noise variance 
of the Gaussian mechanism, and $\medt^n$ is the maximum entropy 
distribution~\citep{raisaNoiseAwareStatisticalInference2023}
with point probability 
\begin{equation*}
    p(x) = \exp(\theta^T a(x) - \theta_0(\theta)),
\end{equation*}
where $\theta_0$ is the log-normalising constant.

In the toy data setting, $a$ is the full set of 3-way marginals for all of the 
3 variables. In other words, $a(x)$ is the one-hot coding of $x$, so 
$s = a(X)$ is a vector of counts of how many times each of the 8 possible
values is repeated in $X$. This means that sampling $p(X | s)$ is simple:
\begin{enumerate}
    \item For each possible value of a datapoint, find the corresponding 
    count from $s$, and repeat that datapoint according the that count.
    \item Shuffle the datapoints to a random order.
\end{enumerate}
As the downstream analysis $p(Q | \xreal)$ doesn't depend on the order of 
the datapoints, the second step is not actually needed.

To sample $p(s, \theta | \sdp)$, we use a Metropolis-within-Gibbs 
sampler~\citep{gilksAdaptiveRejectionMetropolis1995}
that sequentially updates $s$ and $\theta$ while keeping the other fixed.
The proposal for $\theta$ is obtained from Hamiltonian Monte Carlo 
(HMC)~\citep{duaneHybridMonteCarlo1987,nealMCMCUsingHamiltonian2011}.
The proposal for $s$ is obtained by repeatedly choosing a random index in 
$s$ to increment and another to decrement. It is possible to obtain negative 
values in $s$ from this proposal, but those will always be rejected by the 
acceptance test, as the likelihood for them is 0.

To initialise the sampler, we pick an initial value for $\theta$ from a 
Gaussian distribution, and pick the initial $s$ by rounding $\sdp$ 
to integer values, changing the rounded values such that they sum to $n$
while ensuring that all values are non-negative.

The step size for the HMC we used is 0.05, and the number of steps is 20.
In the $s$ proposal, we repeat the combination of an increment and a decrement 
30 times. We take 20000 samples in total from 4 parallel chains, and drop the 
first 20\% as warmup samples.

The method described in this section is similar to the noise-aware Bayesian 
inference method of \citet{juDataAugmentationMCMC2022}. The difference 
between the two is that \citet{juDataAugmentationMCMC2022} use $X$ instead of 
$s$ as the auxiliary variable, and they sample the $X$ proposals from the model,
changing one datapoint at a time. This makes their algorithm more generalisable.

\bibliography{bayes-downstream-analysis-references}

\end{document}